\documentclass[11pt,twoside, letter]{article}
\pagestyle{myheadings}

\usepackage{amsthm, amsmath, amssymb, amsfonts, graphicx, epsfig}
\usepackage{accents}

\usepackage{bbm}
\usepackage{mathrsfs}
\usepackage{dsfont}

\usepackage{epstopdf}

\usepackage{graphicx}
\usepackage[font=sl,labelfont=bf]{caption}
\usepackage{subcaption}

\usepackage{float}
\restylefloat{table}

\usepackage{enumerate}

\usepackage[usenames,dvipsnames]{color}

\usepackage[colorlinks=true, pdfstartview=FitV, linkcolor=blue,
            citecolor=blue, urlcolor=blue]{hyperref}

\usepackage{url}
\makeatletter\def\url@leostyle{%
 \@ifundefined{selectfont}{\def\UrlFont{\sf}}{\def\UrlFont{\scriptsize\ttfamily}}} \makeatother\urlstyle{leo}

\usepackage[margin=1.0in, letterpaper]{geometry}

\newtheorem{theorem}{Theorem}
\newtheorem{proposition}[theorem]{Proposition}
\newtheorem{lemma}[theorem]{Lemma}
\newtheorem{corollary}[theorem]{Corollary}

\theoremstyle{definition}
\newtheorem{definition}[theorem]{Definition}
\newtheorem{example}[theorem]{Example}
\theoremstyle{remark}
\newtheorem{remark}[theorem]{Remark}

\theoremstyle{Assumption}
\newtheorem{assumption}[theorem]{Assumption}

\numberwithin{equation}{section}
\numberwithin{theorem}{section}

\definecolor{Red}{rgb}{0.9,0,0.0}
\definecolor{Blue}{rgb}{0,0.0,1.0}


\def\cB{\mathcal{B}}
\def\cC{\mathcal{C}}
\def\cD{\mathcal{D}}

\def\cG{\mathcal{G}}
\def\cH{\mathcal{H}}

\def\cM{\mathcal{M}}
\def\cN{\mathcal{N}}

\def\cS{\mathcal{S}}

\def\cX{\mathcal{X}}
\def\cY{\mathcal{Y}}


\def\bG{\mathbb{G}}

\def\bP{\mathbb{P}}
\def\bQ{\mathbb{Q}}
\def\bR{\mathbb{R}}


\def\sC{\mathscr{C}}


\renewcommand{\mid}{\;|\;}              

\DeclareMathOperator*{\esssup}{ess\,sup} 
\DeclareMathOperator*{\essinf}{ess\,inf} 





\def\I{\mathds{1}}
\def\wh{\widehat}
\def\wt{\widetilde}
\def\phi{\varphi }

\newcommand{\Vcom}{{V}^{\textrm{com}}}

\newcommand{\Vcomt}{{\widetilde{V}}^{\textrm{com}}}

\def\FVA{{\rm FVA}}
\def\CVA{{\rm CVA}}
\def\DVA{{\rm DVA}}

\def\C-FVA{{\rm C-FVA}}

\def\XVA{\textrm{XVA}}

\def\DVA{\textrm{DVA}}

\def\LVA{\textrm{LVA}}

\newcommand{\undp}{{\underline p}}
\newcommand{\ovep}{{\overline p}}
\newcommand{\undpa}{{\underline p}^a}
\newcommand{\ovepl}{{\overline p}^l}
\def\whtau{\widehat{\tau }}

\usepackage{multirow}
\usepackage{titling}
\setlength{\droptitle}{-5em}

\title{  Arbitrage-Free Pricing Of Derivatives In Nonlinear Market Models }

\makeatletter
\def\and{%
  \end{tabular}%
  \begin{tabular}[t]{c}}%
\def\@fnsymbol#1{\ensuremath{\ifcase#1\or a\or b\or c\or
   d\or e\or f\or g\or h\or i\else\@ctrerr\fi}}
\makeatother

\author{
        Tomasz R. Bielecki\,\thanks{Department of Applied Mathematics, Illinois Institute of Technology
       \newline \hspace*{1.45em}  10 W 32nd Str, Building E1, Room 208, Chicago, IL 60616, USA
       \newline \hspace*{1.45em}  Emails: \url{tbielecki@iit.edu} (T.R. Bielecki) and \url{cialenco@iit.edu} (I. Cialenco)
       \newline \hspace*{1.45em}  URLs: \url{http://math.iit.edu/\~bielecki}  and \url{http://math.iit.edu/\~igor}
        \vspace{0.5em}} ,
\and
        Igor Cialenco\,\footnotemark[1] ,
\and and
Marek Rutkowski\,\thanks{School of Mathematics and Statistics, University of Sydney, Sydney, NSW 2006, Australia
 \newline \hspace*{1.45em} and Faculty of Mathematics and Information Science, Warsaw University of Technology, 00-661 Warszawa, Poland
        \newline \hspace*{1.45em}    Email: \url{marek.rutkowski@sydney.edu.au}, URL: \url{http://sydney.edu.au/science/people/marek.rutkowski.php} }
        }

\date{}
\date{
First Circulated: January 28, 2017
This Version: April 4, 2018 \\[1em]
Forthcoming in \textit{Probability, Uncertainty and Quantitative Risk}
}

\markboth{\small \sc T.R. Bielecki, I. Cialenco and M. Rutkowski}{\small \sc Derivatives Pricing in Nonlinear Models}

\begin{document}

\maketitle


{\footnotesize
\begin{tabular}{l@{} p{350pt}}
  \hline \\[-.2em]
  \textsc{Abstract}: \ &
  The objective of this paper is to provide a comprehensive study no-arbitrage pricing of financial derivatives in the presence of funding costs, the counterparty credit risk and market frictions affecting the trading mechanism, such as collateralization and capital requirements. To achieve our goals, we extend in several respects the nonlinear pricing approach developed in El Karoui and Quenez \cite{EQ1997} and El Karoui et al. \cite{EPQ1997},
which was subsequently continued in Bielecki and Rutkowski \cite{BR2015}. \\[0.5em]
\textsc{Keywords:} \ &  hedging, fair price, funding cost, margin agreement, market friction, BSDE \\
\textsc{MSC2010:} \ & 91G40, 60J28  \\[1em]
  \hline
\end{tabular}
}

\tableofcontents
\newpage

\section{Introduction}   \label{sec1}

The paper contributes to the {\it nonlinear} arbitrage-free pricing theory, which arises in a natural way due to the salient features of real-world trades, such as: trading constraints, differential funding costs, collateralization, counterparty credit risk, and capital requirements. Our aim is to extend in several respects the nonlinear hedging and pricing approach developed in El Karoui and Quenez \cite{EQ1997} and El Karoui et al. \cite{EPQ1997} who used a BSDE approach, by accounting for the complexity of over-the-counter financial derivatives and specific features of the trading environment after the global financial crisis. This work builds also upon the earlier paper by Bielecki and Rutkowski \cite{BR2015} where, however, the
important issue of no-arbitrage was not studied in depth. The paper is structured as follows:
 \vskip - 12 pt \begin{itemize} \vskip - 12 pt  {\parskip = 0 pt
\item In Section \ref{sec2}, we introduce self-financing trading strategies in the presence of differential funding rates and adjustment processes. We consider general contracts with cash flow streams, rather than simple contingent claims with a single payoff either at the contract's maturity or upon early exercise. We also introduce in Section \ref{sec2n1} the concepts of {\it local} and {\it global} valuation problems. This distinction is crucial since it demonstrates that results obtained in Sections \ref{sec3} and \ref{sec3.7} are capable of covering also financial models and valuation problems that cannot be addressed through classical BSDEs, which are nowadays commonly used to deal with nonlinear financial markets.
\item Section \ref{sec3} is devoted to a comprehensive examination of the issue of existence of arbitrage opportunities for the hedger and for the trading desk in a nonlinear trading framework and with respect to a predetermined class of contracts. We introduce the concept of no-arbitrage with respect to the null contract and a stronger notion of no-arbitrage for the trading desk. We then proceed to the issue of unilateral fair valuation of a given contract by the hedger who is endowed with an initial capital. We examine the link between the concept of no-arbitrage for the trading desk and the financial viability of prices computed by the hedger.
\item In Section \ref{sec3.7}, we propose and analyze the concept of a {\it regular} market model, which can be seen as an extension of the notion of a {\it nonlinear pricing system,} which was introduced by El Karoui and Quenez \cite{EQ1997}.
    The goal is to identify a class of nonlinear markets, which are arbitrage-free for the trading desk and, in addition, enjoy the desirable property that if a given contract can be replicated, then the cost of replication is also the fair price for the hedger.
\item Section \ref{sec4} focuses on replication of a contract in a regular market model. We propose four alternative definitions of no-arbitrage prices, namely, the {\it gained value}, the {\it ex-dividend price}, the {\it exit price}, and the {\it offsetting price}. Generally speaking, it is not expected that these prices will coincide, since they correspond to different valuation problems for the hedger. However, when the trading arrangements in the underlying model are such that the valuation problem is \textit{local} then, under some suitable technical conditions, we show that the gained value and the ex-dividend price coincide.
\item In Section \ref{sec6}, we present a BSDEs approach to the valuation and hedging and we give examples of BSDEs for the gained value and the ex-dividend price. Finally, we briefly address in Section \ref{sec7} the issue of {\it valuation
     adjustments} in linear and nonlinear markets and we make some comments on the prevailing market practice of performing separate computations of the so-called `clean price' and the `total valuation adjustment', and subsequently adding them to obtain the full price charged to customers.}
\end{itemize}

Although we focus on the issue of fair unilateral valuation from the perspective of the hedger, it is clear that identical definitions and valuation methods are applicable to his counterparty as well. Hence, in principle, it is possible to use our results to examine the interval of fair bilateral prices in a regular market model. Particular instances of such unilateral and bilateral valuation problems were previously studied in Nie and Rutkowski \cite{NR2015,NR2016a,NR2017} where it was shown that a non-empty interval of either fair bilateral prices or bilaterally profitable prices can be obtained in some nonlinear markets for contracts with either an exogenous or an endogenous collateralization.  It should be acknowledged that there exists a vast body of literature devoted to valuation and hedging of financial derivatives under differential funding costs, collateralization, the counterparty credit risk and other trading adjustments (see, for instance, Bichuch et al. \cite{BCS2017}, Brigo and Pallavicini \cite{BP2014}, Brigo et al. \cite{BB2018,BBR2017}, Burgard and Kjaer \cite{BK2011,BK2013}, Cr\'epey \cite{C2015a,C2015b},  Mercurio \cite{M2013}, Pallavicini et al. \cite{PPB2012b,PPB2012a}, and Piterbarg \cite{P2010}). In view of limited space, we cannot present here these works in detail. Let us only mention that most of these papers deal with {\bf linear} markets with credit risk (possibly also with differential funding rates), whereas the general theory developed in this work aims to address problems where the emphasis is put on a {\bf nonlinear} character of valuation in markets with imperfections. In contrast,  Albanese et al. \cite{ACC2017}, Albanese and Cr\'epey \cite{AC2017}, and Cr\'epey et al. \cite{BES2017} propose to address the issue of valuation adjustments through an alternative approach, which is based on the global valuation paradigm referencing to the balance sheet of the bank, its internal structure, and long-term interests of bank's shareholders. The issue of nonlinearity of trading does not appear in their approach, since the classical hedging arguments are no longer employed to determine the value of a new contract, which is added to the existing portfolio of bank's assets. For further comments on some of the above-mentioned papers, we refer to Section~\ref{sec7}.

\section{Nonlinear Market Model}   \label{sec2}

We start by re-examining and extending the nonlinear trading setup introduced in Bielecki and Rutkowski \cite{BR2015}. Throughout the paper, we fix a finite trading horizon date $T>0$ for our market model. Let $(\Omega, \cG, \bG , \bP)$ be a filtered probability space satisfying the usual conditions of right-continuity and completeness, where the filtration $\bG=(\cG_t)_{t \in [0,T]}$ models the flow of information available to the hedger and his counterparty. For convenience, we assume that the initial $\sigma$-field ${\cal G}_0$ is trivial. All processes introduced in what follows are implicitly assumed to be $\bG$-adapted and, as usual, any semimartingale is assumed to be a c\`adl\`ag process. Let us introduce the notation for the prices of all traded assets in our model.

\vskip 5 pt
\noindent {\bf Risky assets.} We denote by $\mathcal{S}=(S^1,\ldots,S^d)$ the collection of the {\it ex-dividend prices} of a family of $d$ risky assets with the corresponding {\it cumulative dividend streams} $\mathcal{D} =(D^1,\ldots, D^d)$. The process $S^i$ represents the ex-dividend price of any traded security, such as, stock, sovereign or corporate bond, stock option, interest rate swap, currency option or swap, credit default swap, etc.

\vskip 5 pt
\noindent {\bf Funding accounts.}  We denote by $B^{i,l}$ (resp. $B^{i,b}$) the {\it lending} (resp. {\it borrowing}) {\it funding account} associated with the $i$th risky asset, for $i=1,2,\ldots,d$. The financial interpretation of these accounts varies from case to case. For an overview of trading mechanisms for risky assets, we refer to Section \ref{sec2.6}. In the special case when $B^{i,l}=B^{i,b}$, we will use the notation $B^i$ and we call it the {\it funding account} for the $i$th risky asset.

\vskip 5 pt
\noindent {\bf Cash accounts.} The \textit{lending cash account} $B^{0,l}$  and the {\it borrowing cash account} $B^{0,b}$
are used for unsecured lending and borrowing of cash, respectively. For brevity, we will sometimes write $B^{l}$ and $B^{b}$
instead of $B^{0,l}$ and $B^{0,b}$. Also, when the borrowing and lending cash rates are equal, the single cash
account is denoted by $B^{0}$ or, simply, $B$. Note, however, that since an unlimited borrowing/depositing of cash in the bank account is not a realistic feature of a trading model, it is not assumed in what follows.

For brevity, we denote by $\mathcal{B} =(B^{i,l},B^{i,b},\ i=0,1,\ldots,d)$ the collection of all cash and funding accounts.

\subsection{Contracts with Trading Adjustments}   \label{sec2.1}

We will consider financial contracts between two parties, called the \textit{hedger} and the \textit{counterparty}. In what follows, all the cash flows will be viewed from the prospective of the hedger, with the convention that a positive cash flow means that the hedger receives the corresponding amount, and a negative cash flow meaning that the hedger makes a payment. A \textit{bilateral financial contract} (or simply a {\it contract}) is given as a pair $\cC=(A,\cX)$ where the meaning of each term is explained below.

A stochastic processes $A$ represents the {\it cumulative cash flows} from time 0 till the contracts's maturity date, which is denoted as $T$. In the financial interpretation, the process $A$ is assumed to model the cumulative (promised) cash flows of a given contract,  which are either paid out from the hedger's wealth or added to his wealth via the value process of his portfolio of traded assets (including positive or negative holdings of cash, that is, lent or borrowed money). Note that the price of the contract $\cC$ exchanged at its initiation (that is, at time 0) is not included in $A$. For example, if the contract stipulates that the hedger will `receive' the (possibly random and of arbitrary signs) cash flows  $a_1,a_2,\dots,a_m$ at times $t_1,t_2,\dots,t_m \in (0,T]$, then $A$ is given by
\begin{equation*}
A_t=\sum_{l=1}^m \I_{[t_l,\infty )}(t) a_l .
\end{equation*}
Let $(A^t,0)$ denote a {\it basic contract} originated at time $t$ with $\cX =0$. Then the only cash flow exchanged between the counterparties at time $t$ is the price of the contract and thus the remaining cumulative cash flows of $(A^t,0)$ are given as $A^t_u := A_u-A_t$ for $u \in [t,T]$. In particular, the equality $A^t_t=0$ is valid for any basic contract $(A,0)$ and any date $t \in [0,T)$. All future cash flows $a_l$ for $l$ such that $t_l >t$ are predetermined, in the sense that they are explicitly specified by the contract covenants.

As a simple example of cash flows, consider the situation where the hedger sells at time $t$ the European call option on the risky asset $S^i$. Then $m=1,\, t_1=T$, and the terminal payoff from the perspective of the hedger equals $a_1=- (S^i_T-K)^+$. More generally, for every $t \in [0,T)$, the process $A^t$ is given by $A^t_u=- (S^i_T-K)^+ \I_{[T,\infty )}(u)$ for every $u \in [t,T]$.

To account for additional features of a particular contract at hand, we find it convenient to postulate that the cash flows $A$ (resp. $A^t$) of a basic contract are complemented by {\it trading adjustments}, which are represented by the process $\mathcal{X}$  (resp. $\mathcal{X}^t$) given as $\mathcal{X}=(X^{1},\ldots, X^{n}; \alpha^1, \ldots, \alpha^n; \beta^1, \ldots, \beta^n).$ The role of $\mathcal{X}$ is to describe additional clauses of a given contract, such as rehypothecated or segregated collateral, as well as to account for the impact of atypical trading arrangements on the value process of the hedger's portfolio. For each {\it adjustment process} $X^k$, the process $\alpha^k X^k$ represents additional incoming or outgoing cash flows for the hedger, which are either stipulated in the clauses of the contract or imposed by a third party (for instance, the regulator). To each process $X^k, \ k=1,2,\ldots,n$ we also specify the {\it remuneration process} $\beta^{k}$, which is used to determine the net interest payments (if any) associated with the process $X^k$. It should be noted that the processes $X^1, \dots , X^n$ and the associated remuneration processes $\beta^1, \dots , \beta^n$ do not represent traded assets, although they impact the dynamics of the value of a portfolio (see \eqref{eq:gains}). It is rather clear that the processes $\alpha$ and $\beta$ may depend on the respective adjustment process. Therefore, when the adjustment process is $\cY$, rather than $\cX$, one should write  $\alpha(\cY)$ and $\beta(\cY)$ in order to avoid confusion. However, for brevity, we will keep the shorthand notation $\alpha$ and $\beta$ when the adjustment process is denoted as $\cX$.   For further comments on trading adjustment, we refer to Section \ref{sec2.3} and \ref{sec2.4}. Last, but not least, we will need to define a suitable modification of the promised cash flows $A$ resulting from the counterparty credit risk; see Definition  \ref{close}  where the concept of {\it counterparty risky cumulative cash flows} is introduced.

In essence, the unilateral valuation of a given contract is the process of finding at any date $t$ the range of its {\it fair prices} $p_t$, as seen from the viewpoint of either the hedger or the counterparty. Although it will be postulated that the two parties in a contract adopt the same valuation paradigm, due to the asymmetry of cash flows, differential trading costs, and possibly also different trading opportunities, they will typically obtain different ranges for the respective fair unilateral prices of a bilateral contract. The disparity in unilateral valuation executed independently by the two parties is a consequence of the nonlinearity of the wealth dynamics in trading strategies, so that it will typically occur even within the framework of a complete nonlinear market, where the perfect replication of a contract can be achieved by the counterparties. An important issue of determination of the range of fair bilateral prices in a general nonlinear framework is left for a future work; for results on bilateral pricing in specific nonlinear market models, see Nie and Rutkowski \cite{NR2015,NR2016a,NR2017}.

\subsection{Self-financing Trading Strategies}    \label{sec2.2}

The concept of a portfolio refers to the family of {\it primary traded assets}, that is, risky assets, cash accounts, and funding accounts for risky assets.  Formally, by a {\it portfolio} on the time interval $[t,T]$, we mean an arbitrary ${\mathbb R}^{3d+2}$-valued, $\bG$-adapted process $(\phi^t_u)_{u \in [t,T]}$ denoted as
\begin{equation} \label{por1}
\phi^t=\big( \xi^1,\ldots , \xi^{d}; \psi^{0,l},\psi^{0,b},\psi^{1,l}, \psi^{1,b}, \dots ,\psi^{d,l}, \psi^{d,b}  \big),
\end{equation}
where the components represent the positions in risky assets $(S^i,D^i),\, i=1,2, \dots , d$, cash accounts $B^{0,l},\, B^{0,b}$, and funding accounts $B^{i,l} ,B^{i,b} ,\, i=1,2, \dots , d$ for risky assets. It is postulated throughout that $\psi^{j,l}_u \geq 0,\, \psi^{j,b}_u \leq 0$ and $\psi^{j,l}_u \psi^{j,b}_u =0$ for all $j=0,1, \dots ,d$ and $u \in [t,T]$. If the borrowing and lending rates are equal, then we write $\psi^j =\psi^{j,l}+\psi^{j,b}$. It is also assumed throughout that the processes $\xi^1,\ldots , \xi^{d}$ are $\bG$-predictable.

We say that a portfolio $\phi$ is {\it constrained} if at least one of the components of the process $\phi$ is assumed
to satisfy some explicitly stated constraints, which directly affect the choice of $\phi $. For instance, we will need to impose conditions ensuring that the funding of each risky asset is done using the corresponding funding account. Another example of an explicit constraint is obtained when we set $\psi^{0,b}_u =0$ for all $u \in [t,T]$, meaning that an outright borrowing of cash from the account $B^{0,b}$ is prohibited. For examples of markets with various kinds of portfolio constraints, we refer to Carassus et al. \cite{CPT2001}, Fahim and Huang \cite{FH2016}, Karatzas and Kou \cite{KK1996,KK1998}, and Pulido \cite{P2014} and the references therein. The concept of a constrained portfolio should be contrasted with the notion of {\it admissibility} of a trading strategy that may involve some additional conditions imposed on the wealth process and thus indirectly also on the class of admissible processes $\phi $ (see Definition \ref{defadmih}). Note that portfolio constraints are not a matter of choice, since they are due to genuine real-life restrictions imposed on traders. This should be contrasted with the idea of {\it admissibility} of a trading strategy, which is a mathematical artifact needed to preclude unrealistic arbitrage opportunities (like doubling strategies), which may be present within a stochastic model when continuous trading is allowed. Note in this regard that there is no need to be concerned with the admissibility under the realistic assumption that only a finite number of trading times is available to traders.

We are now in a position to state some standard technical assumptions underpinning our further developments.

\begin{assumption} \label{assumpassets}
{\rm We work throughout under the following standing assumptions:
\begin{enumerate}[(i)]\addtolength{\itemsep}{-.5\baselineskip}
\item  for every $i=1,2,\dots , d$, the price $S^i$ of the $i$th risky asset is a semimartingale and the cumulative dividend stream $D^i$ is a process of finite variation with $D^i_{0}=0$;
\item the cash and funding accounts $B^{j,l}$ and $B^{j,b}$  are strictly positive and continuous processes of finite variation
       with $B^{j,l}_0=B^{j,b}_0=1$ for $j=0,1,\ldots,d$;
\item the cumulative cash flow process $A$ of any contract is a process of finite variation;
\item the adjustment processes $X^k, \, k=1,2,\ldots, n$ and the auxiliary processes $\alpha^k,  \, k=1,2,\ldots, n$ are semimartingales;
\item the remuneration processes $\beta^k, \, k=1,2,\ldots, n $  are strictly positive and continuous processes of finite variation
       with $\beta^k_0=1$ for every $k$.
\end{enumerate} }
\end{assumption}

In the next definition, the $\cG_t$-measurable random variable $x_t$ represents the {\it endowment} of the hedger at time $t \in [0,T)$ whereas $p_t$, which at this stage is an arbitrary $\cG_t$-measurable random variable, stands for the price at time $t$ of $\cC^t=(A^t ,\cX^t)$, as seen by the hedger. Recall that $A^t$ denotes the cumulative cash flows of the contract $A$ that occur after time $t$, that is, $A^t_u:=A_u-A_t$ for all $u \in [t,T]$. Hence $A^t$ can be seen as a contract with the same remaining cash flows as the original contract $A$, except that $A^t$ starts and is traded at time $t$. By the same token, we denote by $\cX^t$ the adjustment process related to the contract $A^t$. Let $\sC$ be a predetermined class of contracts. As expected, it is assumed throughout that the null contract $\cN=(0,0)$ is traded in any market model at any time $t$, that is, $\cN \in \sC^t$ for every $t \in [0,T)$ (see Assumption~\ref{ass3.1}).

It should be noted that the prices $p_t$ for contracts belonging to the class $\sC$ are yet unspecified and thus there is a certain degree of freedom in the foregoing definitions. Note also that we use the convention that $\int_t^u:=\int_{(t,u]}$ for any $t \leq u$.

\begin{definition} \label{ts2}
A  quadruplet  $(x_t,p_t,\phi^t,\cC^t)$ is a \textit{self-financing trading strategy on} $[t,T]$ associated with the contract $\cC=(A,\cX)$ if the {\it portfolio value} $V^p (x_t,p_t,\phi^t,\cC^t)$, which is given by
\begin{equation} \label{wea1}
V^p_u (x_t,p_t,\phi^t,\cC^t):= \sum_{i=1}^{d} \xi^i_u S^i_u+\sum_{j=0}^{d} \left( \psi^{j,l}_u B^{j,l}_u+\psi_u^{j,b} B^{j,b}_u \right)
\end{equation}
satisfies for all $u \in [t,T]$
\begin{equation} \label{eq:selfFin1}
V^p_u (x_t,p_t,\phi^t,\cC^t)=x_t +p_t+ G_u(x_t,p_t,\phi^t,\cC^t),
\end{equation}
where the \textit{adjusted gains process} $G(x_t,p_t,\phi^t,\cC^t)$ is given by
\begin{equation}  \label{eq:gains}
\begin{aligned}
G_u (x_t,p_t,\phi^t,\cC^t) :=  &  \sum_{i=1}^{d}  \int_t^u \xi^i_v \, (dS^i_v+dD^i_v )+ \sum_{j=0}^{d}  \int_t^u \left( \psi^{j,l}_v\,dB^{j,l}_v+\psi_v^{j,b}\, dB^{j,b}_v \right) \\
&+\sum_{k=1}^n \alpha^k_u X^k_u-\sum_{k=1}^{n}\int_t^u X^k_v (\beta^{k}_v)^{-1}\,d\beta^{k}_v+A_u^t.
\end{aligned}
\end{equation}
For a given pair $(x_t,p_t)$, we denote by $\Phi^{t,x_t}(p_t,\cC^t)$ the set of all self-financing trading strategies on $[t,T]$ associated with the contract $\cC$.
\end{definition}

When studying the valuation of a contract $\cC^t$ for a fixed $t$, we will typically assume that the hedger's endowment $x_t$ is given and we will search for the range of hedger's fair prices $p_t$ for $\cC^t$. Therefore, when dealing with the hedger with a fixed initial endowment $x_t$ at time $t$, we will consider the following set of self-financing trading strategies $\Phi^{t,x_t}(\sC )=\cup_{\cC  \in \sC } \cup_{p_t \in \cG_t}\Phi^{t,x_t} (p_t,\cC^t)$. Note, however, that the definition of the market model does not assume that the quantity $x_t$ is predetermined.

\begin{definition}\label{def:UnderMarket}
The \textit{market model} is the quintuplet $\cM=(\cS,\cD, \cB, \sC ,\Phi(\sC))$ where $ \Phi(\sC)$ stands for the set of
all self-financing trading strategies associated with the class $\sC $ of contracts, that is, $\Phi (\sC )= \cup_{t \in [0,T)} \cup_{x_t \in \cG_t} \Phi^{t,x_t}( \sC ).$
\end{definition}

In principle, the market model defined above exhibits nonlinear features, in the sense that either the portfolio value process $V^p(x_t,p_t,\phi^t,\cC^t)$ is not linear in $(x_t, p_t,\phi^t,\cC^t)$ or the class of all self-financing strategies is not a vector space (or, typically, both). Therefore, we refer to this setup as to a generic {\it nonlinear market model}. In contrast, by a {\it linear market model} we will understand in this paper the version of the model defined above in which all trading adjustments are null (i.e., $X^k=0$ for all $k=1,2, \dots , n$), there are no differential funding rates (i.e., $B^{j,b}=B^{j,l}$ for all $j=0,1, \dots, d$) and no portfolio constraints are imposed. In particular, in the linear market model the class of all self-financing trading strategies is a vector space and the value process $V^p(x_t,p_t,\phi^t,\cC^t)$ is a linear mapping in $(x_t,p_t,\phi^t,\cC^t)$. Note, however, that the last property is usually lost when an admissibility condition is imposed on the class of trading strategies since, typically, a trading strategy is deemed to be {\it admissible} if it its discounted wealth is bounded from below or nonnegative (hence the class of admissible trading strategies is no longer a vector space).

To alleviate notation, we will frequently write $(x,p,\phi,\cC)$ instead of $(x_0,p_0,\phi^0,\cC^0)$ when working on the interval $[0,T]$.
Note that \eqref{wea1}--\eqref{eq:gains} yield the following equalities for any trading strategy $(x, p ,\phi,\cC) \in \Phi^{0,x}(\sC)$
\begin{equation} \label{xsxs}
V^p_0(x,p,\phi,\cC)=\sum_{i=1}^d \xi_0^i S_0^i+\sum_{j=0}^d \left(\psi_0^{j,l}B_0^{j,l}+\psi_0^{j,b}B_0^{j,b}\right)
=x+p+\sum_{k=1}^{n}\alpha^k_0 X_0^k.
\end{equation}
Recall that in the classical case of a frictionless market, it is common to assume that the initial endowments of traders are null. Moreover, the price of a derivative has no impact on the dynamics of the gains process. In contrast, when portfolio's value is driven by nonlinear dynamics, the initial endowment $x$ at time $0$, the initial price $p$ and the adjustment cash flows of a contract may all affect the dynamics of the gains process and thus the classical approach is no longer valid.

\subsection{Funding Adjustment}           \label{sec2.3}

The concept of the funding adjustment refers to the spreads of funding rates with regard to some benchmark cash rate.
In the present setup, it can be defined relative to either $B^{l}$ or $B^{b}$. If the lending and borrowing rates are not equal,
then \eqref{eq:selfFin1} can be written as follows
\begin{align*}  
V_t^p(x,&p,\phi,\cC)= x +p+\sum_{i=1}^{d}  \int_0^t \xi^i_u \, (dS^i_u+dD^i_u )+\sum_{k=1}^n \alpha^k_t X^k_t+ A_t  \nonumber \\
&+\sum_{j=0}^d \int_0^t  \left( \psi_u^{j,l}\, dB_u^{0,l}+\psi_u^{j,b}\, dB_u^{0,b} \right) \nonumber \\
&-\sum_{k=1}^n \int_0^t \bigg( (X_u^{k})^{+}(B_u^{0,l})^{-1}\,dB_u^{0,l}-(X_u^{k})^{-}(B_u^{0,b})^{-1}\, dB_u^{0,b} \bigg) \nonumber \\
&+\sum_{i=1}^d\int_0^t \left( \psi_u^{i,l}\big((\widehat B^{i,l}_u-1)\, dB_u^{0,l}+B^{0,l}_u\,d\widehat B_u^{i,l} \big)
+\psi_u^{i,b}\big((\widehat B^{i,b}_u-1)\,dB_u^{0,b}+B^{0,b}_u\,d\widehat B_u^{i,b} \big) \right)    \\
&- \sum_{k=1}^{n}  \int_0^t \left( (X_u^{k})^{+} (\widehat\beta_u^{k,l})^{-1}\, d\widehat \beta_u^{k,l}-(X_u^{k})^{-} (\widehat\beta_u^{k,b})^{-1}\, d\widehat \beta_u^{k,b} \right) \nonumber
\end{align*}
where $\widehat B^{j,l/b} := B^{j,l/b} (B^{0,l/b})^{-1}$ and $\widehat \beta^{k,l/b} := \beta^k(B^{0,l/b})^{-1} $.
The quantity
\begin{align*}
\gamma_t&:= \sum_{i=1}^d\int_0^t \left( \psi_u^{i,l}\big((\widehat B^{i,l}_u-1)\, dB_u^{0,l}+B^{0,l}_u\,d\widehat B_u^{i,l} \big)
+\psi_u^{i,b}\big((\widehat B^{i,b}_u-1)\, dB_u^{0,b}+B^{0,b}_u\,d\widehat B_u^{i,b} \big) \right)   \\
&- \sum_{k=1}^{n}  \int_0^t \left( (X_u^{k})^{+} (\widehat\beta_u^{k,l})^{-1}\, d\widehat \beta_u^{k,l}-(X_u^{k})^{-} (\widehat\beta_u^{k,b})^{-1}\, d\widehat \beta_u^{k,b} \right)
\end{align*}
is called the \textit{funding adjustment}. If the borrowing and lending rates are equal, then the expression for the funding adjustment simplifies to
\[
\gamma_t = \sum_{i=1}^{d}  \int_0^t\psi^i_u \big((\wh B^i_u-1)\, dB^0_u+ B_u^0\, d\wh B^i_u \big) -
\sum_{k=1}^{n}  \int_0^t X_u^k(\wh \beta^{k}_u)^{-1}d\wh \beta^{k}_u .
\]
When the cash account $B^0$ is used for funding and remuneration for adjustment processes, that is,
when $B^i=B^0$ for $i=1,2,\ldots,d$ and $\beta^k=B^0$ for $k=1,2,\ldots,n$, then the funding adjustment vanishes,
as was expected.

\subsection{Financial Interpretation of Trading Adjustments}     \label{sec2.4}

In this study, we will devote significant attention to terms appearing in the dynamics of $V^p(x,\varphi,A,\cX)$,
which correspond to the trading adjustment process $\cX$.

\begin{definition}
The stochastic process $\varpi=\sum_{k=1}^{n} \varpi^k $, where for $k=1,2, \ldots, n$,
\begin{equation}\label{eq:cashAdjustment}
 \varpi^k_t := \alpha^k_t X^k_t-\int_0^t X_u^k (\beta^{k}_u)^{-1}\,d\beta^{k}_u
\end{equation}
is called the \textit{cash adjustment}.
\end{definition}

In general, the financial interpretation of the cash adjustment term $\varpi^k$ is as follows: the term $\alpha^k_tX^k_t$ represents the part of the $k$th adjustment that the hedger can either use for his trading purposes when $\alpha^k_t X^k_t >0$
or has to put aside (for instance, pledge to his counterparty as a collateral or hold in a separate account as a regulatory capital)
when $\alpha^k_t X^k_t <0$.  Formally, the quantity $-X_t^k (\beta^{k}_t)^{-1}$ can be seen as the number of ``shares'' of the remuneration process $\beta^k$ that the hedger should hold at time $t$ in order to cover interest payments associated with the adjustment process $X^k$. Hence the integral $\int_0^t X_u^k (\beta^{k}_u)^{-1}\, d\beta^{k}_u$ represents the cumulative interest either paid or received by the hedger due to the presence of the $k$th trading adjustment.

Let us illustrate a few alternative interpretations of cash adjustments given by \eqref{eq:cashAdjustment}.
We hereafter write $\wh{X}^k=(\beta^k)^{-1} X^k $.
\begin{itemize}
\item Let us first assume that $\alpha^k_t=1$, for all $t$.  The term $X^k_t-\int_0^t \wh{X}^k_u\, d\beta^k_u$ indicates that the cash adjustment $\varpi^k$ is affected by both the current value $X^k_t$ of the adjustment process and by the cost of funding of this adjustment given by the integral $\int_0^t \wh{X}^k_u\, d\beta^k_u$. Such a situation occurs, for example, when $X^k$ represents the capital charge or the rehypotecated collateral. The integration by parts formula gives
   \begin{equation}\label{eq:T1}
      \varpi^k_t = X^k_t-\int_0^t \wh{X}^k_u\, d\beta^k _u=X^k_0+\int_0^t\, \beta^k_u\, d\wh{X}^k_u ,
   \end{equation}
     where the integral $\int_0^t \beta^k_u\,d\wh{X}^k_u$ has the following financial interpretation: $\wh{X}^k_u$ is the number of units of the funding account $\beta^k_u$ that are needed to fund the amount $X^k_u$ of the adjustment process. Hence $d\wh{X}^k_u$ is the infinitesimal change of this number and $\beta^k_u\,d\wh{X}^k_u$ is the cost of this change, which has to be absorbed by the change in the value of the trading strategy. Observe that the term $\beta^k_u\,d\wh{X}^k_u$ may be negative, meaning that a cash relieve situation is taking place.
\item In the special case when $\alpha^k_t=1$ and $\beta^k_t=1$ for all $t$, we obtain $\varpi^k_t=X^k_t$ for all $t$. We deal here with the cash adjustment $X^k$ on which there is no remuneration since manifestly $\int_0^t \wh{X}^k_u\,d\beta^k_u=0.$ This situation may arise, for example, if the bank does not use any external funding for financing this adjustment, but relies on its own cash reserves, which are assumed to be kept idle and neither yield interest nor require interest payouts.
\item Let us now assume that $\alpha^k_t=0$ for all $t$. Then the term $ \varpi^k_t=-  \int_0^t \wh{X}^k_u\, d\beta^k_u$ indicates that the cash value of the adjustment $X^k$ does not contribute to the portfolio value. Only the remuneration of the adjustment process $X^k$, which is given by the integral $\int_0^t \wh{X}^k_u\, d\beta^k_u$, contributes to the portfolio's value. This happens, for example, when the adjustment process represents the collateral posted by the counterparty and kept in the segregated account.
\end{itemize}

The above considerations lead to the following lemma, which gives a convenient representation for the cash adjustment process when
$\alpha^k$ is equal to either 1 or 0. In most practical situations, a general case can also be dealt with using Lemma \ref{lmnb} and a suitable redefinition of adjustment processes.

\begin{lemma} \label{lmnb}
Let the non-negative integers $n_1,n_2,n_3$ be such that $n_1+n_2+n_3=n$. If $\alpha^k=1$ for $k=1,2,\dots , n_1+n_2$,
 $\beta^k=1$ for $k=n_1, n_1+1, \dots , n_1+n_2$ and  $\alpha^k=0$ for $k=n_1+n_2, n_1+n_2+1, \dots , n_1+n_2+n_3$, then
 the cash adjustment process $\varpi$ satisfies, for all $t \in [0,T]$,
\begin{equation}\label{cashadj}
\varpi_t= \sum_{k=1}^{n_1}X_0^k+ \sum_{k=1}^{n_1}\int_0^t \beta^k_u\, d\wh{X}^k_u
+\sum_{k=n_1+1}^{n_1+n_2} X^k_t-\sum_{k=n_1+n_2+1}^{n_1+n_2+n_3}\int_0^t \wh{X}^k_u\,d\beta^k_u .
\end{equation}
\end{lemma}

\subsection{Wealth Process}     \label{sec2.5}

Let $(x, p,\phi,\cC)$ be an arbitrary self-financing trading strategy. Then the following natural question arises: what is the wealth of a hedger at time $t$, say $V_t(x, p,\phi,\cC)$? It is clear that if the hedger's initial endowment equals $x$, then his initial wealth equals $x+p$ when he sells a contract $\cC $ at the price $p$ at time 0. By contrast, the initial value of the hedger's portfolio, that is, the total amount of cash he invests at time 0 in his portfolio of traded assets,
is given by \eqref{xsxs} meaning that the trading adjustments at time 0 need to be accounted for in the initial portfolio's value. However, according to the financial interpretation of trading adjustments, they have no bearing on the hedger's initial wealth and thus the relationship between the hedger's initial wealth and the initial portfolio's value reads
 \[
 V_0 (x,p,\phi,\cC)=V^p_0 (x,p,\phi,\cC)-\sum_{k=1}^n \alpha^k_0 X^k_0 .
 \]
Analogous arguments can be used at any time $t\in [0,T]$, since the hedger's wealth at time $t$ should represent the value of his portfolio of traded assets net of the value of all trading adjustments (see \eqref{wealth2}). Furthermore, one needs to focus on the actual ownership (as opposed to the legal ownership) of each of the adjustment processes $X^1, \dots , X^n$, of course, provided that they do not vanish at time~$t$. Although this general rule is cumbersome to formalize, it will not present any difficulties when applied to a particular contract at hand.

For instance, in the case of the rehypothecated cash collateral (see Section \ref{sec2.7.1}), the hedger's wealth at time $t$ should be computed by subtracting the collateral amount $C_t$ from the portfolio's value. This is consistent with the actual ownership of the cash amount delivered by either the hedger or the counterparty at time $t$. For example, if $C^+_t>0$ then the {\it legal owner} of the amount $C^+_t$ at time $t$ could be either the holder or the counterparty (depending on the legal covenants of the collateral agreement) but the hedger, as a collateral taker, is allowed to use the collateral amount for his trading purposes. If there is no default before $T$, the collateral taker returns the collateral amount to the collateral provider. Hence the amount $C^+_t$ should be accounted for when dealing with the hedger's portfolio,  but should be excluded from his wealth. In general, we have the following definition of the wealth process.

\begin{definition} \label{defwealth}
The {\it wealth process} of a self-financing trading strategy $(x_t, p_t ,\phi^t,\cC^t)$ is
defined, for every $u \in [t,T]$, by
\begin{equation} \label{wealth2x}
V_u(x_t,p_t,\phi^t,\cC^t):= V^p_u(x_t,p_t,\phi^t,\cC^t)-\sum_{k=1}^n \alpha^k_u X^k_u
\end{equation}
or, more explicitly,
\begin{equation} \label{wealth2}
V_u(x_t,p_t,\phi^t,\cC^t)=\sum_{i=1}^{d}\xi^i_u S^i_u+\sum_{j=0}^{d}\left( \psi^{j,l}_u B^{j,l}_u+\psi_u^{j,b} B^{j,b}_u\right)-\sum_{k=1}^n \alpha^k_u X^k_u .
\end{equation}
\end{definition}

Let us observe that there is a lot of flexibility in the choice of the adjustment processes $X^k$ and corresponding processes $\alpha^k$. However, we will always assume that these processes are specified such that the above arguments of interpreting the actual ownership of the capital and thus also of the wealth process $V(x,p,\phi, A,\mathcal{X})$ hold true.

As an immediate consequence of Definitions~\ref{ts2} and \ref{defwealth}, it follows that the wealth process $V$
of any self-financing trading strategy $(x_t, p_t,\phi^t,\cC^t)$ admits the dynamics, for  $u \in [t,T]$,
\begin{equation}\label{wealth3-1}
\begin{aligned}
V_u(x_t,p_t,\phi^t,\cC^t)
= x_t+p_t   &+\sum_{i=1}^{d}  \int_t^u \xi^i_v\,d(S^i_v+D^i_v )+ \sum_{j=0}^{d}  \int_t^u \left( \psi^{j,l}_v\,dB^{j,l}_v
+\psi_v^{j,b}\, dB^{j,b}_v \right) \\  &-\sum_{k=1}^{n}  \int_t^u X_v^k (\beta^{k}_v)^{-1}\,d\beta^{k}_v+ A^t_u .
\end{aligned}
\end{equation}

One could argue that  it would be possible to take equations \eqref{wealth2} and \eqref{wealth3-1} as the definition of a self-financing trading strategy and subsequently deduce that equality \eqref{eq:selfFin1} holds for the portfolio's value $V^p(x,p,\phi,\cC)$, which is then given by \eqref{wealth2x}. We contend this alternative approach would not be optimal, since conditions in Definition \ref{ts2} are obtained through a straightforward analysis of the trading mechanism and physical cash flows, whereas the financial justification of equations \eqref{wealth2}--\eqref{wealth3-1} is less appealing.

Clearly, the wealth processes of a self-financing trading strategy is characterized in terms of two equations \eqref{wealth2} and \eqref{wealth3-1}. Observe that, using \eqref{wealth2}, it is possible to eliminate one of the processes $\psi^{j,l}$ or $\psi^{j,b}$ from \eqref{wealth3-1} and thus to characterize  the wealth process in terms of a single equation. One obtains in that way a (typically nonlinear) BSDE, which can be used to formulate various valuation problems for a given contract.

\subsection{Trading in Risky Assets}           \label{sec2.6}

Note that we do not postulate that the processes $S^i,\, i=1,2, \dots, d$ are positive, unless it is explicitly stated that the process $S^i$ models the price of a stock. Hence by the {\it long cash position} (resp.  {\it short cash position}), we mean the situation when $\xi^i_t S^i_t \leq 0$ (resp.   $\xi^i_t S^i_t \geq 0$), where $\xi^i_t$ is the number of hedger's positions in the risky asset $S^i$ at time $t$.

\subsubsection{Cash Market Trading} \label{sec2.6.1}

For simplicity of presentation, we assume that $S^i_t \geq 0$ for all $t \in [0,T]$.
Assume first that the purchase of $\xi^i_t > 0$ shares of the $i$th risky asset is funded using cash.
Then, we set $\psi^{i,b}_t=0$ for all $t \in [0,T]$ and thus the process $B^{i,b} $ becomes irrelevant. Let us now consider the case when $\xi^i_t <0$. If we assume that the proceeds from short selling of the risky asset $S^i$ can be used by the hedger (this is typically not true in practice), we also set $\psi^{i,l}_t=0$ for all $t \in [0,T]$, and thus the process $B^{i,l}$ becomes irrelevant as well. Hence, under these \textit{stylized}  cash trading
conventions, there is no need to introduce the funding accounts $B^{i,l}$ and $B^{i,b}$ for the $i$th risky asset.
Since dividends $D^i$ are passed over to the lender of the asset, they do
not appear in the term representing the gains/losses from the short position in the risky asset.
In the simplest case of no market frictions and trading adjustments,
and with the single risky asset $S^1$, under the present short selling convention,  \eqref{eq:selfFin1} becomes
$$
V^p_t (x,p_0,\phi,\cC)= x+p+ \int_0^t \xi^1_u \, (dS^1_u+dD^1_u )
+  \int_0^t \left( \psi^{0,l}_u\,dB^{0,l}_u+\psi_u^{0,b}\, dB^{0,b}_u \right)+ A_t .
$$
More practical short selling conventions for risky assets are discussed in the foregoing subsections.

\subsubsection{Short Selling of Risky Assets}   \label{sec2.6.2}

Let us now consider the classical way of short selling of a risky asset borrowed from
a broker. In that case, the hedger does not receive the proceeds from the sale of the borrowed shares of
a risky asset, which are held instead by the broker as the cash collateral. The hedger may also be requested to
post additional cash collateral to the broker and, in some cases, he may be paid interest on his margin account with the broker.\footnote{The interested reader may consult the web pages \url{http://www.investopedia.com/terms/s/shortsale.asp} and
\url{https://www.sec.gov/investor/pubs/regsho.htm} for more details on the mechanics of short-sales.}
To represent these trading arrangements for the $i$th risky asset, we set $\psi^{i,l}_t=0$,  $\alpha^i_t=\alpha_t^{i+d}=0$ and
$$
X^{i}_t=-(1+\delta^i_t)(\xi^i_t)^- S^i_t, \quad X^{i+d}=\delta^i_t (\xi^i_t)^- S^i_t
$$
where $\beta^{i}_t$ specifies the interest (if any) on the hedger's margin account with the broker,
$\delta^i_t \geq 0$ represents an additional cash collateral, and $\beta^{i+d}$ specifies the interest rate
paid by the hedger for financing the additional collateral. Let us assume, for instance, that there is only one risky
asset, $S^1$, which is either sold short or purchased using cash as in Section~\ref{sec2.6.1}. Then we obtain the
following expression for the portfolio value
\begin{equation} \label{xwea1}
V^p_t(x,p,\phi,\cC)=(\xi^1_t)^+ S^1_t+\psi^{0,l}_t B^{0,l}_t+\psi_t^{0,b} B^{0,b}_t
\end{equation}
whereas equation \eqref{eq:selfFin1} becomes
\begin{equation}\label{weass2}
\begin{aligned}
V^p_t (x,p,\phi,\cC)= x &+ p+ \int_0^t \xi^1_u \, (dS^1_u +dD^1_u )
+\int_0^t \left( \psi^{0,l}_u\,dB^{0,l}_u+\psi_u^{0,b}\, dB^{0,b}_u \right)+ A_t \\
&+\int_0^t (\beta^1_u)^{-1}(1+  \delta^{1}_u)(\xi^1_u)^- S^1_u\,d\beta^1_u
-\int_0^t (\beta^2_u)^{-1}\delta^{1}_u(\xi^1_u)^- S^1_u\,d\beta^2_u.
\end{aligned}
\end{equation}
In particular, if there is no specific interest rate for remuneration of an additional collateral, then
we set $X^2=0$ and thus the last term in  \eqref{weass2} should be omitted.
It is worth noting that \eqref{xwea1} can be seen as a special case of the following extended version of \eqref{wea1}
\begin{equation*}
V^p_u(x_t,p_t,\phi^t,\cC^t):=\sum_{i=1}^{d} h_i(\xi^i_u )S^i_u+\sum_{j=0}^{d}\left( \psi^{j,l}_u B^{j,l}_u+\psi_u^{j,b} B^{j,b}_u\right)
\end{equation*}
with $d=1$ and $h_1(x)=x^+$ for $x \in \bR$.

\subsubsection{Repo Market Trading}    \label{sec2.6.3}

Let us first consider the \textit{cash-driven repo transaction}, the situation when shares of the $i$th risky asset owned by the hedger are used as collateral to raise cash.\footnote{We refer to \url{https://www.newyorkfed.org/medialibrary/media/research/staff_reports/sr529.pdf}
for a detailed description of mechanics of repo trading.} To represent this transaction, we set
\begin{equation} \label{repo1a}
\psi^{i,b}_t= -(B^{i,b}_t)^{-1}(1-h^{i,b})(\xi^i_t)^+ S^i_t,
\end{equation}
where $B^{i,b}$ specifies the interest paid to the lender by the hedger who borrows cash and pledges the risky asset $S^i$ as collateral, and the constant $h^{i,b}$ represents the {\it haircut} for the $i$th asset pledged.

A synthetic short-selling of the risky asset $S^i$ using the repo market is obtained through the \textit{security-driven
repo transaction,} that is, when shares of the risky asset are posted as collateral by the borrower of cash
and they are immediately sold by the hedger who lends the cash. Formally, this situation corresponds to the equality
\begin{equation}  \label{repo1b}
\psi^{i,l}_t= (B^{i,l}_t)^{-1}(1-h^{i,l})(\xi^i_t)^- S^i_t,
\end{equation}
where $B^{i,l}$ specifies the interest amount paid to the hedger by the borrower of the cash amount $(1-h^{i,l})(\xi^i_t)^- S^i_t$ and $h^{i,l}$ is the corresponding haircut.

If only one risky asset is traded and transactions are exclusively in repo market, then we obtain
 \begin{equation}\label{weass2a}
\begin{aligned}
&V^p_t (x,p,\phi,\cC)= x+p+ \int_0^t \xi^1_u \, (dS^1_u+dD^1_u )+\int_0^t \left( \psi^{0,l}_u\,dB^{0,l}_u+\psi_u^{0,b}\, dB^{0,b}_u \right) \\
&+\int_0^t \left( (B^{1,l}_u)^{-1} (1-h^{1,l})(\xi^1_u)^- S^1_u\,dB^{1,l}_u-(B^{1,b}_u)^{-1}(1- h^{1,b})(\xi^1_u)^+ S^1_u\,dB^{1,b}_u \right)+A_t .
\end{aligned}
\end{equation}

In practice, it is reasonable to assume that the long and short repo rates for a given risky asset are identical,
that is, $B^i=B^{i,l}=B^{i,b}$. In that case, we may and do set $\psi^i_t =(1-h^i)(B^i_t)^{-1} \xi^i_t S^i_t $, so that equations \eqref{repo1a} and  \eqref{repo1b} reduce to just one equation
\begin{equation} \label{repo3}
(1-h^i) \xi^i_t S^i_t+\psi^i_t B^i_t=0.
\end{equation}
According to this interpretation of $B^i$, equality \eqref{repo3} means that trading in the $i$th risky asset is done using the (symmetric) repo market and $\xi^i_t$ shares of a risky asset are pledged as collateral at time $t$, meaning that the collateralization rate equals 1.  Under \eqref{repo3}, equation \eqref{weass2a} reduces to
\begin{equation}\label{weass2b}
\begin{aligned}
V^p_t(x,p,\phi,\cC)=x&+p+\int_0^t\xi^1_u\,(dS^1_u+dD^1_u)+\int_0^t \left(\psi^{0,l}_u\,dB^{0,l}_u+\psi_u^{0,b}\,dB^{0,b}_u\right)
\\&-\int_0^t(B^1_u)^{-1}(1-h^1)\xi^1_u S^1_u\,dB^1_u+A_t.
\end{aligned}
\end{equation}

\subsection{Collateralization}      \label{sec2.7}

We consider the situation when the hedger and the counterparty enter a contract and either receive or pledge \textit{collateral} with value denoted by $C$, which is assumed to be a semimartingale. Generally speaking, the process $C$ represents the value of the \textit{margin account}. We let
\begin{equation} \label{collss}
C_t=X^1_t+X^2_t
\end{equation}
where $X^1_t:=C_t^+=C_t \I_{\{ C_t \geq 0\}}$, and $X^2_t := -C_t^-= C_t \I_{\{ C_t < 0\}}$.
By convention, the amount $C^+_t$ is the cash value of collateral received at time $t$ by the hedger from the counterparty, whereas $C^-_t$ represents the cash value of the collateral pledged by him and thus transferred to his counterparty. For simplicity of presentation and consistently with the prevailing market practice, it is postulated throughout that only cash collateral may be delivered or received (for other collateral conventions, see, e.g., Bielecki and Rutkowski \cite{BR2015}).
According to ISDA Margin Survey 2014, about 75\% of non-cleared OTC collateral agreements are settled in cash and about 15\% in government securities. We also make the following natural assumption regarding the value of the margin account at the contract's maturity date.

\begin{assumption} \label{assumargin}
{\rm The $\bG$-adapted collateral amount process $C$ satisfies $C_T=0$.}
\end{assumption}

Typically this means that the collateral process $C$ will have a jump at time $T$ from $C_{T-}$ to 0.
The postulated equality $C_T=0$ is simply a convenient way of ensuring that any collateral amount posted is returned in full to the pledger when the  contract matures, provided that default events have not occurred prior to or at maturity date $T$.
As soon as the default events are also modeled, we will need to specify closeout payoffs (see Section~\ref{sec2.8.1}).

Let us first make some comments from the hedger's perspective regarding the crucial features of the margin account. The financial practice may require to hold the collateral amounts in {\it segregated} margin accounts, so that the hedger, when he is a collateral taker, cannot make use of the collateral amount for trading. Another collateral convention mostly encountered  in practice is {\it rehypothecation}  (around 90\% of cash collateral of OTC contracts are rehypothecated), which refers to the situation where the hedger may use the collateral pledged by his counterparties as collateral for his contracts
with other counterparties. Obviously, if the hedger is a collateral provider, then a particular convention regarding segregation or rehypothecation is immaterial for the dynamics of the value process of his portfolio. We refer the reader to Bielecki and Rutkowski \cite{BR2015} and Cr\'epey et al. \cite{CBB2014} for a detailed analysis of various conventions on collateral agreements. Here we will examine some basic aspects of \textit{collateralization} (sometimes also called \textit{margining}) in our context.

In general, the cash adjustments due to collateralization are
\begin{equation}\label{eq:collateral1}
\varpi^C_t :=\alpha_t^1C^+_t-\alpha_t^2C^-_t-\int_0^t(\beta^1_u)^{-1}C^+_u\,d\beta_u^1+\int_0^t(\beta^2_u)^{-1}C^-_u\,d\beta_u^2 ,
\end{equation}
where the remuneration processes $\beta^1$ and $\beta^2$ determine the interest rates paid or received by the hedger  on  collateral amounts $C^+$ and $C^-$, respectively. The auxiliary processes $\alpha^1$ and $\alpha^2$ introduced
in \eqref{eq:collateral1} are used to cover alternative conventions regarding rehypothecation and segregation of margin accounts.
Note that we always set $\alpha^2_t=1$  for all $t \in [0,T]$ when considering the portfolio of the hedger, since a particular convention regarding rehypothecation or segregation is manifestly irrelevant for the pledger of collateral.

\subsubsection{Rehypothecated Collateral}       \label{sec2.7.1}

As it is customary in the existing literature, we assume that rehypothecation of cash collateral means that it can be used by the hedger for his trading purposes without any restrictions. To cover this stylized version of a {\it rehypothecated collateral} for the hedger, it suffices to set $\alpha^1_t=\alpha^2_t =1$ for all $t \in [0,T]$, so that for the hedger we obtain $\alpha_t^1X^1_t+\alpha_t^2X^2_t=C_t$. Consequently, the cash adjustment corresponding to the margin account equals
\begin{equation}\label{cashadjrehyp}
\varpi_t=\varpi^1_t+\varpi^2_t=\sum_{k=1}^{2}\Big( X^k_0+\int_0^t \beta^{k}_u\, d\wh{X}^k_u\Big).
\end{equation}

\subsubsection{Segregated Collateral}             \label{sec2.7.2}

Under segregation, the collateral received by the hedger is kept by the third party, so that it cannot be used by the hedger for his trading activities. In that case, we set $\alpha^1_t=0$ and $\alpha^2_t =1$ for all $t \in [0,T]$ and thus  $\alpha_t^1X^1_t+\alpha_t^2X^2_t=-C^-_t$. Hence the corresponding cash adjustment term $\varpi$ equals
\begin{equation}\label{cashadjrehyp1}
\varpi_t= \varpi^1_t+\varpi^2_t=X^2_0-\int_0^t \wh{X}^1_u\, d\beta^1_u+\int_0^t \beta^2_u\, d\wh{X}^2_u.
\end{equation}

\subsubsection{Initial and Variation Margins}     \label{sec2.7.3}

In market practice, the total collateral amount is usually represented by two components, which are termed the {\it initial margin} (also known as the {\it independent amount}) and the {\it variation margin}. In the context of self-financing trading strategies, this can be easily dealt with by introducing two (or more) collateral processes for a given contract $A$. It is worth mentioning that each of the collateral processes specified in the clauses of a contract is usually subject to a different convention regarding segregation and/or remuneration.

\subsection{Counterparty Credit Risk}      \label{sec2.8}

The {\it counterparty credit risk} in a financial contract arises from the possibility that at least one of the parties in the contract may default prior to or at the contract's maturity, which may result in failure of this party to fulfil all their contractual obligations leading to financial loss suffered by either one of the two parties in the contract. We will model defaultability of the two parties to the contract in terms of their default times. We denote by $\tau^h$ and $\tau^c$ the default times of the hedger and his counterparty, respectively. We require that $\tau^h$ and $\tau^c$ are non-negative random variables defined on $(\Omega, \cG, \bG , \bP)$. If $\tau^h>T$ holds a.s. (resp. $\tau^c>T$, a.s.) then the hedger (resp. the counterparty) is considered to be default-free in regard to the contract under study. Hence the counterparty risk is a relevant aspect for the contract maturing at $T$ provided that $\bP(\tau\leq T)>0$ where $\tau:= \tau^h \wedge \tau^c$ is the moment of the first default.

From now on, we postulate that the process $A$ models all promised (or \textit{nominal}) cash flows of the contract, as seen from the perspective of the trading desk without accounting for the possibility of defaults of trading parties. In other words, $A$ represents cash flows that would be realized in case none of the two parties has defaulted prior to or at the contract's maturity. We will sometimes refer to $A$ as to the \textit{counterparty risk-free cash flows} and we will call the contract with cash flows $A$ the \textit{counterparty risk-free contract}. The key concept in the context of counterparty risk is the \textit{counterparty risky contract}, which will be examined in the foregoing subsection.

\subsubsection{Closeout Payoff}          \label{sec2.8.1}

Recall that $\tau $ denotes the moment of the first default. On the event $\{ \tau<\infty \}$, we define the random variable $\Upsilon$ as
\begin{align} \label{e:chisup0}
\Upsilon=Q_{\tau}+ \Delta A_\tau - C_\tau ,
\end{align}
where $Q$ is the Credit Support Annex (CSA) closeout valuation process of the contract $A$, $\Delta A_\tau =A_\tau - A_{\tau-}$ is the jump of $A$ at $\tau$ corresponding to a (possibly null) promised bullet dividend at $\tau$, and $C_\tau$ is the value of the collateral process $C$ at time $\tau$. In the financial interpretation, $\Upsilon^+$ is the amount the counterparty owes to the hedger at time $\tau $, whereas $\Upsilon^-$ is the amount the hedger owes to the counterparty at time $\tau $. It accounts for the legal value $Q_\tau$ of the contract, plus the bullet dividend $\Delta A_\tau$ to be received/paid at time $\tau$, less the collateral amount $C_\tau$ since it is already held by either the hedger (if $C_\tau>0$) or the counterparty (if $C_\tau<0$). We refer the reader to Section~3.1.3 in Cr\'epey et al. \cite{CBB2014} for the detailed discussion of the specification of $\Upsilon$.

One of the key financial aspects of the counterparty credit risk is the \textit{closeout payoff}, which occurs if at least one of the parties defaults either before or at the maturity of the contract. It represents the cash flow exchanged between the two parties at the first-party-default time. The following definition of the closeout payoff, as usual given from the perspective of the hedger, is taken from Cr\'epey et al. \cite{CBB2014}. The random variables $R_c$ and $R_h$ taking values in $[0,1]$ represent the recovery rates of the counterparty and the hedger, respectively.

\begin{definition} \label{close}
The  {\it CSA closeout payoff} $\mathfrak{K}$ is defined as
\begin{equation} \label{closeout}
\mathfrak{K} := C_\tau+\I_{\{\tau^c< \tau^h\}}(R_c\Upsilon^+-\Upsilon^-) +
 \I_{\{\tau^h< \tau^c \}}(\Upsilon^+ - R_h\Upsilon^-)+\I_{\{\tau^h=\tau^c\}}( R_c \Upsilon^+ - R_h\Upsilon^-) .
\end{equation}
The {\it counterparty risky cumulative cash flows} process $A^\sharp$ is given by
\begin{equation} \label{hatA}
A^\sharp_t=\I_{\{t<\tau\}} A_t+ \I_{\{t\geq \tau\}}(A_{\tau-}+\mathfrak{K}),\quad t\in [0,T].
\end{equation}
\end{definition}

Let us make some comments on the form of the closeout payoff $\mathfrak{K}$. First, the term $C_{\tau }$ is due to the fact that the legal title to the collateral amount comes into force only at the time of the first default.  The three terms appearing after $C_{\tau }$ in \eqref{closeout} correspond to the CSA convention that the cash flow at the first default from the perspective of the hedger should be equal to $Q_{\tau}+ \Delta A_\tau $. Let us consider, for instance, the event $\{ \tau_c < \tau_h \}$. If $\Upsilon^+ >0$, then we obtain
$$
\mathfrak{K}=C_\tau+R_c(Q_{\tau}+ \Delta A_\tau - C_\tau ) \leq Q_{\tau}+ \Delta A_\tau ,
$$
where the equality holds whenever $R_c =1$. If $\Upsilon^- >0$, then we get
$$
\mathfrak{K}=C_\tau - (-Q_{\tau}- \Delta A_\tau+C_\tau )= Q_{\tau}+ \Delta A_\tau .
$$
Finally, if $\Upsilon =0$, then $\mathfrak{K}=C_\tau=Q_{\tau}+ \Delta A_\tau $.
Similar analysis can be done on the remaining two events in \eqref{closeout}.

\begin{remark} \label{remccr}
{\rm Of course, there is no counterparty credit risk present under the assumption that $\bP(\tau >T)=1$. Let
us consider the case where  $\bP(\tau >T) < 1$. We denote by $P^e_t$ the {\it counterparty risk-free}
ex-dividend price of the contract at time $t$. If we set $R_c=R_h=1$, then we obtain
\[
A^\sharp_{\tau }=A_{\tau }+Q_\tau .
\]
Hence the counterparty credit risk is still present, despite the postulate of the full recovery,
unless the legal value $Q_{\tau }$ perfectly matches the counterparty risk-free ex-dividend price $P^e_{\tau }$. Obviously,
the counterparty credit risk vanishes when $R_c=R_h=1$ and $Q_\tau= P^e_\tau $,
since in that case the so-called {\it exposure at default} (see Section~3.2.3 in Cr\'epey et al. \cite{CBB2014}) is null.}
\end{remark}

\subsubsection{Counterparty Credit Risk Decomposition}  \label{sec2.8.2}

To effectively deal with the closeout payoff in our general framework, we now define the counterparty credit risk (CCR) cash flows,
which are sometimes called CCR exposures. Note that the events $\{ \tau=\tau^h \}=\{ \tau^h \leq \tau^c \}$
and  $\{ \tau=\tau^c \}=\{ \tau^c \leq \tau^h \}$ may overlap.

\begin{definition} \label{expo}
By the {\it CCR processes,} we mean the processes  $CL, CG$ and $RP$ where the {\it credit loss} $CL$ equals
\[
CL_t=- \I_{\{t \geq \tau \}}\I_{\{\tau=\tau^c\}}(1-R_c)\Upsilon^+ ,
\]
the {\it credit gain} $CG$ equals
\[
CG_t=  \I_{\{t \geq \tau \}}\I_{\{\tau=\tau^h\}}(1-R_h)\Upsilon^- ,
\]
and the {\it replacement process} is given by
\[
CR_t=  \I_{\{t\geq \tau\}}(A_{\tau } - A_t+Q_\tau ).
\]
The {\it CCR cash flow} is given by $A^{\rm CCR}=CL+CG+CR$.
\end{definition}

It is worth noting that the CCR cash flows depend on the processes $A, C$ and $Q$.
The next proposition shows that we may interpret the counterparty risky contract as the basic contract $A$, which is complemented
by the collateral adjustment process $\cX=(X^1,X^2)= (C^+,-C^-)$ and the CCR cash flow $A^{\rm CCR}$. In view of this result,
the counterparty risky contract $(A^\sharp ,\cX )$ admits the following formal decompositions  $(A^\sharp ,\cX )
= (A,\cX )+(A^{\rm CCR},0)$ and  $(A^\sharp ,\cX )=(A,0)+(A^{\rm CCR},\cX )$.

\begin{proposition}\label{pro1}
The equality $A^\sharp_t=A_t+A^{\rm CCR}_t$ holds for all $t \in [0,T]$.
\end{proposition}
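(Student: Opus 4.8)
The plan is to verify the asserted identity pathwise in $\omega$, splitting according to whether the first default has already occurred by time $t$. On the event $\{t<\tau\}$ every indicator $\I_{\{t\geq\tau\}}$ occurring in the three CCR processes of Definition~\ref{expo} vanishes, so $A^{\rm CCR}_t=CL_t+CG_t+CR_t=0$, while \eqref{hatA} gives $A^\sharp_t=A_t$; hence $A^\sharp_t=A_t+A^{\rm CCR}_t$ holds trivially. This case in particular disposes of the situation $\tau>T$, in which $\{t<\tau\}$ holds for every $t\in[0,T]$. It therefore remains to treat the event $\{t\geq\tau\}$, on which \eqref{hatA} gives $A^\sharp_t=A_{\tau-}+\mathfrak{K}$ and the replacement term is $CR_t=A_\tau-A_t+Q_\tau$.

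Next I would reduce the target, on $\{t\geq\tau\}$, to a statement purely about the closeout payoff $\mathfrak{K}$. Substituting the three CCR processes (with $\I_{\{t\geq\tau\}}=1$) and cancelling the two $A_t$ terms, the claim $A^\sharp_t=A_t+A^{\rm CCR}_t$ becomes
\[
A_{\tau-}+\mathfrak{K}=A_\tau+Q_\tau-\I_{\{\tau=\tau^c\}}(1-R_c)\Upsilon^+announcement+\I_{\{\tau=\tau^h\}}(1-R_h)\Upsilon^- .
\]
Writing $A_\tau=A_{\tau-}+\Delta A_\tau$ cancels $A_{\tau-}$, and then using the definition $\Upsilon=Q_\tau+\Delta A_\tau-C_\tau$ from \eqref{e:chisup0} to replace $Q_\tau+\Delta A_\tau$ by $\Upsilon+C_\tau$ turns the equality into
\[
\mathfrak{K}=C_\tau+\Upsilon-\I_{\{\tau=\tau^c\}}(1-R_c)\Upsilon^+ +\I_{\{\tau=\tau^h\}}(1-R_h)\Upsilon^- .
\]
Comparing this with the definition \eqref{closeout} of $\mathfrak{K}$ and cancelling the common term $C_\tau$, everything comes down to the purely algebraic identity relating the indicator-weighted combinations of $\Upsilon^+$ and $\Upsilon^-$, with $\Upsilon=\Upsilon^+-\Upsilon^-$.

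Finally I would establish this residual identity by partitioning $\{\tau<\infty\}$ into the three disjoint events $\{\tau^c<\tau^h\}$, $\{\tau^h<\tau^c\}$ and $\{\tau^h=\tau^c\}$ and evaluating both sides on each. The one point demanding genuine care — and essentially the only obstacle — is the correct bookkeeping of the indicators: the CCR processes are written with $\I_{\{\tau=\tau^c\}}$ and $\I_{\{\tau=\tau^h\}}$, and since $\{\tau=\tau^c\}=\{\tau^c\leq\tau^h\}$ and $\{\tau=\tau^h\}=\{\tau^h\leq\tau^c\}$ these events are not disjoint but overlap exactly on $\{\tau^h=\tau^c\}$. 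On $\{\tau^c<\tau^h\}$ only $\I_{\{\tau=\tau^c\}}$ fires, and both sides collapse to $R_c\Upsilon^+-\Upsilon^-$; on $\{\tau^h<\tau^c\}$ only $\I_{\{\tau=\tau^h\}}$ fires, and both sides collapse to $\Upsilon^+-R_h\Upsilon^-$; on $\{\tau^h=\tau^c\}$ both indicators fire, and both sides collapse to $R_c\Upsilon^+-R_h\Upsilon^-$. These three values match the three bracketed terms of \eqref{closeout} term by term, which completes the verification and hence the proof.
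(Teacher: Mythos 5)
Your proof is correct and follows essentially the same route as the paper's: both arguments pivot on the identical algebraic identity converting the closeout payoff from the three disjoint strict-inequality events to the overlapping indicators $\I_{\{\tau^c\leq\tau^h\}}$, $\I_{\{\tau^h\leq\tau^c\}}$ together with $\Upsilon$, verified by the same three-case check, and both use $\Upsilon=Q_\tau+\Delta A_\tau-C_\tau$ to eliminate $C_\tau$. The only differences are organizational — the paper rewrites $\mathfrak{K}$ first and then substitutes into \eqref{hatA}, whereas you reduce the target equality down to the $\mathfrak{K}$-identity — plus a stray word (``announcement'') in your first display that should be deleted.
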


\proof
We first note that
\begin{align*}
\mathfrak{K} &=C_\tau+\I_{\{\tau^c< \tau^h\}}(R_c\Upsilon^+-\Upsilon^-) +
 \I_{\{\tau^h< \tau^c \}}(\Upsilon^+ - R_h\Upsilon^-)+\I_{\{\tau^h=\tau^c\}}( R_c \Upsilon^+ - R_h\Upsilon^-) \\
&= C_\tau - \I_{\{\tau^c \leq \tau^h\}}(1-R_c)\Upsilon^+ +
 \I_{\{\tau^h \leq \tau^c \}}(1-R_h)\Upsilon^-+\Upsilon \\
&= Q_\tau+\Delta A_\tau - \I_{\{\tau^c \leq \tau^h\}}(1-R_c)\Upsilon^+ +
 \I_{\{\tau^h \leq \tau^c \}}(1-R_h)\Upsilon^- ,
\end{align*}
where we used \eqref{e:chisup0} in the last equality. Therefore, from \eqref{hatA} we obtain
\begin{align*}
A^\sharp_t &= \I_{\{t<\tau\}}A_t+ \I_{\{t\geq \tau\}}(A_{\tau-}+\mathfrak{K})= \I_{\{t<\tau\}}A_t+ \I_{\{t\geq \tau\}}(A_{\tau} -  \Delta A_\tau +\mathfrak{K}) \\
&= A_{t \wedge \tau }+ \I_{\{t\geq \tau\}}(\mathfrak{K} -  \Delta A_\tau)=A_t +( A_{t \wedge \tau } - A_t)+ \I_{\{t\geq \tau\}}(\mathfrak{K} -  \Delta A_\tau) \\
&= A_t + \I_{\{t\geq \tau\}} \big(A_{\tau } - A_t+Q_\tau - \I_{\{\tau^c \leq \tau^h\}}(1-R_c)\Upsilon^+ +
 \I_{\{\tau^h \leq \tau^c \}}(1-R_h)\Upsilon^- \big),
\end{align*}
which is the desired equality in view of Definition \ref{expo}.
\endproof

Proposition \ref{pro1} shows that cash flows of the counterparty risky contract can be formally decomposed into the counterparty risk-free component $(A^1,\cX^1)=(A,\cX )$ and the CCR component $(A^2 ,\cX^2)=(A^{\rm CCR},0)$. This additive decomposition of the contract's cash flows may be employed in pricing of a counterparty risky contract. For instance, one could attempt to compute the price of the contract $(A^\sharp ,\cX)$ using the following tentative decomposition
\begin{equation*}
\textrm{price}\,(A^\sharp,\cX )\,=\, \textrm{price}\,(A,\cX)+\textrm{price}\,(A^{\rm CCR},0) \,=\,\textrm{counterparty risk-free price }+\, \textrm{CCR price}.
\end{equation*}
It is unlikely that this procedure would result in an overall arbitrage-free valuation of the counterparty risky contract in a
nonlinear framework since, as we argue in Section \ref{sec6}, the additivity of ex-dividend prices obtained by solving nonlinear BSDEs fails to hold, in general.

%

\subsection{Local and Global Valuation Problems}    \label{sec2n1}

Market adjustments, represented in our framework by the process $\cX$,  may in fact depend both on the cash flow process $A$ and the trading strategy $\varphi$. By the same token, the trading strategy $\varphi$ will typically depend on the trading adjustments. So, a feedback effect between $\varphi$ and $\cX$ is potentially present in our trading universe and, of course, this feature should be properly accounted for in valuation and hedging. Furthermore, it is important to distinguish between the case where the above-mentioned dependence is only on the current composition of the hedging portfolio and/or the current level of the wealth process and the case, where this dependence extends to the history of a hedging strategy. If a contract $(A,\cX)$, the cash and funding accounts, and the prices of risky assets do not depend on the strict history (i.e., the history not including the current values of processes of interest) of a hedger's trading strategy $\phi $ and the wealth process $V(\phi)$, then we say that a valuation problem is {\it local}; otherwise, it is referred to as a {\it global} valuation problem. In view of \eqref{wealth3-1}, the distinction between local and global valuation problems can be formalized as follows.

\begin{definition}   \label{locglob}
 A valuation problem is said to be {\it local} if $X^k_t=v^k(t, V_t(\phi) ,\phi_t)$ and $d\beta_t^{k}=w^{k}(t,V_t(\phi),\phi_t)\,dt$ for some $\mathbb{G}$-progressively measurable mappings $v^k,w^k:\Omega \times [0,T] \times \bR^{3(d+1)} \to \bR $ for every $k=1,2,\dots ,n$. A valuation problem is said to be {\it global} if  $X^k_t=\bar{v}^k(t,V_{\cdot }(\phi),\phi_{\cdot})$ and $d\beta_t^{k}= \bar{w}^k(t,V_{\cdot } (\phi),\phi_{\cdot})\,dt$ for some $\mathbb{G}$-non-anticipative functionals $\bar{v}^k,\bar{w}^k :
\Omega \times [0,T] \times \cD ([0,T], \bR^{3(d+1)}) \to \bR $ for every $k=1,2,\dots ,n$ where $ \cD ([0,T], \bR^{3(d+1)})$ is the space of $\bR^{3(d+1)}$-valued, $\mathbb{G}$-adapted, c\`adl\`ag processes on $[0,T]$.
\end{definition}

As one might guess, solutions to the two valuation problems will always coincide at time 0 but, in general, they may have very different properties at any date $t \in (0,T)$. In particular, they will typically correspond to different classes of BSDEs: local problems correspond to classical BSDEs, whereas global ones can be dealt with through generalized BSDEs, which were introduced in the recent work by Cheridito and Nam \cite{CN2015} (see also Zheng and Zong \cite{ZZ2017}). It is important to stress that the distinction between the local and global problems is not related to the concept of path-independent contingent claims or the Markov property of the underlying model for primary risky assets. It is only due to the above-mentioned (either local or global) feedback effect between the hedger's trading decisions and the market conditions inclusive of particular adjustments for the contract at hand.

\begin{example}  \label{exa2.9}
As a stylized example of a global valuation problem, let us consider a contract, which lasts for two months (for concreteness, assume that it is a simple combination of the put and the call on the stock $S^1$ with maturities equal to one month and two months, respectively). The borrowing rate for the hedger is set to be 5\% per annum, rising to 6\% after one month if the hedger borrows any cash during the first month and it will stay at 5\% if he does not. Similarly, the lending rate initially equals 3\% per annum and drops to 2\% if the hedger borrows any cash during the first month. It is intuitively clear that the valuation problem is here global, since its solution on $[t,T]$ will depend on the strict history of trading. In contrast, if the trading model has possibly different, but fixed, borrowing and lending rates, then the valuation problem for any contract will be local, in the sense introduced above, of course, unless some other trading adjustments will depend on the strict history of trading.

For instance, if the only adjustment is the variable margin account determined by the hedger's valuation and with a constant remuneration rate, then the hedger's valuation problem is local. Note that the valuation problem described above can be inherently global even when the stock price is governed under the real-world probability measure by Markovian dynamics and the contract under study is a standard call or put option (or any other path-independent contingent claim).

More general instances of local and global valuation problems are presented in Section \ref{sec6} where we examine a BSDE approach to the nonlinear markets. Let us mention that most valuation problems examined in the existing literature are local and thus they can be solved using existing results for classical BSDEs. In contrast, global valuation problems are much harder to analyze, since they require to use novel classes of BSDEs (see \cite{CN2015,ZZ2017} and the references therein).
\end{example}

\section{No-Arbitrage Properties of Nonlinear Markets}    \label{sec3}

The analysis of the self-financing property of a trading strategy should be complemented by the study
of some kind of the no-arbitrage property for the adopted market model. Due to the nonlinearity of a market model with differential funding rates, the question how to properly define the no-arbitrage property is already a nontrivial matter, even when no additional portfolio constraints or trading adjustments are taken into account. Nevertheless, we will argue that it can be effectively dealt with using some reasonably general definition of an {\it arbitrage opportunity} associated with trading. Let us stress that we only examine here a nonlinear extension of the classical concept of an arbitrage opportunity and hence the simplest definition of no-arbitrage, sometimes abbreviated as NA (see, for instance, part (iv) in Definition 2.2 in Fontana \cite{F2015}), as opposed to much more sophisticated concepts, such as: NFLVR (no free lunch with vanishing risk), NUPBR (no unbounded profit with bounded risk, which is also known as the no-arbitrage of the first kind, that is, NA1) or NIP (no increasing profit). The introduction of more sophisticated no-arbitrage conditions is motivated by the desire to establish a suitable version of the fundamental theorem of asset pricing (FTAP), which shows the equivalence between a particular form of no-arbitrage and the existence of some kind of a ``martingale measure'' for the discounted prices of primary assets. Due to the complexity of a general nonlinear market model, it is unlikely that the martingale technique underpinning the FTAP in the linear setup will also prove useful when working within the general nonlinear framework (see, however, Pulido \cite{P2014} who established the FTAP for a very special, and hence tractable, case of a nonlinear market with short sales prohibitions). In this paper, we only propose alternative definitions of no-arbitrage in a nonlinear framework and we give sufficient conditions for the no-arbitrage property of a general nonlinear market model.

\subsection{No-arbitrage Pricing Principles}   \label{sec3.1}

Let us first describe very succinctly the classical valuation paradigm for financial derivatives. In essence,
a general approach to the arbitrage-free pricing hinges, at least implicitly, on the following arguments:
\vskip 2 pt
\noindent {\bf Step (L.1).}  One first checks whether a market model with predetermined trading rules and primary
traded assets is arbitrage-free, where the definition of an arbitrage opportunity is a mathematical formalization of the real-world concept of a risk-free profitable trading opportunity. In fact, depending on the framework at hand, several alternative definitions of ``no-arbitrage'' were studied (for an overview, see Fontana \cite{F2015}).
\vskip 2 pt
\noindent {\bf Step (L.2).} Given a financial derivative for which the price is yet unspecified, one proposes a price
(not necessarily unique) and checks whether the extended model (that is, the model where
the financial derivative is postulated to be an additional traded asset) preserves the no-arbitrage property
in the sense made precise in Step (L.1).

The valuation procedure outlined above can be referred to as the {\it arbitrage-free pricing paradigm}.
In any linear market model (see the comments after Definition \ref{def:UnderMarket}), one can show
that the unique price given by replication (or the range of no-arbitrage prices obtained using the concept of superhedging
strategies in the case of an incomplete market) is consistent with the arbitrage-free pricing paradigm (L.1)--(L.2),
although to establish this property in a continuous-time framework, one needs also to introduce the concept of
{\it admissibility} of a trading strategy. In particular, the strict comparison property of linear BSDEs can be employed
to show that replication (or superhedging) will indeed yield prices for derivatives that are consistent with the arbitrage-free pricing paradigm.

Alternatively, a suitable version of the fundamental theorem of asset pricing can be used to show that the discounted prices defined through admissible trading strategies are $\sigma$-martingales (hence, in fact, supermartingales) under an equivalent local martingale measure. The latter property is a well known fundamental feature of stochastic integration, so it covers all linear market models. Obviously, our very brief summary of linear arbitrage-free pricing theory is rather superficial and we acknowledge that it should be complemented by suitable assumptions on prices of traded assets and specific definitions of no-arbitrage. For a survey of classical results regarding no-arbitrage properties of linear market models, we refer to the monograph by Delbaen and Schachermayer \cite{DS2006} (see also papers by Karatzas and Kardaras \cite{KK2007},
Kardaras \cite{K2012}, and Takaoka and Schweizer \cite{TS2014} for more recent developments).

Let us now comment on the existing approaches to the nonlinear valuation of derivatives, as first developed by
El Karoui and Quenez \cite{EQ1997} and El Karoui et al. \cite{EPQ1997}) and later applied by several authors to particular
financial models or classes on contracts (see, for instance, Bichuch et al. \cite{BCS2017}, Brigo and Pallavicini \cite{BP2014}, Cr\'epey \cite{C2015a,C2015b}, Dumitrescu et al. \cite{DQS2017}, Mercurio \cite{M2013} or Pallavicini et al. \cite{PPB2012a,PPB2012b}). The most common approach to the valuation problem in a nonlinear framework seems to hinge, at least implicitly, on the following steps in which it is usually assumed that the hedger's initial endowment is immaterial and thus it may be set to zero. In fact, Step (N.1) was explicitly addressed only in some of the above-mentioned works, whereas in most papers in the existing literature the authors were only concerned with the issue of finding a replicating or a superhedging strategy, as briefly outlined in Step (N.2). Also, to the best of our knowledge, the important issue emphasized in Step (N.3) has been completely ignored up to now, since apparently it was implicitly taken for granted
that the cost of replication, as given by a solution to a suitable BSDE, is a fair price of the contract.

\vskip 2 pt
\noindent
\textbf{Step (N.1).} The strict comparison argument for the BSDE associated with the wealth dynamics is used to show
that one cannot construct an admissible trading strategy with the null initial wealth and the terminal wealth,
which is non-negative almost surely and strictly positive with a positive probability (hence the classical no-arbitrage
property holds). \vskip 2 pt
\noindent
\textbf{Step (N.2).} The price for a European contingent claim is defined using either the cost of replication
or the minimal cost of superhedging. A suitable version of the strict comparison property for wealth processes
can be used to show that, for some nonlinear market models, the two pricing approaches yield the same value
for any replicable European claim. \vskip 2 pt
\noindent
\textbf{Step (N.3).} It remains to check if the tentative price, as given by the cost of replication or selected to be below the
upper bound given by the minimal cost of superhedging, complies with some form of the no-arbitrage property of the extended
market.

We will argue that the question whether the extended nonlinear market model preserves the no-arbitrage property
(of course, according to each particular definition of no-arbitrage is much harder to resolve than it
was the case in the linear framework. Intuitively, this is due to the fact that trading in derivatives may essentially
change the properties of the original nonlinear market, whereas some version of the FTAP can be used to give a positive answer
to the same question in the linear setup. We propose a partial solution in the nonlinear framework by putting forward
in Section \ref{sec3.9} the concept of the {\it regular} market model (see Definitions  \ref{def:de7bb} and \ref{def:de7bc})
and we establish some results on the fair pricing in a regular model (see Propositions \ref{prop3.13} and \ref{prop3.13n}).

\subsection{Discounted Wealth and Admissible Strategies}  \label{sec3.2}

To deal with the issue of no-arbitrage, we need to introduce the discounted wealth process and properly define the concept
of admissibility of trading strategies. For any $x \in \bR$, we denote by $\cB (x)$ the strictly positive process given by,
for all $t \in [0,T]$,
\begin{equation} \label{eq:discFactor}
\cB_t(x) := \I_{\{x \geq 0\}}B^{0,l}_t+\I_{\{x < 0\}}B^{0,b}_t .
\end{equation}
Note that if $B^{0,l}=B^{0,b}$, then ${\mathcal{B}} (x)=B^0=B$. Furthermore, if $x=0$, then $xB^{0,b}_t= xB^{0,l}_t=0$ for all $t \in [0,T]$
and thus the choice of either $B^{0,l}$ or $B^{0,b}$ in the right-hand side of \eqref{eq:discFactor} will be in fact immaterial.
It is natural to postulate that the initial endowment $x \geq 0$ (resp. $x<0$)
has the future value $x B^{0,l}_t$ (resp. $ xB^{0,b}_t$) at time $t \in [0,T]$ when invested in the cash account $B^{0,l}$
(resp. $B^{0,b}$). We henceforth work under the following assumption.


\begin{assumption} \label{ass3.1}
{\rm We postulate that: \vspace{-0.51em}
\begin{enumerate}[(i)]\addtolength{\itemsep}{-.5\baselineskip}
\item for any initial endowment $x \in \bR$ of the hedger, the {\it null contract} $\cN=(0,0)$ belongs to $\sC$,
\item for any $x \in \bR$, the trading strategy  $(x,0,\wh \phi,\cN )$, where all components of $\wh \phi$ vanish except for either $\psi^{0,l}$, if $x \geq 0$, or $\psi^{0,b}$,  if $x < 0$, belongs to $\Phi^{0,x}( \sC )$ and $V^p_t (x,0,\wh \phi, \cN )=V_t (x,0,\wh \phi, \cN )=x \cB_t(x) $ for all $t \in [0,T]$.
\end{enumerate} }
\end{assumption}

At the first glance, Assumption \ref{ass3.1} may look trivial or even redundant but it should be made and it will be useful in the derivation of fundamental properties of fair prices. Condition (i) is indeed a rather obvious formal requirement. Note, however, that condition (ii) cannot be deduced directly from the self-financing condition, since it hinges on the additional postulate that there are no trading adjustments (such as: taxes, transactions costs, margin account, etc.) when the initial endowment is invested in the cash account. It is needed to show that the null contract has fair price zero at any date $t \in [0,T]$. Also, the trading strategy introduced in condition (ii) will serve as a natural benchmark for assessment of profits or losses incurred by the hedger.  A natural extension of Assumption \ref{ass3.1} to the case where
we study trading strategies on $[t,T]$ is also implicitly postulated without stating it explicitly.

In the next necessary step, we follow the standard approach of introducing the concept of admissibility for the discounted wealth.
Towards this end, for any fixed $t\in [0,T)$, we consider a hedger who starts trading at time $t$ with the initial
endowment $x_t$ and uses a self-financing trading strategy $(x_t,p_t,\phi^t,\cC^t)$, where the price $p_t \in \cG_t$
at which the contract $\cC^t$ is traded at time $t$ is arbitrary. We also consider the strictly positive discounting process
$\cB^t (x_t)$, which is defined for all $u \in [t,T]$ by
\begin{equation} \label{eq:discFactc}
\cB^t_{u}(x_t) := \I_{\{x_t \geq 0\}}B^{0,l}_u (B^{0,l}_t)^{-1}+\I_{\{x_t < 0\}}B^{0,b}_u (B^{0,b}_t)^{-1},
\end{equation}
so that, in particular, $\cB^t_t (x_t)= 1$. Then the wealth process discounted back to time $t$ satisfies, for all $u\in [t,T]$,
\begin{equation} \label{edd}
\wt{V}_u(x_t,p_t,\phi^t,\cC^t):= (\mathcal{B}^t_u(x_t))^{-1} V_u(x_t,p_t,\phi^t,\cC^t)
\end{equation}
and we have the following natural concept of \textit{admissibility} of a trading strategy on $[t,T]$.

\begin{definition} \label{defadmih}
For any fixed $t\in [0,T)$, we say that a trading strategy $(x_t,p_t,\phi^t,\cC^t)\in \Phi^{t,x_t} ({\sC})$ is
{\it admissible} if the discounted wealth $\wt{V}_u(x_t,p_t,\phi^t,\cC^t)$ is bounded from below by a constant. We denote by $ \Psi^{t,x_t}(p_t,\cC^t)$ the class of admissible strategies corresponding to $(x_t,p_t,\phi^t,\cC^t)$ and we denote by
\[
\Psi^{t,x_t} ({\sC}) := \cup_{\cC  \in \sC} \cup_{p_t \in \cG_t}  \Psi^{t,x_t}(p_t,\cC^t)
\]
the class of all admissible trading strategies on $[t,T]$ relative to the class $\sC$ of contracts for the hedger with
the initial endowment $x_t$ at time $t$.  In particular, $\Psi^{0,x} ({\sC})$ is the class of all trading strategies
that are admissible for the hedger with the initial endowment $x$ at time $t=0$.
\end{definition}

\subsection{No-arbitrage with Respect to the Null Contract}  \label{sec3.3}

A minimal no-arbitrage requirement for an underlying market model is that it should be arbitrage-free with
respect to the null contract. Note that, consistently with Assumption \ref{ass3.1} and the concept
of replication (for the general formulation of replication of a non-null contract, see Definition \ref{defprice1}),
it is implicitly assumed in Definition  \ref{defarbn} that the price at which the null contract is traded at time zero equals zero.
Needless to say, this is a rather indisputable feature of any trading model.

\begin{definition} \label{defarbn0}
Consider an underlying market model $\cM=(\cS,\cD, \cB, {\sC }, \Psi^{0,x} (\sC ) )$.
An {\it arbitrage opportunity with respect to the null contract} (or a {\it primary arbitrage opportunity})  for the hedger with an initial endowment $x$ is
a strategy  $(x,0,\varphi , \cN ) \in \Psi^{0,x}(0, \cN )$  such that
\begin{equation}\label{eq:ArbUnderModel}
 \bP( \wt{V}_T(x,0,\varphi, \cN )\geq x  )=1, \quad \bP( \wt{V}_T (x,0,\varphi, \cN ) >x )>0.
\end{equation}
\end{definition}

\begin{definition} \label{defarbn}
If no primary arbitrage opportunity exists in the market model $\cM$, then we say that $\cM$ has the {\it no-arbitrage property with respect to the null contract} for the hedger with an initial endowment $x$.
\end{definition}

For an arbitrary linear market model, Definition \ref{defarbn} reduces to the classical definition of an arbitrage opportunity.
It is well known that the no-arbitrage property introduced in this definition is a sufficiently strong tool for the development
of arbitrage-free pricing for financial derivatives in the linear framework. This does not mean, however, that Definition \ref{defarbn} is sufficiently strong to allow us to develop nonlinear arbitrage-free pricing theory, which would enjoy the properties, which are desirable from either mathematical or financial perspective.

 On the one hand, a natural definition of a {\it hedger's fair value} (see Definition \ref{def:deffi7b}) is consistent with the concept of no-arbitrage with respect to the null contract and thus Definition \ref{defarbn} seems to be theoretically sound. On the other hand, however, Definition \ref{defarbn} is inadequate for an efficient valuation and hedging approach in a general nonlinear market for the following reasons. First, it may occur that the replication cost of a contract does not satisfy the definition of a fair price, since the possibility of selling of a contract at the hedger's replication cost may generate an arbitrage opportunity for him. An explicit example of a market model, which is arbitrage-free in the sense of Definition~\ref{defarbn}, but suffers from this deficiency, is given in Section~\ref{sec4.2.3}. Second, and perhaps even more importantly, no well established method for finding a fair price in a general nonlinear market satisfying Definition \ref{defarbn} is available.

We contend that the drawback of the definition of an arbitrage-free model with respect to the null contract is that it does not make an explicit reference to a class $\sC$ of contracts under study. Indeed, it relies on the specification of the class $\Psi^{0,x}(0, \cN )$ of trading strategies, but it makes no reference to the larger class $\Psi^{0,x}(\sC)$. To amend that drawback of Definition \ref{defarbn}, Bielecki and Rutkowski \cite{BR2015} proposed to consider the concept of the no-arbitrage property for the trading desk with respect to a predetermined family $\sC$ of contracts.

\subsection{No-arbitrage for the Trading Desk}    \label{sec3.4}

Following Bielecki and Rutkowski \cite{BR2015}, we will now examine a stronger no-arbitrage property of a market model, which is intimately related to a predetermined family $\sC$ of financial contracts. Our goal is here to propose
a more stringent no-arbitrage condition, which not only accounts for the nonlinearity of the market, but also explicitly refers to a family of contracts under consideration. Unfortunately, the class of models that are arbitrage-free in the sense of Definition  \ref{defarbt} seems to be too encompassing and thus it is still unclear whether valuation irregularities commented upon in the preceding section will be completely eliminated (for an example, see Section \ref{sec4.2.3}).

For simplicity of notation, we consider here the case of $t=0$, but all definitions can easily be extended
to the case of any date $t$. The symbols $\cX=\cX (A)$ and $\cY=\cY (- A)$ are used to emphasize that there is
no reason to expect that the trading adjustments will satisfy the equality $\cX (-A)= -\cX (A)$, in general.
Therefore, we denote by $\cY=(Y^1, \dots , Y^n; \alpha^1(\cY), \ldots, \alpha^{n}(\cY); \beta^1(\cY), \ldots, \beta^n(\cY))$
the trading adjustments associated with the cumulative cash flows process $-A$. In order to avoid confusion, we will use the full notation for the wealth process, for instance, $V(x,p,\phi,\cC)=V(x,p,\phi, A,\cX)$, etc.

\begin{remark}
As already mentioned above, it is not necessarily true that the equality $Y^k=-X^k$, holds for all $k=1,2,\ldots,n$.
For instance, this equality is satisfied by the variation margin, but it is not met by the initial margin and the regulatory capital, which are always non-negative.
\end{remark}

\begin{definition} \label{defcomwea}
For a contract $\cC=(A,\cX)$ and an initial endowment $x$, the {\it combined wealth} is defined as
\begin{equation} \label{comwea}
\Vcom (x_1, x_2,\phi, \bar \phi, A,\cX,\cY) := V(x_1,0,\phi, A,\cX)+V(x_2,0, \bar \phi, -A, \cY),
\end{equation}
where  $x_1, x_2$ are arbitrary real numbers such that $x=x_1+x_2,\phi \in \Psi^{0,x_1}(0,A,\cX),$
and $\, \bar \phi \in \Psi^{0,x_2}(0,-A, \cY)$. In particular, $\Vcom_0(x_1, x_2,\phi, \bar \phi, A,\cX,\cY)= x_1 +x_2=x$.
\end{definition}

The motivation for the name {\it combined wealth} is fairly transparent, since it comes directly from the financial interpretation of the process given by the right-hand side in \eqref{comwea}. We argue below that it can be seen as the aggregated wealth of the two traders, who are members of the same trading desk, and who are supposed to proceed as follows:

\begin{itemize}\addtolength{\itemsep}{-.5\baselineskip}
\item The first trader takes the long position in a contract $(A,\cX)$, whereas the second one takes the short position in the same contract, so that his position is formally represented by $(-A, \cY)$. Since we assume that the long and short
positions have exactly opposite prices, the corresponding cash flows $p$ and $-p$ coming to the trading desk (and not to individual
traders) offset each other and thus the initial endowment $x$ of the trading desk remains unchanged.
\item In addition, it is assumed that after the cash flows $p$ and $-p$ have already been netted, so they are no longer relevant, the initial
endowment $x$ is split into arbitrary amounts $x_1$ and $x_2$ meaning that $x=x_1+x_2$. Then each trader is allocated the respective amount $x_1$ or $x_2$ as his initial endowment and each of them undertakes active hedging of his respective position. It is now clear that the level of the initial price $p$ at which the contract is traded at time zero is immaterial for both hedging strategies and the total (i.e., combined) wealth of the two traders is given by the right-hand side in \eqref{comwea}.
\end{itemize}

Alternatively, the combined wealth may be used to describe the situation where a single trader takes long and short positions with two external counterparties and hedges them independently using his initial endowment $x$ split into $x_1$ and $x_2$. Of course, in that
case it is even more clear that the initial price $p$ does not affect his trading strategies since the amount of cash received at time
0 from one of the counterparties is immediately transferred to the second one.

\begin{remark}
One can also observe that the following equality holds for any real number $p$
\[
V(x_1,0,\phi, A,\cX)+V(x_2,0, \bar \phi, -A, \cY)=V(\wt x_1,p,\phi, A,\cX)+V(\wt x_2,-p, \bar \phi, -A, \cY),
\]
where $\wt x_1=x_1-p$ and $\wt x_2=x_2+p$ is another decomposition of $x$ such that $x=\wt x_1+\wt x_2$.
However, equation \eqref{comwea} better reflects the actual trading arrangements and it has a clear advantage that a
number $p$, which is not known a priori, does not appear in the expression for the combined wealth. Hence \eqref{comwea} emphasizes the crucial feature that the combined wealth is independent of $p$. In fact, one can remark that the fact whether the trading desk is aware about the actual level of the price $p$ is immaterial for the question whether an arbitrage opportunity for the trading desk exists in
a particular market model.
\end{remark}

\begin{definition} \label{defiadmiw}
A pair $(x_1,\phi; x_2, \bar \phi)$ of trading strategies introduced in Definition \ref{defcomwea}
is {\it admissible for the trading desk} if  the discounted combined wealth process
\begin{equation} \label{ycomwea}
\Vcomt (x_1, x_2 ,\phi, \bar \phi, A,\cX , \cY ):=(\mathcal{B}(x))^{-1}\Vcom (x_1, x_2,\phi,\bar \phi,A,\cX , \cY )
\end{equation}
is bounded from below by a constant.  The class of strategies admissible for the trading desk is denoted by $\Psi^{0,x_1,x_2}(A,\cX , \cY)$.
\end{definition}

We are in a position to formalize the concept of an arbitrage-free model for the trading desk with respect to
a particular family of contracts.

\begin{definition} \label{defarbt0}
 A pair $(x_1,\phi; x_2,\bar \phi) \in \Psi^{0,x_1,x_2}(A,\cX , \cY)$ is an {\it arbitrage opportunity for the trading desk} with respect to a contract $(A,\cX)$ if the following conditions are satisfied
\[
\bP ( \Vcomt_T  (x_1, x_2,\phi, \bar \phi,A,\cX ,\cY ) \geq x )=1,
\quad \bP (\Vcomt_T  (x_1, x_2,\phi,\bar \phi,A,\cX,\cY ) > x ) > 0 .
\]
\end{definition}

\begin{definition} \label{defarbt}
We say that the market model $\cM=(\cS,\cD, \cB, \sC , \Psi^{0,x} (\sC ) )$ has the {\it no-arbitrage property for the trading desk} if
there are no arbitrage opportunities for the trading desk with respect to any contract $\cC $ from $\sC $.
\end{definition}

Our main purpose in Sections \ref{sec3.3} and \ref{sec3.4} was to provide some simple and financially meaningful criteria that would allow us to detect and eliminate market models in which some particular form of arbitrage appears. Definition \ref{defarbn} and  Definition \ref{defarbt} provide such criteria for accepting or rejecting any tentative nonlinear market model. It is easy to see that a model which is rejected according to  Definition \ref{defarbt} will also be rejected if Definition \ref{defarbn} is applied. We do not claim, however, that these tentative tests are sufficient for an effective discrimination between acceptable and non-acceptable nonlinear models for valuation of derivatives. Therefore, in Definition \ref{def:de7bb}, we will formulate additional conditions that should be satisfied by an acceptable model, which is then called a {\it regular} model.

\subsection{Dynamics of the Discounted Wealth Process}    \label{sec3.5}

It is natural to ask whether the no-arbitrage for the trading desk can be checked for
a given market model. Before we illustrate a simple verification method for this property,
we need to introduce additional notation. Let us write
\begin{align*}
&\wt B^{i,l}(x) := (\cB (x))^{-1} B^{i,l}, \quad  \quad \quad \wt B^{i,b}(x) := (\cB (x))^{-1} B^{i,b}, \\
&\wt \beta^k (x,\cX) := (\cB(x))^{-1}\beta^k (\cX) , \quad \wt {\beta}^k (x,\cY) := (\cB(x))^{-1}\beta^k(\cY), \\
&\wh X^k := (\beta^k (\cX) )^{-1}X^k,\quad \quad \quad \quad \wh  Y^k := (\beta^k(\cY))^{-1} Y^k, \\
&B^{0,b,l} := (B^{0,l})^{-1}B^{0,b}, \quad \quad \quad \ \ B^{0,l,b} := (B^{0,b})^{-1}B^{0,l}.
\end{align*}

\begin{lemma}
The discounted combined wealth satisfies
\begin{equation}
\begin{aligned} \label{eq:dynwtVcom}
 & d  \Vcomt_t(x_1,x_2,\varphi, \bar \varphi, A,\cX, \cY )
= \sum_{i=1}^{d}  (\xi_t^i +\bar \xi_t^i)\,d \wt S^{i,cld}_t(x)
+\sum_{i=1}^{d}  (\psi_t^{i,l}+\bar \psi_t^{i,l} )\,  d\wt B_t^{i,l}(x)  \\
 &+\sum_{i=1}^{d}  (\psi_t^{i,b}+\bar \psi_t^{i,b})\,d\wt B_t^{i,b}(x) +\I_{\{x\geq 0\}} (\psi_t^{0,b}+\bar  \psi_t^{0,b} )\, dB_t^{0,b,l}+ \I_{\{x<0\}} (\psi_t^{0,l}+\bar \psi_t^{0,l})\,dB_t^{0,l,b}   \\
 & - \sum_{k=1}^{n} \wh X_t^k\,d \wt \beta_t^k (x,\cX ) -\sum_{k=1}^{n} \wh Y_t^k\, d \wt {{\beta}}_t^k (x,\cY)\\
 &+ \sum_{k=1}^{n} \big((1-\alpha_t^k (\cX) ) X^k_t+(1- \alpha_t^k(\cY)) Y^k_t \big)\,d(\cB_t(x))^{-1},
\end{aligned}
\end{equation}
where we set
\begin{equation} \label{sicld}
\wt S^{i,cld}_t(x) := (\cB_t(x))^{-1}S_t^i+\int_0^t (\cB_u(x))^{-1}\, dD^i_u .
\end{equation}
\end{lemma}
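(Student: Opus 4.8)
The plan is to compute the dynamics by applying the integration-by-parts formula to the product $\Vcomt_t = (\cB_t(x))^{-1}\Vcom_t$ and then to recognize each resulting group of terms as one of the discounted integrators on the right-hand side of \eqref{eq:dynwtVcom}. The crucial preliminary observation is that the combined wealth admits \emph{two} descriptions that I would use simultaneously: the differential (integral) form inherited from \eqref{wealth3-1}, and the algebraic (level) form inherited from \eqref{wealth2}. Writing $\Vcom = V(x_1,0,\phi,A,\cX) + V(x_2,0,\bar\phi,-A,\cY)$ and using \eqref{wealth3-1} for each summand, the two cumulative-cash-flow terms $A$ and $-A$ cancel, so $d\Vcom_t$ contains only the summed risky-asset, cash/funding-account, and remuneration integrators, with positions $\xi^i+\bar\xi^i$ and $\psi^{j,\cdot}+\bar\psi^{j,\cdot}$ and two separate remuneration terms, one in $\beta^k(\cX)$ and one in $\beta^k(\cY)$. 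Independently, \eqref{wealth2} records the level $\Vcom_t = \sum_i(\xi^i_t+\bar\xi^i_t)S^i_t + \sum_j\big((\psi^{j,l}_t+\bar\psi^{j,l}_t)B^{j,l}_t + (\psi^{j,b}_t+\bar\psi^{j,b}_t)B^{j,b}_t\big) - \sum_k\alpha^k_t(\cX)X^k_t - \sum_k\alpha^k_t(\cY)Y^k_t$.

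Since $B^{0,l}$ and $B^{0,b}$ are continuous processes of finite variation by Assumption~\ref{assumpassets}(ii), the discount factor $\cB(x)$ and hence $(\cB(x))^{-1}$ are continuous and of finite variation, so the quadratic-covariation term in the integration-by-parts formula vanishes and $d\Vcomt_t = (\cB_t(x))^{-1}\,d\Vcom_t + \Vcom_t\,d(\cB_t(x))^{-1}$. I would then substitute the differential form into the first term and the level form into the second, and match terms using elementary product-rule identities (all covariations again vanishing by the same continuity and finite-variation hypotheses). For each risky asset, $(\cB_t(x))^{-1}(dS^i_t+dD^i_t) + S^i_t\,d(\cB_t(x))^{-1} = d\wt S^{i,cld}_t(x)$ by the very definition \eqref{sicld}; for each funding account $i=1,\dots,d$, $(\cB_t(x))^{-1}dB^{i,l}_t + B^{i,l}_t\,d(\cB_t(x))^{-1} = d\wt B^{i,l}_t(x)$, and likewise for $B^{i,b}$. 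For the remuneration terms I would use $d\wt\beta^k_t(x,\cX) = (\cB_t(x))^{-1}d\beta^k_t(\cX) + \beta^k_t(\cX)\,d(\cB_t(x))^{-1}$, so that $-\wh X^k_t\,d\wt\beta^k_t(x,\cX)$ reproduces the discounted remuneration integrator $-(\cB_t(x))^{-1}X^k_t(\beta^k_t(\cX))^{-1}d\beta^k_t(\cX)$ and leaves a residual $-X^k_t\,d(\cB_t(x))^{-1}$; combined with the $-\alpha^k_t(\cX)X^k_t\,d(\cB_t(x))^{-1}$ coming from the level term, this yields exactly $(1-\alpha^k_t(\cX))X^k_t\,d(\cB_t(x))^{-1}$, and symmetrically for $\cY$.

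The delicate step, which I expect to require the most care, is the treatment of the cash accounts $j=0$, because the discount factor itself equals one of $B^{0,l}$ or $B^{0,b}$ depending on the sign of $x$. On the event $\{x\geq 0\}$ one has $\cB(x)=B^{0,l}$, so the lending piece collapses via $(B^{0,l})^{-1}dB^{0,l} + B^{0,l}\,d(B^{0,l})^{-1} = d\big((B^{0,l})^{-1}B^{0,l}\big) = 0$, eliminating the $\psi^{0,l}+\bar\psi^{0,l}$ contribution entirely, while the borrowing piece gives $(B^{0,l})^{-1}dB^{0,b} + B^{0,b}\,d(B^{0,l})^{-1} = d\big((B^{0,l})^{-1}B^{0,b}\big) = dB^{0,b,l}$, producing the term $(\psi^{0,b}_t+\bar\psi^{0,b}_t)\,dB^{0,b,l}_t$; the roles of $l$ and $b$ interchange on $\{x<0\}$, giving the $\I_{\{x<0\}}$ term with integrator $dB^{0,l,b}$. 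Collecting all the matched groups reproduces \eqref{eq:dynwtVcom} verbatim. Everything outside the cash-account case split is routine bookkeeping, entirely justified by the continuity and finite-variation hypotheses of Assumption~\ref{assumpassets}, which both annihilate the covariation corrections and validate each application of the product rule.
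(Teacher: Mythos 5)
Your proposal is correct and follows essentially the same route as the paper: both rest on the Itô integration-by-parts formula applied with the discount factor $(\cB(x))^{-1}$ (whose continuity and finite variation kill all covariation corrections), combined with the level form \eqref{wealth2} and the differential form \eqref{wealth3-1} of the wealth, and the same term-matching identities for $d\wt S^{i,cld}(x)$, $d\wt B^{i,l/b}(x)$, $dB^{0,b,l}$, $dB^{0,l,b}$ and $d\wt\beta^k$, including the recombination producing the $(1-\alpha^k)X^k\,d(\cB(x))^{-1}$ terms. The only immaterial difference is ordering: the paper discounts each of the two wealth processes separately (obtaining \eqref{wealthdyna} twice, with the $(\cB(x))^{-1}dA_t$ terms cancelling upon summation), whereas you sum first and discount the combined wealth, which is the same computation since discounting is linear.
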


\begin{proof} For an arbitrary decomposition $x=x_1+x_2$,  we write (note that the notation
introduced in equation \eqref{edd} is extended here, since $x \ne x_i$, in general)
\begin{align*}
&\wt V(x_1,p,\varphi,A,\cX) := (\cB (x))^{-1} \wt V(x_1,p,\varphi,A,\cX), \\
&\wt V(x_2,p,\bar \varphi,-A,\cY) := (\cB (x))^{-1} \wt V(x_2,p,\bar \varphi,-A,\cY).
\end{align*}
From \eqref{wealth2} and \eqref{wealth3-1}, using the It\^o integration by parts formula, we obtain
\begin{align} \label{wealthdyna}
 d\wt V_t&(x_1,p,\varphi,A,\cX)=  \sum_{i=1}^{d}  \xi_t^i\,d \wt S^{i,cld}_t(x) +
 \sum_{i=1}^{d} \left(\psi_t^{i,l} d\wt B_t^{i,l}(x)+\psi_t^{i,b}\, d\wt B_t^{i,b} (x) \right) \nonumber \\
 &+\I_{\{x\geq 0\}} \psi_t^{0,b}\, dB_t^{0,b,l}+\I_{\{x<0\}}\psi_t^{0,l}\, dB_t^{0,l,b}+(\cB_t(x))^{-1}\, dA_t-\sum_{k=1}^{n} \wh X_t^k\,d \wt \beta_t^k (x,\cX ) \\
 &+\sum_{k=1}^{n} (1-\alpha_t^k (\cX) ) X^k_t\,d (\cB_t(x))^{-1}, \nonumber
\end{align}
and an analogous equality holds for $\wt V(x_2,p,\bar \varphi,-A,\cY)$. Hence \eqref{eq:dynwtVcom}
follows from \eqref{comwea} and \eqref{ycomwea}.
\end{proof}

We deduce from \eqref{wealthdyna} that condition (ii) in  Assumption \ref{ass3.1} is satisfied, provided that no additional portfolio
constraints are imposed (recall that condition (i) in Assumption \ref{ass3.1} is always postulated to hold).

Assume now, in addition, that $B^{i,l}=B^{i,b}=B^i$ for $i=1,2,\ldots, d$. We define the processes
$S^{i,\textrm{cld}}$ and $\wh S^{i,\textrm{cld}}$
\begin{equation*}
S^{i,\textrm{cld}}_t := S^i_t+ B^{i}_t \int_0^t (B^i_u)^{-1}\,d D^i_u , \quad
\wh S^{i,\textrm{cld}}_t :=  (B^i_t)^{-1} S_t^{i,\textrm{cld}}=\wh S^i_t+ \int_0^t (B^i_u)^{-1}\,d D^i_u,
\end{equation*}
where in turn $\wh S^i :=  (B^i)^{-1} S^i$. It is easy to check that
\begin{equation} \label{newdd}
d \wt S_t^{i,cld} (x)=\wt B^i_t(x) \,d \widehat S_t^{i,cld}+\wh S_t^i\,d\wt B^i_t(x),
\end{equation}
where  $\wt B^i (x) := (\cB(x))^{-1} B^i$.

\begin{corollary}
Assume that $B^{i,l}=B^{i,b}=B^i$ for $i=1,2,\ldots, d$. Then the discounted combined wealth satisfies
\begin{align} \label{newnew}
d \Vcomt_t&(x_1,x_2,\phi, \bar \phi, A,\cX , \cY )
=  \sum_{i=1}^{d} (\xi_t^i+\bar \xi^i_t) \wt B_t^i(x)\, d \widehat S_t^{i,cld} +\sum_{i=1}^{d}\big( (\xi_t^i+\bar \xi^i_t) \wh S_t^i+(\psi_t^{i}+ \bar \psi^i_t) \big)\,d\wt B_t^i (x) \nonumber
\\ &+\I_{\{x\geq 0\}} (\psi_t^{0,b}+\bar  \psi_t^{0,b} )\, dB_t^{0,b,l}+\I_{\{x<0\}} (\psi_t^{0,l}+\bar \psi_t^{0,l})\, dB_t^{0,l,b} - \sum_{k=1}^{n} \wh X_t^k d \wt \beta_t^k (x,\cX ) \\
& -\sum_{k=1}^{n} \wh Y_t^k\, d \wt {\beta}_t^k (x,\cY)   +\sum_{k=1}^{n} \big( (1-\alpha_t^k (\cX) ) X^k_t+(1- \alpha_t^k(\cY)) Y^k_t \big)\,d(\cB_t(x))^{-1}. \nonumber
\end{align}
\end{corollary}

\proof
It suffices to combine \eqref{eq:dynwtVcom} with \eqref{newdd}.
\endproof

\subsection{Sufficient Conditions for the Trading Desk No-Arbitrage}  \label{sec3.6}

The following result gives a sufficient condition for a market model to be arbitrage-free for the trading desk.
The proof of Proposition \ref{propoarbi1} is pretty straightforward and thus it is omitted.

\begin{proposition} \label{propoarbi1}
Assume that there exists a probability measure $\bQ$, equivalent to $\bP$ on $(\Omega , \cG_T)$, and such that for any decomposition $x=x_1+x_2$  and any admissible combination of trading strategies $(x_1 ,\phi,A, \mathcal{X})$ and $(x_2, \bar \phi, -A, \mathcal{Y})$ for any contract $(A,\cX )$ belonging to $\sC $ the discounted combined wealth $\Vcomt(x_1,x_2,\phi, \bar \phi, A,\cX, \cY )$  is a supermartingale under $\bQ$. Then the market model $\cM=(\cS,\cD, \cB, \sC , \Psi^{0,x} (\sC ) )$ is arbitrage-free for the trading desk.
\end{proposition}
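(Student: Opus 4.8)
The plan is to argue by contradiction, exploiting the elementary fact that a supermartingale cannot rise in expectation while lying almost surely above its initial value. First I would suppose that, contrary to the claim, there is a contract $(A,\cX) \in \sC$ and an admissible pair $(x_1,\phi; x_2,\bar\phi) \in \Psi^{0,x_1,x_2}(A,\cX,\cY)$ which is an arbitrage opportunity for the trading desk in the sense of Definition \ref{defarbt}. By that definition,
\[
\bP(\Vcomt_T \geq x) = 1, \qquad \bP(\Vcomt_T > x) > 0,
\]
whereas Definition \ref{defcomwea} gives $\Vcomt_0 = x$ (recall that $\cB_0(x) = 1$ since $B^{0,l}_0 = B^{0,b}_0 = 1$, and that $\cG_0$ is trivial, so $\Vcomt_0$ is a constant).

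Next I would transfer these two statements from $\bP$ to $\bQ$. Since $\bQ$ is equivalent to $\bP$ on $(\Omega,\cG_T)$, the two measures share the same null sets: the set $\{\Vcomt_T < x\}$ is $\bP$-null, hence $\bQ$-null, so $\Vcomt_T \geq x$ holds $\bQ$-a.s.; and $\{\Vcomt_T > x\}$ has positive $\bP$-measure, hence positive $\bQ$-measure. Because the pair is admissible, the discounted combined wealth is bounded from below by a constant, and by hypothesis $\Vcomt$ is a $\bQ$-supermartingale; in particular $\Vcomt_T \in L^1(\bQ)$ and $\EQ[\Vcomt_T] \leq \Vcomt_0 = x$.

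Finally I would close the argument. From $\Vcomt_T \geq x$ $\bQ$-a.s. together with $\bQ(\Vcomt_T > x) > 0$ it follows that the $\bQ$-nonnegative variable $\Vcomt_T - x$, being strictly positive on a set of positive $\bQ$-measure, has strictly positive expectation, i.e. $\EQ[\Vcomt_T] > x$. This contradicts the supermartingale inequality $\EQ[\Vcomt_T] \leq x$. Hence no arbitrage opportunity for the trading desk can exist relative to any contract in $\sC$, and the model is arbitrage-free for the trading desk.

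The only delicate point — and the reason the lower-bound requirement is built into the definition of $\Psi^{0,x_1,x_2}(A,\cX,\cY)$ — is integrability: the supermartingale property presupposes $\Vcomt_T \in L^1(\bQ)$, and the lower bound guarantees that $\EQ[\Vcomt_T]$ is well defined and not $-\infty$, so that the inequality $\EQ[\Vcomt_T] \le \Vcomt_0$ may legitimately be invoked. Everything else reduces to equivalence of measures and the expectation of an a.s.-nonnegative random variable, which is why the argument collapses to a couple of lines once the definitions are unwound.
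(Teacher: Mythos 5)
Your proof is correct: the paper omits the proof of Proposition \ref{propoarbi1} as ``straightforward,'' and your argument---transferring the two arbitrage conditions from $\bP$ to $\bQ$ by equivalence of measures, noting $\Vcomt_0 = x$ (since $\cB_0(x)=1$ and $\cG_0$ is trivial), and contradicting the supermartingale inequality $\EQ[\Vcomt_T]\leq \Vcomt_0$---is precisely the standard argument the authors intended. Your closing remark on integrability (the supermartingale hypothesis together with the admissibility lower bound making $\EQ[\Vcomt_T]$ well defined) correctly identifies the only point requiring any care.
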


Although Proposition~\ref{propoarbi1} is fairly abstract, the sufficient condition formulated there can readily be verified, as soon as a specific market model is adopted (see, for instance, Bielecki and Rutkowski \cite{BR2015} and Nie and Rutkowski \cite{NR2015,NR2016a,NR2017}). To support this claim, we will examine an example of a market model with idiosyncratic funding for risky assets and rehypothecated cash collateral.

\begin{example}  \label{ex1}
We consider the special case where $B^{0,l}=B^{0,b}=B=\cB (x)$ and $B^{i,l}=B^{i,b}=B^i$ for all $i=1,2,\ldots, d$.
If we temporarily assume that there are no additional portfolio constraints, then from
\eqref{newnew} we obtain (for a special case of this formula, see Corollary 2.1 in Bielecki and Rutkowski \cite{BR2015})
\begin{align*}
d \Vcomt_t &(x_1,x_2,\phi, \bar \phi, A,\cX , \cY )  =\sum_{i=1}^{d} (\xi_t^i+\bar \xi^i_t) \widetilde B_t^i(x)\,d \widehat{S}_t^{i,\textrm{cld}}+\sum_{i=1}^{d} ( \xi_t^i S_t^i+\psi_t^i B^i_t) (B^i_t)^{-1}\,d\widetilde B_t^i(x) \nonumber \\  &+\sum_{i=1}^{d} ( \bar \xi_t^i S_t^i+\bar \psi_t^i B^i_t) (B^i_t)^{-1} \,  d\widetilde B_t^i(x)
- \sum_{k=1}^{n}  \wh X_t^k \,d\widetilde{\beta}_t^k(x,\cX )-\sum_{k=1}^{n} \wh Y_t^k\, d\widetilde{\beta}_t^k(x,\cY)  \\
&+\sum_{k=1}^{n} \big( (1-\alpha_t^k (\cX) ) X^k_t+ (1- \alpha_t^k(\cY)) Y^k_t \big)\, dB_t^{-1}.
\end{align*}
We postulate that the cash collateral is rehypothecated, so that $n_1=n=2$ in Lemma \ref{lmnb}. Then $\alpha^1_t=\alpha^2_t
=\alpha^1_t(\cY) =\alpha^2_t(\cY)=1$ and $X^1_t+Y^1_t=X^2_t+Y^2_t=0$ for all $t \in [0,T]$. Let us assume, in addition, that $\xi_t^i S_t^i+ \psi_t^i B^i_t=\bar \xi_t^i S_t^i+\bar \psi_t^i B^i_t =0$ for all $i$ and $t \in [0,T]$, which means that the $i$th risky asset is fully funded using the repo account $B^i$ (see Section \ref{sec2.6.3}). More generally, it suffices to assume that the following equalities are satisfied for all $t \in [0,T]$
\begin{equation} \label{conee}
\sum_{i=1}^d \int_0^t ( \xi^i_u S^{i}_u+\psi^i_u B^i_u ) (B^i_u)^{-1}\,d\wt B^i_u(x)
= \sum_{i=1}^d \int_0^t (\bar \xi^i_u S^{i}_u+\bar \psi^i_u B^i_u )(B^i_u)^{-1} \,d\wt B^i_u(x)=0.
\end{equation}
Finally, let the remuneration processes satisfy $\beta^k(\cX) =\beta^k(\cY)$ (the symmetry of collateral rates).
Then the formula for the dynamics of the discounted combined wealth for the trading desk reduces to
$$
d \Vcomt_t (x_1, x_2,\phi, \bar \phi, A,\cX , \cY )  = \sum_{i=1}^{d} (\xi_t^i+\bar \xi^i_t) \widetilde B_t^i(x)\, d \widehat{S}_t^{i,\textrm{cld}}
$$
and thus the model is arbitrage-free for the trading desk provided that there exists a probability measure $\bQ$, which is equivalent to $\bP$ on $(\Omega , \cG_T)$, and such that the processes $\widehat{S}^{i,\textrm{cld}},\, i=1,2,\ldots, d$ are $\bQ$-local martingales.  This property is still a sufficient condition for the trading desk no-arbitrage when the cash account $B^{0,l}$ and $B^{0,b}$ differ, but the borrowing rate dominates the lending rate.
\end{example}

\section{Hedger's Fair Pricing and Market Regularity}  \label{sec3.7}

We now address the issue of a fair pricing in the nonlinear framework under the assumption that the hedger has the initial endowment $x_t$ at time $t$. We assume that the model enjoys the no-arbitrage property either with respect to the null contract or for the trading desk and we consider the hedger who contemplates entering into the contract $\cC^t $ at time~$t$. The first goal is to describe the range of the {\it hedger's fair prices} for the contract $\cC^t$. Let $p_t \in \cG_t$ denote a generic price of a contract at time $t$, as seen from the perspective of the hedger. Hence if $p_t$ is positive, then the hedger receives at time $t$ the cash amount $p_t$  from the counterparty, whereas a negative value of $p_t$ means that he agrees to pay the cash amount $-p_t$ to the counterparty at time~$t$.  In the next definition, we fix a date $t \in [0,T)$ and we assume that the contract $\cC^t$ is traded at the price $p_t$ at time $t$. It is natural to ask whether in this situation the hedger can make a risk-free profit by entering into the contract and hedging it with an admissible trading strategy over $[t,T]$. We propose to call it a {\it hedger's pricing} arbitrage opportunity. Recall that the arbitrage opportunities defined in Section~\ref{sec3} are related to the properties of a trading model, but they do not depend on the level of a price $p_t$ for $\cC^t $.

\begin{definition} \label{def:deffi7b}
A trading strategy  $(x_t,p_t,\phi^t,\cC^t)\in \Psi^{t,x_t} (\sC )$ is a \textit{hedger's pricing
arbitrage opportunity on $[t,T]$} associated with a contract $\cC^t $ traded at $p_t$ at time $t$
(or, briefly, a {\it secondary arbitrage opportunity}) if
\begin{equation} \label{price1b}
\bP \big( \widetilde{V}_T(x_t,p_t,\phi^t,\cC^t) \geq x_t \big)=1
\end{equation}
and
\begin{equation}\label{price2b}
\bP \big( \widetilde{V}_T(x_t,p_t,\phi^t,\cC^t) > x_t \big) > 0.
\end{equation}
\end{definition}

It is clear that $(x_t,p_t,\phi^t,\cC^t)\in \Psi^{t,x_t} (\sC )$ is not a hedger's pricing
arbitrage opportunity on $[t,T]$ if either
\begin{equation} \label{price1v1}
\bP \big( \widetilde{V}_T(x_t,p_t,\phi^t,\cC^t)=x_t \big)=1
\end{equation}
or
\begin{equation} \label{price1v2}
\bP \big( \widetilde{V}_T(x_t,p_t,\phi^t,\cC^t) < x_t \big) > 0 .
\end{equation}
We will refer to condition \eqref{price1v2} as the hedger's {\it loss condition}.

\begin{definition}  \label{deffairp}
We say that $p^f_t=p^f_t(x_t,\cC^t)$ is a \textit{fair hedger's price} at time $t$ for $\cC^t$ if there is
no hedger's secondary arbitrage opportunity $(x_t,p^f_t,\phi^t,\cC^t)\in \Psi^{t, x_t}(\sC )$. A fair hedger's price $p^f_t$ such that the loss  condition holds for every trading strategy $(x_t,p^f_t,\phi^t,\cC^t)\in \Psi^{t,x_t} (\sC )$ is called a {\it loss-generating cost.}
\end{definition}

It is clear from Definitions \ref{def:deffi7b} and \ref{deffairp} that if $p^f_t$ is a fair price,
then any trading strategy $(x_t,p_t,\phi^t,\cC^t)\in \Psi^{t,x_t} (\sC )$ necessarily satisfies either condition \eqref{price1v1}  or condition \eqref{price1v2}. Obviously, a fair hedger's price depends both on the given endowment $x_t$ and the contract $\cC^t$ and thus the notation $p^f_t(x_t,\cC^t)$ is appropriate, but it will be frequently simplified to $p^f_t$ when no danger of confusion may arise.

A fair price prevents the hedger from making a sure profit with a positive probability and with no risk of suffering a loss.
In contrast, it does not prevent the hedger from losing money, in general. It is thus natural to search for the highest
level of a fair price. This idea motivates the following definition of the upper bound for the hedger's fair prices
$ \ovep^f_t(x_t,\cC^t)=\esssup \cH^f_t(x_t,\cC^t)$, where
\begin{align} \label{eq:FairPrice1}
\cH^f_t(x_t,\cC^t):=\big\{ p^f_t\in \cG_t \mid p^f_t \textrm{\ is a fair hedger's price for\ }\cC^t\big\}.
\end{align}
We find it useful to study also the upper bound for the loss-generating  costs, which is given by
$ \ovepl_t(x_t,\cC^t)=\esssup \cH^l_t(x_t,\cC^t)$ where
\begin{align} \label{eq:FairPrice1v}
\cH^l_t(x_t,\cC^t):=\big\{ p^l_t\in \cG_t \mid p^l_t \textrm{\ is a loss-generating  cost for\ }\cC^t\big\}.
\end{align}

\begin{definition} \label{defsupcost}
A trading strategy $(x_t,p^s_t,\phi^t,\cC^t)\in \Psi^{t,x_t} (\sC )$ is said to be the {\it superhedging strategy}
for the contract $\cC^t$ if \eqref{price1b} holds, whereas the {\it strict superhedging} means that \eqref{price1b} and \eqref{price2b} are satisfied. If $p^s_t=p^s_t(x_t,\cC^t)$ is such that there exists a superhedging strategy  (resp. a strict superhedging strategy) $(x_t,p^s_t,\phi^t,\cC^t)\in \Psi^{t,x_t}(\sC)$, then it is  called a \textit{superhedging cost} (resp. \textit{strict superhedging cost}) at time $t$ for $\cC^t$.
\end{definition}

The lower bound for superhedging costs is given by  ${\undp}^s_t (x_t,\cC^t)=\essinf  \cH^s_t (x_t,\cC^t)$ where
\begin{align} \label{eq:FairPrice2}
\cH^s_t(x_t,\cC^t):=\big\{ p^s_t\in \cG_t \mid p^s_t \textrm{\ is a superhedging cost for\ }\cC^t \big\}
\end{align}
and the lower bound for strict superhedging costs equals $\undpa_t (x_t,\cC^t)=\essinf  \cH^a_t (x_t,\cC^t)$ where
\begin{align} \label{eq:FairPrice2v}
\cH^a_t(x_t,\cC^t):=\big\{ p^a_t\in \cG_t \mid p^a_t \textrm{\ is a strict superhedging cost for\ }\cC^t \big\}.
\end{align}

From Definitions \ref{deffairp} and \ref{defsupcost}, it is obvious that $\cH^l_t (x_t,\cC^t) \subseteq \cH^f_t (x_t,\cC^t)$ and $\cH^a_t (x_t,\cC^t) \subseteq \cH^s_t (x_t,\cC^t)$. Moreover, the set $\cH^f_t (x_t,\cC^t)$ is the complement of $\cH^a_t (x_t,\cC^t)$ and the set $\cH^l_t (x_t,\cC^t)$ is the complement of $\cH^s_t (x_t,\cC^t)$ so that, obviously, we have that $\cH^f_t (x_t,\cC^t) \cap \cH^a_t (x_t,\cC^t)=\emptyset $ and $\cH^l_t (x_t,\cC^t) \cap \cH^s_t (x_t,\cC^t)=\emptyset $. Consequently, it is easy to see that
\begin{equation} \label{eqlem1v}
\ovepl_0(x,\cC) \leq {\ovep}^f_0(x,\cC), \quad \undp^s_0(x,\cC) \leq \undpa_0(x,\cC),
\end{equation}
where, in principle, it may happen that $\undp^s_0(x,\cC)=- \infty $ or ${\ovep}^f_0(x,\cC)=\infty .$

The next assumption looks fairly natural, but it is not necessarily satisfied by every nonlinear market model, so it should be checked on a case-by-case basis.

\begin{assumption} \label{ass1.1w}
{\rm  For every $\cC \in \sC$ and $t \in [0,T)$, all $x_t,p_t,q_t \in \mathcal{G}_t$, and every trading strategy $(x_t,p_t,\phi^t,\cC^t)\in \Psi^{t,x_t}(\sC)$, if $q_t \geq p_t$ on some event $D \in \mathcal{G}_t$  such that $\mathbb{P}(D)>0$, then there exists a trading strategy $(x_t,q_t,\psi^t,\cC^t)\in \Psi^{t,x_t} (\sC )$ such that the inequality $V_T(x_t,q_t,\psi^t,\cC^t) \geq V_T(x_t,p_t,\phi^t,\cC^t)$ holds on $D$.
} \end{assumption}

One may try to argue that if the hedger can enter into a contract at a higher price, then he can invest the cash difference $q_t-p_t$ in the bank account till the contract's maturity and trade according to the trading strategy $\varphi$ corresponding to $p_t$, and thus yielding a higher terminal wealth. However, this is not necessarily possible, since it may happen that $\varphi$ requires borrowing from the bank account at some future times, while we postulated that simultaneous borrowing and lending of cash is prohibited in our model. More generally, selling at a higher price means essentially that the hedger starts with a different initial capital, which may change the trading strategy significantly. Formally, a simple combination of two self-financing strategies is no longer a self-financing strategy, in general. Furthermore, if we take into account the limited supply of investment grade assets (perhaps also inclusive of the bank account), then any strategy needs to satisfy suitable portfolio constraints, which could imply that the additional cash amount received by the hedger will necessarily be used to purchase assets with a much higher exposure to substantial losses. To sum up, due to the presence of portfolio constraints and trading adjustments, the monotonicity of the terminal wealth with respect to the price $p_t$ is by no means guaranteed, in general.

Assumption~\ref{ass1.1w} is not explicitly stated in most existing papers devoted to the nonlinear valuation of derivatives. Note, however, that if the wealth process happens to be governed by some simple dynamics with no portfolio constraints or trading adjustments, then there is no need to postulate that this property holds, since it can be deduced from a suitable comparison theorem for ordinary differential equations. For a particular example of a model where Assumption~\ref{ass1.1w} is satisfied, see Lemma 6.2 in the paper by Dumitrescu et al. \cite{DQS2017} who postulate that the wealth process satisfies
\[
dV_t =-g(t,V_t,Z_t)\, dt+Z^1_t\, dW_t+Z^2_t\, dM_t,
\]
where $W$ is the Wiener process and $M$ is the compensated martingale of the default indicator process. Let us stress that if the validity of Assumption \ref{ass1.1w} is not ensured, then one may only claim that the inequalities given in \eqref{eqlem1v} are satisfied. In contrast, if Assumption \ref{ass1.1w} is met, then one obtains more informative conditions \eqref{eqlem1}.

Under Assumption \ref{ass1.1w}, it is easy to see that if $p^f_t$ is a fair hedger's price (resp. a loss-generating  cost) at time $t$ for $\cC^t$, then any $p_t \in \mathcal{G}_t$ such that $p_t\leq p^f_t$ is also a fair hedger's price  (resp. a loss-generating  cost) at time $t$ for that contract. Similarly, if a superhedging (resp. strict superhedging) strategy exists when $\cC^t$ is entered into at the price $p^s_t$,
then a superhedging  (resp. strict superhedging) strategy exists as well for any $p_t$ satisfying $ p_t \geq p^s_t$.
Therefore, if we postulate that Assumption \ref{ass1.1w} is met, then we have the following result in which we focus on the case where $t=0$, but it is easy to check that analogous properties are valid for any $t \in (0,T)$ as well.

\begin{lemma} \label{lemfun}
Let Assumption \ref{ass1.1w} be satisfied. Then for every contract $\cC \in \sC$ such that $\undpa_0(x,\cC) > - \infty $
and ${\ovep}^f_0(x,\cC) < \infty $, we obtain the following intervals:\vspace{-0.5em}
\begin{enumerate}[(i)]\addtolength{\itemsep}{-.5\baselineskip}
\item the interval $I_0^l(x,\cC):=(-\infty, \ovepl_0(x,\cC) )$ of loss generating costs,
\item the interval $I_0^r(x,\cC):=(\ovepl_0(x,\cC),\undpa_0(x,\cC))=(\undp^s_0(x,\cC),{\ovep}^f_0(x,\cC))$ where for every $p \in I_0^r(x,\cC)$
 there exists a trading strategy $(x,p,\phi,\cC)\in \Psi^{t,x}(\sC )$ such that $\bP(\widetilde  V_T(x, p,\phi,\cC)= x)=1$,
\item the interval $I_0^a(x,\cC):=(\undpa_0(x,\cC), +\infty)$ of strict superhedging costs (arbitrage range),
\end{enumerate}
where, in particular, the interval $I_0^r(x,\cC)$ may be empty. Consequently,
\begin{equation} \label{eqlem1}
\ovepl_0(x,\cC)=\undp^s_0(x,\cC) \leq {\ovep}^f_0(x,\cC)=\undpa_0(x,\cC).
\end{equation}
\end{lemma}

\begin{proof}
Under Assumption \ref{ass1.1w}, if $p \in \cH^l_0(x,\cC)$ (resp. $p \in \cH^f_0(x,\cC)$) and $q<p$, then $q \in \cH^l_0(x,\cC)$
(resp. $q \in \cH^f_0(x,\cC)$). Also, if $p \in \cH^s_0(x,\cC)$  (resp. $p \in \cH^s_0(x,\cC)$) and $q>p$, then $q\in \cH^s_0(x,\cC)$ (resp. $q \in \cH^a_0(x,\cC)$). Using, in particular, it is now easy to see that the asserted properties are valid.
\end{proof}

Observe that nothing specific can be said about the end points of the three intervals introduced in Lemma~\ref{lemfun}, in general.
Obviously, we have that either (a) $I_0^r(x,\cC)=\emptyset$ and the equalities $\ovepl_0(x,\cC)={\ovep}^f_0(x,\cC)=\undp^s_0(x,\cC)=\undpa_0(x,\cC)$ hold  or (b) $I_0^r(x,\cC)\ne \emptyset$ and thus $\undp^s_0(x,\cC)<{\ovep}^f_0(x,\cC)$. One may also consider the following stronger version of Assumption~\ref{ass1.1w} under which case (b) cannot occur, by virtue of the postulated strict monotonicity of the terminal wealth with respect to the price $p_t$.

\begin{assumption} \label{ass1.1s}
{\rm For every $\cC \in \sC$ and $t \in [0,T)$, all $x_t, p_t , q_t \in \mathcal{G}_t$, and every trading strategy $(x_t,p_t,\phi^t,\cC^t)\in \Psi^{t,x_t} (\sC )$, if $q_t > p_t$ on some event $D \in  \mathcal{G}_t$ such that $\mathbb{P}(D)>0$, then there exists a trading strategy $(x_t, q_t,\psi^t ,\cC^t)\in \Psi^{t,x_t}(\sC )$ such that the inequalities $V_T(x_t, q_t , \psi^t ,\cC^t) \geq V_T(x_t, p_t,\phi^t,\cC^t)$ and $V_T(x_t, q_t , \psi^t ,\cC^t) \ne V_T(x_t,p_t,\phi^t,\cC^t)$ are valid on $D$.
} \end{assumption}

It is clear that, under Assumption \ref{ass1.1s}, the equality $\undp^s_0(x,\cC)=\undpa_0(x,\cC)$  holds and thus the following corollary to Lemma \ref{lemfun} is valid where, by convention, $\inf \emptyset=\infty$ and $\sup \emptyset=- \infty $.

\begin{lemma}  \label{lemfunx}
If Assumption \ref{ass1.1s} is satisfied, then for any contract $\cC \in \sC$ we have that
\begin{equation} \label{eqlem2}
\ovepl_0(x,\cC)={\ovep}^f_0(x,\cC)=\undp^s_0(x,\cC)=\undpa_0(x,\cC).
\end{equation}
\end{lemma}

\subsection{Replication on $[0,T]$ and the Gained Value}         \label{sec3.8}

Admittedly, the most commonly used technique for valuation of derivatives hinges on the concept of replication.
In the present framework, it is given by the following definition, in which we consider the hedger with
the initial endowment $x$ at time 0 and where $p^r_0$ stands for an arbitrary real number.

\begin{definition} \label{defprice1}
A trading strategy $(x,p^r_0,\phi,\cC) \in \Psi^{0,x}(\sC)$ is said to {\it replicate} the contract $\cC $ on $[0,T]$ whenever $V_T(x,p^r_0,\phi,\cC)=x\cB_T(x)$ or, equivalently, $\wt{V}_T(x,p^r_0,\phi,\cC)=x$.
Then a real number $p^r_0= p^r_0(x,\cC) $ is said to be a {\it hedger's replication cost} for $\cC $ at time $0$ and
the process $p^g (x,\cC)$ given by
\begin{equation} \label{xprice1}
p^g_t(x,\cC):= V_t(x,p^r_0(x,\cC),\phi,\cC)-x\cB_t(x)
\end{equation}
is called the hedger's {\it gained value} associated with the replicating strategy $(x,p^r_0,\phi,\cC)$.
\end{definition}

Note that the equality $p^g_0(x,\cC)=p^r_0(x,\cC)$ is always satisfied since $V_0(x,p^r_0(x,\cC),\phi,\cC)= x+p^r_0(x,\cC)$.
The financial interpretation of a hedger's replication cost $p^r_0(x,\cC) $ for a given contract $\cC \in \sC $ is fairly straightforward. It represents either an increase or a reduction of the hedger's initial endowment $x$, which is required to implement a trading strategy ensuring that the hedger's wealth at time $T$, after the terminal payoff of the contract has been settled, perfectly matches the value at time $T$ of the original initial endowment $x$ invested in the cash account.

As expected, for the null contract $\cN =(0,0)$, the self-financing strategy $(x,0,\phi, \cN )$, where the portfolio $\phi$ hinges on keeping all money in the bank account $\cB(x)$, is a replicating strategy for $\cN$ such that the gained value satisfies $p^g_t(x, \cN )=0$ for all $t \in [0,T]$. The fact that the trading strategy $(x,0,\phi,\cN)$ is self-financing was postulated in Section \ref{sec3.2} but, obviously, this assumption needs to be verified for each particular market model under study. Note also that the uniqueness of a replication cost $p^r_0(x,\cC)$ is not ensured and, in fact, there is no reason to expect that it will always hold in every market model satisfying either Definition \ref{defarbn} or Definition  \ref{defarbt} (for a counterexample, see Proposition \ref{propnewd}).


Let us make some comments on replication costs. Suppose first that Assumption \ref{ass1.1w} is met. Let us assume that the interval $I_0^r(x,\cC)$ introduced in part (ii) in Lemma \ref{lemfun} is nonempty. We already know that any number $p \in (\ovepl_0(x,\cC), \undpa_0(x,\cC))$ is necessarily a replication cost and a fair price for $\cC$ but it is unclear whether the minimal replication cost is well defined. In principle, it may also happen that there exists a replication cost either equal to, or strictly greater than, $\undpa_0(x,\cC)$.

Let us now examine the case where the interval $I_0^r(x,\cC)$ is empty. If the contract $\cC$ can be replicated, then it may then happen that $\ovepl_0(x,\cC)=\undpa_0(x,\cC)=p^r_0(x,\cC)$. It is not obvious, however, whether $p^r_0(x,\cC)$ would be in that case a hedger's fair price, since it may occur that a strict superhedging strategy with the same initial cost exists. Furthermore, it is also possible that $\undpa_0(x,\cC) < p^r_0(x,\cC)$ meaning that a strict superhedging may be in fact less expensive than replication.
If the contract $\cC$ cannot be replicated, then it is not clear whether $\ovep^f_0(x,\cC)$ is a fair price, although it coincides with the upper bound for loss-generating costs and with the lower bound for strict superhedging costs.

If Assumption \ref{ass1.1s} is met, then one can be a bit more specific. If the set of replication costs is nonempty,
then only the lowest cost of replication can be a fair price (indeed, if $p_1$ and $p_2$ are two replication costs such that $p_1<p_2$,
then $p_2$ is also a strict superhedging cost and thus it is not a fair price). Consequently, if the set of all possible replication costs
for a given contract is bounded from below then either (a) the lower bound for replication costs is not a replication cost and none
of replication costs is a fair price or (b) the lower bound is a replication cost and it is a candidate for a maximal fair
price for the contract.

To conclude, in a nonlinear market model, which is assumed to have the no-arbitrage property with respect to the null contract (or even the no-arbitrage property for the trading desk), a replication cost may fail to be a fair hedger's price. To amend this shortcoming of a general nonlinear setup, we introduce in the next section a particular class of models in which replication yields a unique fair price for any contract $\cC$ belonging to a predetermined family $\sC$ and such that a replicating strategy for $\cC$ exists.

\subsection{Market Regularity on $[0,T]$}          \label{sec3.9}

Once again, we consider the hedger with the initial endowment $x$ at time 0. Intuitively, the concept of {\it regularity} with respect to a given family $\sC$ of contracts is motivated by our desire to ensure that, for any contract from ${\sC}$, the cost of replication is never higher than the minimal cost of superhedging and, in addition, the cost of replication is a fair hedger's price, in the sense of Definition \ref{def:deffi7b}.


\begin{definition} \label{def:de7bb}
We say that the market model $\cM=(\cS,\cD, \cB, \sC , \Psi^{0,x} (\sC ) )$ is {\it regular on $[0,T]$ with respect to} ${\sC}$ if
Assumption \ref{ass1.1w} is met and for every replicable contract $\cC \in {\sC}$ and for any replicating strategy $(x , p^r_0(x),\phi,\cX)$ the following properties hold:
\begin{enumerate}[(i)]\addtolength{\itemsep}{-.5\baselineskip}
\item if $p$ is such that there exists $(x,p,\phi,\cC) \in \Psi^{0,x}(p,\cC)$ satisfying
\begin{equation}  \label{pri1x}
\bP \big( \widetilde{V}_T(x,p ,\phi,\cC)\geq x \big)=1 ,
\end{equation}
then $p \geq p^r_0(x)$;
\item  if $p$ is such that there exists $(x,p,\phi,\cC)\in \Psi^{0,x} (p,\cC)$ such that
\begin{equation}  \label{pri2x}
\bP \big( \widetilde{V}_T(x,p,\phi,\cC)\geq x \big)=1
\end{equation}
and
\begin{equation}  \label{pri3x}
\bP \big( \widetilde{V}_T(x,p,\phi,\cC)>x \big) > 0 ,
\end{equation}
then  $p > p^r_0(x)$.
\end{enumerate}
\end{definition}

By applying Definition \ref{def:de7bb} to the null contract $\cN=(0,0)$, we deduce that any regular market
model is arbitrage-free for the hedger with respect to the null contract. It is not clear, however,
whether an arbitrage opportunity for the trading desk may arise in a regular model.

Condition (i) in Definition \ref{def:de7bb} means that superhedging cannot be less expensive than replication,
whereas condition (ii) postulates that any strict superhedging has a strictly higher cost than replication.
It is important to observe that condition (i) implies that the replication cost $p^r_0(x)$ for $\cC $ is unique.
Moreover, if condition (i) holds, then condition (ii) is equivalent to the following condition
\begin{itemize}
\item[(iii)] if $p$ is such that there exists $(x,p,\phi,\cC)\in \Psi^{0,x} (p,\cC)$ satisfying
\begin{equation}\label{pri2b}
\bP \big( \widetilde{V}_T (x,p,\phi,\cC) \geq x \big)=1,
\end{equation}
then the following implication is valid: if $p=p^r_0(x)$, then
\begin{equation}\label{pri3b}
\bP \big( \widetilde{V}_T (x,p,\phi,\cC)=x \big)=1.
\end{equation}
\end{itemize}

\begin{remark}
In the special case of European claims with maturity $T$ and no trading adjustments, conditions (i) and (ii) correspond to the comparison and strict comparison properties for solutions to BSDEs satisfied by the wealth process with different terminal conditions. In fact, the same idea underpins Definition 2.7 of the nonlinear pricing system introduced by El Karoui and Quenez \cite{EQ1997}. Using similar arguments as El Karoui and Quenez \cite{EQ1997}, we will show that the regularity of a market model can be established for a large variety of financial models using a BSDE approach.
\end{remark}

\subsubsection{Replicable Contracts in Regular Markets}

In this section, we focus on contracts that can be replicated. Proposition  \ref{prop3.13} shows that, in a regular market model, the cost
of replication is the unique fair price of a contract $\cC \in \sC $ that can be replicated and
the equalities $p^r_0(x,\cC) =\undpa_0(x,\cC)=\ovepl_0(x,\cC)$ hold for such a contract.
This means that the replication of a contract is indeed an effective method of valuation within the framework of a regular model,
although this statement is not necessarily true when dealing with an arbitrary nonlinear market model.

\begin{proposition} \label{prop3.13}
 Let a market model $\cM=(\cS,\cD, \cB, \sC , \Psi^{0,x} (\sC ) )$  be regular on $[0,T]$ with respect to ${\sC}$. Then for every contract $\cC \in \sC $ that can be replicated on $[0,T]$ we have:
\begin{enumerate}[(i)]\addtolength{\itemsep}{-.6\baselineskip}
\item the replication cost $p^r_0(x,\cC)$ is unique,
\item $p^r_0(x,\cC)$ is the maximal fair price and the upper bound
for loss-generating  costs, that is, $p^r_0(x,\cC)=\ovep^f_0(x,\cC)= \ovepl_0(x,\cC)$,
\item $p^r_0(x,\cC)$ is the lower bound for superhedging costs and strict superhedging costs,
     that is, $p^r_0(x,\cC)=\undp^s_0(x,\cC)=\undpa_0(x,\cC)$.
\end{enumerate}
\end{proposition}

\begin{proof}
As was mentioned, the uniqueness of the replication cost $p^r_0(x,\cC)$ is an immediate
consequence of condition (i) in  Definition \ref{def:de7bb}. Hence the interval $I_0^r(x,\cC)$ defined in part (ii) in Lemma \ref{lemfun}
is empty and thus, in particular, $p^r_0(x,\cC) \geq \ovep^f_0(x,\cC)$.  Condition (ii) in  Definition \ref{def:de7bb} implies that there is no trading strategy  $(x,p^r_0(x,\cC),\phi,\cC)\in \Psi^{0,x} ( \sC )$ such that conditions
\eqref{pri2x} and \eqref{pri3x} are satisfied. This means that no hedger's arbitrage opportunity arises (that is,
no strict superhedging strategy for $\cC $ exists) if $\cC $ is traded at time 0 at its replication cost  $p^r_0(x,\cC)$
and thus we conclude that $p^r_0(x,\cC)$ is the maximal fair price meaning that $p^r_0(x,\cC)= \ovep^f_0(x,\cC)$.
The remaining equalities are immediate consequences of Lemma \ref{lemfun}.
\end{proof}

\subsubsection{Nonreplicable Contracts}

Let us make some comments on the properties of a contract $\cC$ for which replication in $\cM=(\cS,\cD, \cB, \sC , \Psi^{0,x} (\sC ) )$ is not feasible. If Assumption \ref{ass1.1w} is satisfied, then from Lemma \ref{lemfun} we obtain the following equalities
$\ovepl_0(x,\cC)={\ovep}^f_0(x,\cC)=\undp^s_0(x,\cC)=\undpa_0(x,\cC)$. Obviously, it would be interesting to know whether the common value
is a fair price. Unfortunately, the definitive answer is not available, in general, since it may
happen that  it is indeed the maximal fair price, but it may also occur that it represents the cost of a strict superhedging strategy.
Furthermore, it is not clear at all how to proceed to compute the value of ${\ovep}^f_0(x,\cC)$ (it would be perhaps enough to know this value
since any strictly lower value is a loss-generating cost). Under either Assumption \ref{ass1.1s} or when we postulate that the market model $\cM$ is regular, we obtain the same conclusions as under  Assumption \ref{ass1.1w} and thus they are not helpful when analyzing the valuation of nonreplicable contracts.

\subsubsection{Nonregular Market Model}  \label{sec4.2.3}

Our next goal is to illustrate the issue of regularity of a model by providing a simple, albeit admittedly artificial,
example of a model, which is not regular.  We now assume that the hedger's endowment at time 0 is null and we start by placing ourselves within the framework of Bergman's \cite{B1995} model with differential borrowing and lending interest rates (see also Korn \cite{K1995}, Nie and Rutkowski \cite{NR2015} and Mercurio \cite{M2013}). In particular, the stock price $S^1$ is driven by the Black-Scholes dynamics and constant interest rates satisfy $r^b > r^l$. It is straightforward to verify that Bergman's model $\cM_B =(S^1,B^l,B^b, \sC ,\Psi^{0,0}(\sC ))$ satisfies Definition~\ref{defarbn} if, for instance, we take as $\sC$ the class of long and short positions in all European put options written on the stock $S^1$ and maturing at $T$. Moreover, the hedger is able to replicate, without borrowing any cash, the short position in the European put option on $S^1$ with maturity $T$ and any strike $K>0$. The hedger's price of the European put in Bergman's model is thus given by the classical Black-Scholes formula with the interest rate equal to $r^l$ and we henceforth denoted it as $P_t(K)$ for every $t \in [0,T]$.

In the next step, we fix strike $K>0$ and the date $U \in (0,T)$. For concreteness, we assume that $T-U=1$ and $r^l=0$.
In particular, the class $\sC$ comprise only short and long positions in the put option with strike $K$. We complement the model $\cM_B$ by introducing an additional  risky asset with the following price process with nondecreasing sample paths
$$
S^2_t=\I_{[0,T]}(t)+(K(t-U))(P_U(K))^{-1}\I_{\{2P_U(K)>K\}}\I_{[U,T]}(t).
$$
Since $r^l=0$, we have that $\bP (2P_U(K)>K)>0$,  due to the fact that  $2P_U(K)>K$ for $S_U$ sufficiently close to zero.
It is obvious that $S^2_T=1$ on the event $\{ 2P_U(K) \leq K \}$. On the other hand, on the event $\{ 2P_U(K)>K \}$, we have
$$
1<S^2_T=1+K (P_U(K))^{-1}<3=e^{ \ln 3}.
$$
We henceforth assume that $r^b > \ln 3$ in order to ensure that the interest rate $r^b$ dominates the rate of return on the asset $S^2$.
Obviously, the put option with strike $K$ can be replicated in $\cM=(B^l,B^b,S^1,S^2,\sC,\Psi^{0,0}(\sC ))$
(through the same strategy as in $\cM_B$, that is, using only $B^l$ and $S^1$ for trading) and thus its replication cost in $\cM$ is given by the Black-Scholes price $P(K)$. We are ready to prove the following proposition describing the properties of $\cM$.
It is easy to check that Assumption \ref{ass1.1w}  (as well as Assumption \ref{ass1.1s}) is satisfied by the model $\cM$.

\begin{proposition}  \label{propnewd}
The market model $\cM =(B^l,B^b,S^1,S^2,\sC,\Psi^{0,0}(\sC))$ has the following properties:
\begin{enumerate}[(i)]\addtolength{\itemsep}{-.6\baselineskip}
\item $\cM$ has the no-arbitrage property with respect to the null contract,
\item $\cM$ has the no-arbitrage property for the trading desk,
\item $\cM$ is nonregular and the extended model $\widetilde \cM =(B^l,B^b,S^1,S^2,S^3,\sC,\Psi^{0,0}(\sC))$ where
$S^3=P(K)$ does not have the no-arbitrage property with respect to the null contract.
\item replication cost for the put option is not unique and the minimal cost of replication is given by the solution $Y$
to the BSDE
\begin{equation} \label{bsdesput}
dY_t=\sum_{i=1}^2 \xi_t^i\,dS^i_t , \quad Y_T=(K-S^1_T)^+ .
\end{equation}
\end{enumerate}
\end{proposition}

\begin{proof}
Assertion (i) is easy to check. Intuitively, the increasing process $S^2$ can be seen as an alternative (artificial) lending account to the constant lending account $B^l=1$ and thus, if the hedger has a surplus of cash, then he will invest it in the asset $S^2$, rather than in the lending account $B^l$. It is thus enough to note that the model $(S^1, \bar{B}^l=S^2,B^b, \sC ,\Psi^{0,0}(\sC ))$ satisfies Definition~\ref{defarbn}, as it can be seen as another instance of Bergman's setup with a
random lending rate.

For part (ii),  we focus on the model $(S^1,\bar{B}^l=S^2,B^b,\sC,\Psi^{0,0}(\sC ))$ and we observe that equation \eqref{eq:dynwtVcom}
reduces to
\[
d\Vcomt_t(x_1,-x_1,\varphi, \bar \varphi, A,0,0)= (\xi_t^1 +\bar \xi_t^1)\,d \wt S^{1}_t+(\psi_t^{0,b}+\bar  \psi_t^{0,b} )\, dB_t^{0,b,l},
\]
where $A$ represents the put option, $\wt S^1=S^1 (\bar{B}^l)^{-1}$ and $B^{0,b,l}=B^b (\bar{B}^l)^{-1}$.  Since the process $\wt S^1$ admits a martingale measure, the process $B^{0,b,l}$ is non-increasing and, by assumption, the processes $\psi^{0,b}$ and $\bar \psi^{0,b}$ are nonnegative, it is clear that no arbitrage opportunity for the trading desk may arise in the considered model, and thus also in $\cM$.

To prove (iii), we will show that the hedger who sells the put option at time 0 at its Black-Scholes price $P_0(K)$ can construct
an arbitrage opportunity. To see this, assume that the hedger uses the replicating strategy for the put on the interval $[0,U]$.
If the event $\{ 2P_U(K)  \leq K \}$ occurs, then he continues to replicate the put until its maturity date $T$.
In contrast, if the event $\{ 2P_U(K)>K \}$ occurs, then he buys $P_U(K)/S^2_U=P_U(K)$ of shares of the asset $S^2$ and holds
them till $T$. Then the hedger's wealth at $T$, after delivery of the cash amount $(K-S^1_T)^+$ to the buyer
of the option, satisfies
\begin{align*}
 V_T (0,P_0(K),\phi,-P(K)) &= \left( \frac{P_U(K)}{S^2_U}\, S^2_T - (K-S^1_T)^+ \right) \I_{\{ 2P_U(K)> K \}}+{0\cdot \I_{\{ 2P_U(K)\leq K \}}} \\ &=\left( P_U(K) (1+K (P_U(K))^{-1}) - (K-S^1_T)^+ \right) \I_{\{ 2P_U(K)> K \}} \\ &> \left( 1.5 K - (K-S^1_T)^+ \right) \I_{\{ 2P_U(K)> K \}} > 0.5 K \I_{\{ 2P_U(K)> K \}}.
\end{align*}
Since $\bP ( 2P_U(K)> K )>0$ and it is clear that the wealth is always non-negative (so that the strategy
is admissible), the strategy described above is an arbitrage opportunity for the hedger.
Also, a strict superhedging strategy for the claim $P_T(K)$ can be obtained using the initial wealth equal to the replication cost. Indeed, using the premium $P_0(K)$, the hedger may either replicate $\zeta_1 := P_T(K)$ or strictly superhedge $P_T(K)$ by producing the terminal payoff
$$
\zeta_2=P_T(K) \I_{\{ 2P_U(K)\leq  K \}}+P_U(K) (1+K (P_U(K))^{-1}) \I_{\{ 2P_U(K)> K \}}.
$$
It is easy to check that $\zeta_2 \geq \zeta_1 $ and $\bP ( \zeta_2 > \zeta_1 ) >0 $.
We thus conclude that the model $\cM$ is nonregular and the extended model $\widetilde \cM $  does not satisfy Definition~\ref{defarbn}.

For the last assertion, we note that the put option can be replicated in the model $(S^1,\bar{B}^l=S^2,\sC,\Psi^{0,0}(\sC ))$,
which is an extension of the Black-Scholes model in which the interest rate is random. This claim follows from the existence
and uniqueness of a solution $(Y,\phi^1,\phi^2)$ to BSDE \eqref{bsdesput} with Lipschitz continuous coefficients and the bounded
terminal condition. Furthermore, the component $\phi^2$ of the replicating strategy is nonnegative and thus the same strategy replicates
the option in $\cM$ and it is the least expensive replicating strategy in $\cM$.
\end{proof}

\subsection{Replication and Market Regularity on $[t,T]$} \label{sec4.1}

The notion of the hedger's gained value $p^g_t (x,\cC), \, t \in [0,T)$ reduces to the classical no-arbitrage price obtained through replication in the linear setup  provided that the only cash flow of $A$ after time 0 is the terminal payoff, which equals $A_T-A_{T-}$.
Unfortunately, in a general nonlinear setup considered in this work, the financial interpretation of the hedger's gained
value at time $t>0$ is less transparent, since it depends on the hedger's initial endowment, the past cash flows of a contract
and the strategy implemented by the hedger on $[0,t]$. The following definition mimics Definition \ref{defprice1},
but focuses on the restriction of a contract $\cC $ to the interval $[t,T]$. Note that here the discounted wealth process is given
by equation \eqref{edd}. It is assumed in this section that $\cC^t$ can be replicated on $[t,T)$
at some initial cost $p^r_t$ at time $t$, in the sense of the following definition.

\begin{definition} \label{defprice2}
For a fixed $t \in [0,T]$, let $p^r_t$ be a ${\cal G}_{t}$-measurable random variable. If there exists
a trading strategy $(x_t, p^r_t,\phi^t,\cC^t) \in \Psi^{t,x_t} ( \sC)$ such that
\begin{equation} \label{xprice2}
\wt{V}_T(x_t, p^r_t,\phi^t,\cC^t)=x_t ,
\end{equation}
then $p^r_t=p^r_t (x_t,\cC^t)$ is called a {\it replication cost} at time $t$ for the contract $\cC^t$ relative to
the hedger's endowment $x_t$ at time $t$.
\end{definition}

As expected, Definition \ref{def:de7bb}, and thus also Proposition \ref{prop3.13}, can be extended to any date $t \in [0,T)$.

\begin{definition} \label{def:de7bc}
We say that a market model $\cM=(\cS,\cD, \cB, \sC ,\Psi^{t,x_t} (\sC ) )$ is
{\it regular on $[t,T]$ with respect to} ${\sC}$ if Assumption \ref{ass1.1w} is met and the following
properties hold for every contract $\cC \in {\sC}$ that can be replicated:
\begin{enumerate}[(i)]\addtolength{\itemsep}{-.5\baselineskip}
\item if $p_t \in \cG_t$ and there exists $(x_t,p_t,\phi^t,\cC^t) \in \Psi^{t,x_t} (\sC)$ satisfying
\begin{equation}\label{pri1y}
\bP \big( \widetilde{V}_T(x_t,p_t,\phi^t,\cC^t) \geq x_t\big)=1,
\end{equation}
then $p_t \geq p^r_t (x_t,\cC^t)$;
\item if $p_t  \in \cG_t $ and there exists  $(x_t,p_t,\phi^t,\cC^t) \in \Psi^{t,x_t} (\sC)$ such that
for some $D \in \cG_t$
\begin{equation}\label{pri2y}
\bP \big( \I_D \widetilde{V}_T(x_t,p_t,\phi^t,\cC^t) \geq \I_D x_t\big)=1
\end{equation}
and
\begin{equation}\label{pri3y}
\bP \big( \I_D \widetilde{V}_T(x_t,p_t,\phi^t,\cC^t) > \I_D x_t\big) > 0,
\end{equation}
then  $ \bP ( \I_D\, p_t > \I_D p^r_t (x_t,\cC^t) ) >0$.
\end{enumerate}
\end{definition}

Similarly as for $t=0$, if condition (i) holds, then condition (ii) is equivalent to the following condition:
\begin{itemize}
\item[(iii)] if $p_t \in \cG_t $ and there exists  $(x_t,p_t,\phi^t,\cC^t) \in \Psi^{t,x_t} (\sC)$ such
that for some event $D \in \cG_t$
\begin{equation}\label{pri2}
\bP \big( \I_D \widetilde{V}_T(x_t,p_t,\phi^t,\cC^t) \geq \I_D x_t\big)=1 ,
\end{equation}
then the following implication holds: if $\I_D p_t=\I_D p^r_t (x_t,\cC^t)$, then
\begin{equation}\label{pri3}
\bP \big( \I_D \widetilde{V}_T(x_t,p_t,\phi^t,\cC^t)=\I_D x_t\big) =1 .
\end{equation}
\end{itemize}

In the following extension of Proposition \ref{prop3.13}, we assume that the hedger's endowment $x_t$ at time $t$
is given and the contract $\cC$ has all cash flows on $(t,T]$ so that $\cC^t=\cC$. We now search for the hedger's fair price for $\cC$ at time $t$ assuming that a replication strategy exists. A closely related, but not identical, valuation problem is
studied in Section \ref{sec4.2} where we study the valuation at time $t$ of a contract originated at time 0.

\begin{proposition} \label{prop3.13n}
 Let a market model $\cM=(\cS,\cD, \cB, \sC , \Psi^{t,x_t} (\sC ) )$ be regular on $[t,T]$ with respect to the class ${\sC}$. Then for every contract $\cC \in \sC $ such that $\cC=\cC^t$ and which can be replicated on $[t,T]$, we have:
\begin{enumerate}[(i)]\addtolength{\itemsep}{-.6\baselineskip}
\item the replication cost $p^r_t(x_t,\cC)$ is unique,
\item $p^r_t(x_t,\cC)$ is the maximal fair price and the upper bound
for loss-generating costs, that is, $p^r_t(x_t,\cC)=\ovep^f_t(x_t,\cC)= \ovepl_t(x_t,\cC)$,
\item $p^r_t(x_t ,\cC)$ is the lower bound for superhedging costs and strict superhedging costs,
     that is, $p^r_t (x_t ,\cC)=\undp^s_t(x_t,\cC) =\undpa_t(x_t,\cC)$.
\end{enumerate}
\end{proposition}

The proof of Proposition \ref{prop3.13n} is very similar to the proof of Proposition \ref{prop3.13} and thus it is omitted.
According to Proposition \ref{prop3.13n}, in any market model regular on $[t,T]$, a replication cost is unique and it is the
maximal fair price at time $t$ for the hedger with the endowment $x_t$ at time $t$.

\section{Pricing by Replication in Regular Markets} \label{sec4}

In this section, it is assumed that a market model $\cM$ is regular. Our goal is to examine the properties of various kinds of prices for a contract $\cC $ under the assumption that it can be replicated on $[t,T]$ for every $t \in [0,T)$. Recall that for any fixed $t\in [0,T)$ we denote by $(x_t, p_t,\phi^t,\cC^t)$ a hedger's trading strategy starting at time $t$ with a $\cG_t$-measurable endowment $x_t$ when a contract $\cC^t$ is traded at a $\cG_t$-measurable price $p_t$. For simplicity, we focus on contracts $\cC=(A,\cX)$ with a constant maturity date $T$, which correspond to non-defaultable contracts of European style. To deal with the counterparty credit risk, it suffices to replace $A$ with the process $A^\sharp $ introduced in Section \ref{sec2.8.1} and a fixed maturity $T$ with the effective maturity of a contract at hand (for instance, by $T \wedge \tau $ where $\tau $ is the random time of the first default or, more generally, by the effective settlement date of a contract in the presence of the gap risk). Furthermore, in the case of contracts of an American style or game options, the effective settlement date is also affected by respective decisions of both parties to prematurely terminate the contract.

\subsection{Hedger's Ex-dividend Price at Time $t$} \label{sec4.2}

Since it was postulated that the initial endowment of the hedger at time 0 equals $x$, it is clear that any application of
Definition \ref{defprice2} should be complemented by a financial interpretation of the hedger's
endowment $x_t$ at time~$t$ and a relationship between the quantity $x_t$ and the hedger's initial endowment $x$ should be clarified.
One may consider several alternative specifications for $x_t$, which correspond to different financial interpretations
of valuation problems under study:
\begin{enumerate}
\item A first natural choice is to set $x_t=x_t(x):=x\cB_t(x)$ meaning that the hedger has not been dynamically hedging the contract between time 0 and time $t$ (this particular convention was adopted in Bielecki and Rutkowski \cite{BR2015} and Nie and Rutkowski \cite{NR2015,NR2016a}). Then the quantity $p^r_t(x_t,\cC^t)$ is the future fair price at time $t$ of the contract $\cC^t$, as seen at time 0 by the hedger with the endowment $x$ at time 0, who decided to postpone trading in $\cC$ to time $t$. This specification of $x_t$ could be convenient if one wishes to study, for instance, the issue of valuation at time 0 of the option with the expiration date $t$ written on the contract $\cC^t$.
\item Alternatively, one may postulate that the contract was entered into by the hedger at time 0 at his replication cost $p^r_0(x,\cC)$ and he decided to keep his position unhedged. In that case, the initial price, the cash flows, and the adjustments should be appropriately accounted for when computing the actual hedger's endowment $x_t$ at time~$t$ using a particular market model.
\item Next, one may assume that the contract was entered into by the hedger at time 0 at the price $p^r_0(x,\cC)$ and
was hedged by him on $[0,t]$ through a replicating strategy $ \phi $, as given by Definition \ref{defprice1}. Then the hedger's endowment at time $t>0$ equals $x_t=V_t(x,p^r_0(x,\cC),\phi,\cC)$ and it is natural to expect that the equality $p^r_t(x_t,\cC^t)=0$ will hold for all $t \in (0,T]$.
\item Finally, one can simply postulate that the hedger's endowment $x_t$ at time $t$ is exogenously specified.
Then Definition \ref{defprice2} reduces in fact to Definition \ref{defprice1} with essentially identical financial
interpretation: we define the hedger's initial price at time $t$ for the contract $\cC^t$  given
his initial endowment $x_t$ at time $t$. Of course, under this convention no relationship between
the quantities $x$ and $x_t$ exists.
\end{enumerate}

In the next definition, we apply Definition \ref{defprice2} to the first of the above-mentioned specifications
of $x_t$, that is, we set $x_t=x_t(x) := x \cB_t(x)$. As in Definition \ref{defprice2},  the discounted wealth is given by \eqref{edd}.

\begin{definition} \label{defprice3} {\rm
For a fixed $t \in [0,T]$, assume that $p^e_t$ is a ${\cal G}_{t}$-measurable random variable.
If there exists a trading strategy $(x_t(x) , p^e_t,\phi^t,\cC^t) \in \Psi^{t, x_t(x)}(\sC)$ such that
\begin{equation} \label{xprice3}
\wt{V}_T(x_t(x),p^e_t,\phi^t,\cC^t)=x_t(x),
\end{equation}
then $p^e_t=p^e_t (x,\cC^t)$ is called the {\it hedger's ex-dividend price} at time $t$ for the contract $\cC^t $.}
\end{definition}

Note that $p^r_0(x,\cC)=p^g_0(x,\cC)=p^e_0(x,\cC)$ and $p^g_T (x,\cC)=p^e_T(x,\cC^T) =0$. The price given in Definition \ref{defprice3} is suitable when dealing with derivatives written on the contract $\cC^t$ as an underlying asset or, simply, when the hedger would like to compute the future fair price for $\cC^t$ without actually entering into the contract at time 0. Furthermore, it can also be used to define a proxy for the exit price of the contract $\cC^t $.

It is natural to ask whether the processes $p^g_t(x,\cC)$ and $p^e_t(x,\cC^t)$ coincide for all $t \in [0,T]$. We will argue that the equality $p^g_t(x,\cC)=p^e_t(x,\cC^t)$ is valid for every $t$ when the valuation problem is local, but it is not necessarily true for a global valuation problem (see Proposition \ref{prop5.2}). The reason is that in the former case the two processes satisfy identical BSDE, whereas in the latter case one obtains a generalized BSDE for the former process and a classical BSDE for the latter. It is also intuitively clear that in the case of the global valuation problem the two processes will typically differ, since the value of $p^e_t(x,\cC^t)$ is clearly independent of the hedger's trading strategy on $[0,t]$, as opposed to $p^g_t(x,\cC)$, which may depend on the whole history of his trading. Since most valuation problems encountered in the existing literature
have a local nature, to the best of our knowledge, this particular issue was not yet examined by other authors.

\subsection{Exit Price}           \label{sec4.3}

The issue of valuation at time $t$ is also important when we ask the following question: at which {\it exit price} a contract entered into by the hedger at time 0 can be unwound by him at time $t$.  In theory, the easiest way to unwind at time $t$ a contract originated at time 0 at the price $p^r_0(x,\cC)=p^g_0(x,\cC)$ would be to transfer all obligations associated with the remaining part of the contract on $[t,T]$ to another trader. Accordingly, the gained value $p^g_t (x,\cC)$ for $0< t <T$ would be the amount of cash, which the hedger would be willing to pay to another trader who would then take the hedger's position from time $t$ onwards. This argument leads to the following definition of the hedger's exit price.

\begin{definition} \label{defprice4}
The \textit{exit price} for the contract $\cC$ entered into at time 0 by the hedger with the initial endowment $x$ is given by the equality $p^m_t(x,\cC) := - p^g_t(x,\cC)$ for every $t \in [0,T]$.
\end{definition}

Definition \ref{defprice4} reflects the market practice where the exit price is related to the concept of unwinding the existing
contract at time $t$ at its current market value. Note, in particular, that the equality $p^g_t(x,\cC)+p^m_t(x,\cC)= 0 $
holds for every $t \in [0,T]$, meaning that the net value of the fully hedged position is null at any moment when the contract is
marked to market. Unfortunately, a practical implementation of Definition \ref{defprice4} could prove difficult, especially when dealing with a global valuation problem, since it would require to keep track of past cash flows from the contract and gains from the hedging strategy (of course, provided the hedging strategy was implemented by the hedger). We thus contend that the proxy for the exit value $p^m_t (x,\cC) := - p^e_t(x,\cC)$ could be more suitable for most practical purposes when facing a global valuation problem. Since the equality $p^g_t(x,\cC)=p^e_t(x,\cC)$ holds when the valuation problem is local, the issue of the choice of a marking to market convention is clearly immaterial in that case.

\subsection{Offsetting Price}             \label{sec4.4}

Assume that the hedger is unable to transfer to another trader his existing position in the contract $\cC $ entered into at time 0. Then he may attempt to offset his future obligations associated with $\cC$ by taking the opposite position in an ``equivalent'' contract. In the next definition, we postulate that the hedger attempts to unwind his long position in $\cC=(A,\cX)$ at time $t$ by entering into an \textit{offsetting contract} $(-A^t,\cY^t)$. It is also assumed here that he liquidates at time $t$ the replicating portfolio for $\cC $ so that his endowment at time $t$ equals $\wh V_t(x,\cC) := V_t(x,p^r_0(x,\cC), \wh{\phi} ,\cC)$, where $\wh{\phi}$ is a replicating strategy for $\cC$ on $[0,T]$.

\begin{definition} \label{defprice5} {\rm
For a fixed $t \in [0,T]$, let $p^o_t$ be a ${\cal G}_{t}$-measurable random variable.
 If there exists an admissible trading strategy $(\wh V_t(x,\cC),p^o_t,\phi^t,0,\cX^t+\cY^t)$
on $[t,T]$ such that
\begin{equation} \label{xprice5}
\wt{V}_T( \wh V_t(x,\cC),p^o_t,\phi^t,0,\cX^t+\cY^t)=x,
\end{equation}
then $p^o_t= p^o_t(x,\cC^t) $ is called the {\it offsetting price} of $\cC^t=(A^t ,\cX^t )$ through $(-A^t ,\cY^t )$ at time $t$.}
\end{definition}

Definition  \ref{defprice5} takes into account the fact that the cash flows of $A^t$ and $-A^t$
(and perhaps also some cash flows associated with the corresponding adjustments $\cX^t$ and $\cY^t$) offset one another and thus only the residual cash flows need to be accounted for when computing the price at which the contract $\cC $ can be unwound by the hedger at time $t$.

In the special case where the equality $\cX^t+\cY^t =0$ holds for all $t \in [0,T]$ (that is, the offsetting is perfect)
we obtain the equality $p^o_t (x,\cC^t)=- p^g_t(x,\cC)$ since then, in view of \eqref{xprice1}, we have that
$$
 \wh V_t (x,\cC)+  p^o_t(x,\cC^t)=p^g_t (x,\cC)+ x \cB_t(x)+ p^o_t(x,\cC^t)=x \cB_t(x),
$$
where $x \cB_t(x)$ is the cash amount that is required to replicate the null contract $(0,0)$ on $[t,T]$.

\section{A BSDE Approach to Nonlinear Pricing}      \label{sec6}

For each definition of the price, one may attempt to derive the corresponding backward stochastic differential equation (BSDE) by combining their definitions with dynamics \eqref{wealth3-1} of the hedger's wealth or, even more conveniently, with dynamics \eqref{wealthdyna} of his discounted wealth. Subsequently, each particular valuation problem can be addressed by solving a suitable BSDE. In addition, one may use a BSDE approach to establish the regularity property of a market model at hand. To this end, one may either use the existing (strict) comparison theorems for solutions to BSDEs or, if needed, to establish original results. In this work, we are not analyzing these issues in detail since our goal is
merely to derive BSDEs for the gained value and the ex-dividend price in a particular setup with trading adjustments and to emphasize the
difference between local and global pricing problems. We conclude the paper by outlining important issues related to a BSDE approach to the counterparty credit risk.

\subsection{BSDE for the Gained Value} \label{sec6.1}

We will first derive a generic BSDE associated with the hedger's gained value $p^g(x,\cC)$ introduced in Definition \ref{defprice1}. For concreteness, we focus on the case of $x \geq 0$ so that the equality $\cB(x)=B^l$ is valid.  To simplify the presentation, we also postulate that $B^{i,l}=B^{i,b}=B^i$ for $i=1,2,\ldots ,d$ and we consider trading strategies satisfying
the funding constraint $\wh \xi^i_t S^i_t+\wh \psi^{i}_t B^{i}_t=0$ for all $i=1,2, \dots , d$ and every $t \in [0,T]$.
Of course, the case where $x<0$ can be dealt with in an analogous manner.
Recall that (see \eqref{sicld})
\[
\wt S^{i,cld}_t(x) := (\cB_t(x))^{-1}S_t^i+\int_0^t (\cB_u(x))^{-1}\, dD^i_u.
\]
It is worth recalling that $p^r_0(x,\cC)=p^g_0(x,\cC)$ (see Definition \ref{defprice1}).

\begin{lemma} \label{lemm5.1}
Assume that a trading strategy $(x,p^r_0, \wh \varphi,\cC) \in \Psi^{0,x} ( \sC )$ replicates a contract $\cC$.
Then the processes $\wh Y := \wt{V}^l (x, p^r_0, \wh \phi,\cC) :=(B^{0,l})^{-1} V(x, p^r_0, \wh \phi,\cC)$ and
$\wh Z^i  := \wt B^{i,l} \wh \xi^i$ satisfy the BSDE
\begin{equation}\label{zdynam}
\begin{aligned}
d\wh Y_t=&  \sum_{i=1}^{d} \wh Z_t^i\,d\wh S^{i,cld}_t  - (B^{0,b}_t)^{-1} \bigg( \wh Y_t B^{0,l}_t+\sum_{k=1}^n \alpha^k_t X^k_t \bigg)^- dB_t^{0,b,l}  \\ &+(B^{0,l}_t)^{-1}\, dA_t - \sum_{k=1}^{n} \wh X_t^k\,d\wt \beta_t^{k,l}+\sum_{k=1}^{n} (1-\alpha_t^k) X^k_t\,d (B^{0,l}_t)^{-1}
\end{aligned}
\end{equation}
with the terminal condition $\wh Y_T=x$.
\end{lemma}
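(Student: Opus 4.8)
The plan is to start from the discounted wealth dynamics and simplify each term using the standing assumptions of the lemma, reserving the nonlinear cash-account term for last. The final identity to be matched is \eqref{zdynam}, so I would keep track of which of its terms each simplification produces.

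First I would specialize the discounted wealth dynamics \eqref{wealthdyna} to the replicating strategy $(x,\wh p_0,\wh\phi,\cC)$. Since $x \geq 0$ we have $\cB(x)=B^{0,l}$, so $\wh Y = \wt V^l = (B^{0,l})^{-1}V$ and the two indicator terms of \eqref{wealthdyna} collapse to the single summand $\wh\psi^{0,b}_t\,dB^{0,b,l}_t$. Because $B^{i,l}=B^{i,b}=B^i$, the two funding-account integrals merge into $\sum_i \wh\psi^i_t\,d\wt B^i_t(x)$ with $\wh\psi^i=\wh\psi^{i,l}+\wh\psi^{i,b}$ and $\wt B^{i,l}(x)=\wt B^i(x)$; moreover $\wt\beta^k(x,\cX)=(B^{0,l})^{-1}\beta^k=\wt\beta^{k,l}$ and $(\cB(x))^{-1}\,dA=(B^{0,l})^{-1}\,dA$, which already reproduces the last three terms of \eqref{zdynam}.

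Next I would apply the splitting \eqref{newdd}, namely $d\wt S^{i,cld}_t(x)=\wt B^i_t(x)\,d\wh S^{i,cld}_t+\wh S^i_t\,d\wt B^i_t(x)$, to the risky-asset term $\wh\xi^i_t\,d\wt S^{i,cld}_t(x)$. The first piece yields exactly $\wh Z^i_t\,d\wh S^{i,cld}_t$ with $\wh Z^i=\wt B^i(x)\wh\xi^i=\wt B^{i,l}\wh\xi^i$, while the second piece combines with the merged funding-account integral into $\sum_i(\wh\xi^i_t\wh S^i_t+\wh\psi^i_t)\,d\wt B^i_t(x)$. Since $\wh S^i=(B^i)^{-1}S^i$, the coefficient equals $(B^i_t)^{-1}(\wh\xi^i_t S^i_t+\wh\psi^i_t B^i_t)$, which is zero by the funding constraint, so this entire family of terms disappears. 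At this point the dynamics agrees with \eqref{zdynam} apart from the cash term, which still reads $\wh\psi^{0,b}_t\,dB^{0,b,l}_t$.

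The hard part, and the only genuinely nonlinear step, will be to recast $\wh\psi^{0,b}$ in the stated form. To this end I would use the wealth identity: under the funding constraint with $B^{i,l}=B^{i,b}=B^i$, the risky-asset and funding contributions to the portfolio value \eqref{wea1} cancel, so \eqref{wealth2} collapses to $V_t=\wh\psi^{0,l}_t B^{0,l}_t+\wh\psi^{0,b}_t B^{0,b}_t-\sum_k\alpha^k_t X^k_t$. Multiplying $\wh Y_t=(B^{0,l}_t)^{-1}V_t$ through by $B^{0,l}_t$ and rearranging gives
\[
\wh Y_t B^{0,l}_t+\sum_{k=1}^n\alpha^k_t X^k_t=\wh\psi^{0,l}_t B^{0,l}_t+\wh\psi^{0,b}_t B^{0,b}_t .
\]
Invoking the structural constraints $\wh\psi^{0,l}_t\geq 0$, $\wh\psi^{0,b}_t\leq 0$, $\wh\psi^{0,l}_t\wh\psi^{0,b}_t=0$ together with the strict positivity of $B^{0,l},B^{0,b}$, the right-hand side is a single signed quantity whose negative part is exactly $-\wh\psi^{0,b}_t B^{0,b}_t$. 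Taking negative parts yields $\wh\psi^{0,b}_t=-(B^{0,b}_t)^{-1}\big(\wh Y_t B^{0,l}_t+\sum_k\alpha^k_t X^k_t\big)^-$, and substituting this into $\wh\psi^{0,b}_t\,dB^{0,b,l}_t$ produces precisely the nonlinear driver of \eqref{zdynam}. Finally, the terminal condition $\wh Y_T=x$ is immediate from the replication requirement $\wt V_T(x,\wh p_0,\wh\phi,\cC)=x$ in Definition \ref{defprice1}.
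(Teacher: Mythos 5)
Your proposal is correct and follows essentially the same route as the paper: specialize \eqref{wealthdyna} under $x\geq 0$ and $B^{i,l}=B^{i,b}=B^i$, use \eqref{newdd} together with the funding constraint $\wh\xi^i_tS^i_t+\wh\psi^i_tB^i_t=0$ to reduce the risky-asset terms to $\wh Z^i_t\,d\wh S^{i,cld}_t$, recover $\wh\psi^{0,b}$ from \eqref{wealth2} and the sign/complementarity conditions (the paper's \eqref{xwealth2}--\eqref{xwealth3}), and read off the terminal condition from replication. If anything, your write-up is slightly more explicit than the paper's about the exact point where the funding constraint eliminates the residual $d\wt B^i_t(x)$ terms and the extra terms inside the negative part, which the paper handles somewhat tersely in passing from \eqref{ydynam} to \eqref{zdynam}.
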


\begin{proof}
Under the present assumptions,  \eqref{wealthdyna} and \eqref{newdd} imply
\begin{align} \label{xdynam}
 d\wt V^l_t(x,p^r_0, \wh \varphi,\cC)=&\sum_{i=1}^{d} \widehat \xi_t^i \wt B_t^{i,l}\,d \widehat S_t^{i,cld}+\widehat \psi_t^{0,b}\,dB_t^{0,b,l}+ (B^{0,l}_t)^{-1}\, dA_t - \sum_{k=1}^{n} \wh X_t^k d \wt \beta_t^{k,l} \nonumber \\ &+\sum_{k=1}^{n} (1-\alpha_t^k ) X^k_t\,d(B^{0,l}_t)^{-1},
\end{align}
where $\wt B^{i,l} := (B^{0,l})^{-1}B^{i},\, B^{0,b,l} := (B^{0,l})^{-1}B^{0,b},\, \wt \beta^{k,l} := (B^{0,l})^{-1} \beta^k $.
Equation \eqref{wealth2} and conditions $\wh{\psi}^{0,l} \geq 0, \wh{\psi}^{0,b} \leq 0$ and $\wh{\psi}^{0,l} \wh{\psi}^{0,b}=0$ yield
\begin{equation} \label{xwealth2}
 \wh \psi^{0,l}_t=(B^{0,l}_t)^{-1} \bigg( V_t(x,p^r_0,\wh \phi,\cC)+\sum_{k=1}^n \alpha^k_t X^k_t \bigg)^+
\end{equation}
and
\begin{equation}  \label{xwealth3}
\wh \psi^{0,b}_t=-(B^{0,b}_t)^{-1} \bigg(V_t(x,p^r_0,\wh \phi,\cC)+\sum_{k=1}^n \alpha^k_t X^k_t \bigg)^-.
\end{equation}
Note that the process $\wh \psi^{0,l}$ does not appear in \eqref{xdynam} and the process $\wh \psi^{0,b}$ can
also be eliminated from \eqref{xdynam} by using \eqref{xwealth3}. If we set $\wh Y := \wt{V}^l (x, p^r_0, \wh \phi,\cC)$,  then \eqref{xdynam} can be represented as the BSDE
\begin{equation} \label{ydynam}
\begin{aligned}
d\wh Y_t=&\sum_{i=1}^{d}\widehat \xi_t^i \wt B_t^{i,l}\,d \widehat S_t^{i,cld}- (B^{0,b}_t)^{-1} \bigg( \wh Y_t B^{0,l}_t - \sum_{i=1}^{d}\wh \xi^i_t S^i_t-\sum_{i=1}^{d} \wh \psi^{i}_t B^{i}_t+\sum_{k=1}^n \alpha^k_t X^k_t \bigg)^- dB_t^{0,b,l} \\
&+(B^{0,l}_t)^{-1}\,dA_t-\sum_{k=1}^{n}\wh X_t^k\,d\wt \beta_t^{k,l}+\sum_{k=1}^{n}(1-\alpha_t^k) X^k_t\,d(B^{0,l}_t)^{-1}
\end{aligned}
\end{equation}
with the terminal condition $\wh Y_T=\wt{V}^l_T(x, p^r_0, \wh \phi,\cC)=x$.
In view of equality \eqref{newdd}, BSDE \eqref{ydynam} further simplifies to \eqref{zdynam}.
\end{proof}

In the next result, we focus on a market model satisfying regularity conditions introduced in Definition \ref{def:de7bc}.
From the regularity of a model, it follows that the hedger's gained value $p^g_t(x,\cC)$ is unique for each fixed $t \in [0,T]$.
However, this does not suffice to define the process $p^g(x,\cC)$ and thus in the next result we will make assumptions
regarding BSDE \eqref{zdynam}. First, we postulate that for a given $x \geq 0$ and any contract $\cC \in \sC$ there
exists a unique solution $(\wh Y, \wh Z)$ to \eqref{zdynam} in a suitable space of stochastic processes. Second,
we assume that BSDE \eqref{zdynam} enjoys the following variant of the strict comparison property.

\begin{definition}
The {\it strict comparison} property holds for the BSDE \eqref{zdynam} if  for any contract $\cC \in \sC$ if $(\wh Y^1 , \wh Z^1)$ and $(\wh Y^2 , \wh Z^2)$ are solutions with $\cG_T$-measurable terminal conditions $\xi^1_T \geq \xi^2_T$, respectively, then the equality $\I_D Y^1_t=\I_D Y^2_t$ for some $t \in [0,T)$ and some $D \in \cG_t$ implies that $\I_D \xi^1_T=\I_D \xi^2_T$.
\end{definition}

It is also important to note that one needs to examine the manner in which the inputs in BSDE \eqref{zdynam} (that is, the stochastic processes introduced in Assumption \ref{assumpassets}) may possibly depend on the unknown processes $\wh Y$ and $\wh Z$.

According to Definition \ref{locglob}, the problem examined in this section will be an example of a local valuation problem if we
postulate that $X^k$ and $\beta^k$ satisfy $X^k_t=v^k(t,\wh Y_t ,\wh Z_t)$ and $d\beta_t^{k}= w^{k}(t,\wh Y_t ,\wh Z_t)\, dt$ for some $\mathbb{G}$-progressively measurable mappings $v^k,w^k : \Omega \times [0,T] \times \bR^{d+1}\to \bR $ for every $k=1,2,\dots ,n$. The same valuation problem becomes a global one if $X^k_t=\bar{v}^k(t,\wh Y_{\cdot } ,\wh Z_{\cdot})$ and $d\beta_t^{k}= \bar{w}^k(t,\wh Y_{\cdot} ,\wh Z_{\cdot})\, dt$ for some $\mathbb{G}$-non-anticipative functionals $\bar{v}^k,\bar{w}^k : \Omega \times [0,T] \times \cD ([0,T], \bR^{d+1}) \to \bR $ for every $k=1,2,\dots ,n$, where $\cD ([0,T],\bR^{d+1})$ is the space of $\bR^{d+1}$-valued, $\mathbb{G}$-adapted, c\`adl\`ag processes on $[0,T]$.

From Lemma \ref{lemm5.1}, we deduce that a local valuation problem can be formulated it terms of a classical BSDE. In contrast,
the situation where the inputs depend on the past history of the processes is harder to address, since a global valuation problem
requires to study a generalized BSDE with non-anticipative functionals. Since both situations are covered by Proposition \ref{prop5.1}, we refer the reader to Cheridito and Nam \cite{CN2015} and Zheng and Zong \cite{ZZ2017} for the existence and uniqueness results for generalized BSDEs.
It is worth noting that, to the best of our knowledge, so far no results on the strict comparison property for generalized BSDE are available.
This should be contrasted with the theory of classical BSDEs where the strict comparison theorem plays an important role.
In the next result, we suppose that the dynamics of the wealth process given by \eqref{xdynam} and \eqref{xwealth3} are such that
Assumption \ref{ass1.1w} is satisfied.

\begin{proposition} \label{prop5.1}
Assume that the BSDE \eqref{zdynam} has a unique solution $(\wh{Y},\wh{Z})$ for any  contract
$\cC \in \sC$ and the strict comparison property for solutions to \eqref{zdynam} holds. Then the following assertions are valid:
\begin{enumerate}[(i)]\addtolength{\itemsep}{-.6\baselineskip}
\item the market model is regular on $[t,T]$ for every $t \in [0,T]$;
\item the hedger's gained value satisfies $p^g(x,\cC)= B^{0,l}(\wh{Y} -x)$ where $(\wh{Y},\wh{Z})$ is a solution to BSDE \eqref{zdynam} with the terminal condition $\wh Y_T=x$;
\item the unique replicating strategy $\wh \phi $ for $\cC $ satisfies $\wh \xi^i =(\wt B^{i,l})^{-1}\wh Z^i$ and
the cash components $\wh \psi^{0,l}$ and $\psi^{0,b}$ are given by \eqref{xwealth2} and \eqref{xwealth3}, respectively,
with $V(x,p^r_0,\wh \phi,\cC)$ replaced by $B^{0,l}\wh Y$.
\end{enumerate}
\end{proposition}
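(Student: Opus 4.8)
The plan is to dispose of parts~(ii) and~(iii) as direct corollaries of Lemma~\ref{lemm5.1} together with the uniqueness hypothesis, and to reserve the real effort for part~(i), where the two regularity conditions of Definition~\ref{def:de7bc} must be matched against the comparison and strict comparison properties of \eqref{zdynam}. For part~(ii) I would invoke Lemma~\ref{lemm5.1}: a replicating strategy produces $\wh Y=(B^{0,l})^{-1}V(x,\wh p_0,\wh\phi,\cC)$ solving \eqref{zdynam} with $\wh Y_T=x$, and uniqueness makes $\wh Y$ independent of the chosen replicating strategy. Since $x\geq 0$ gives $\cB(x)=B^{0,l}$, substituting $V=B^{0,l}\wh Y$ into the definition \eqref{xprice1} of the gained value yields
\[
\wh p_t(x,\cC)=V_t-x\cB_t(x)=B^{0,l}_t\wh Y_t-xB^{0,l}_t=B^{0,l}_t(\wh Y_t-x).
\]
Part~(iii) is then read off from the interior of the proof of Lemma~\ref{lemm5.1}: the identification $\wh Z^i=\wt B^{i,l}\wh\xi^i$ gives $\wh\xi^i=(\wt B^{i,l})^{-1}\wh Z^i$, the funding constraint fixes $\wh\psi^i$, and $\wh\psi^{0,l},\wh\psi^{0,b}$ follow from \eqref{xwealth2}--\eqref{xwealth3} with $V$ replaced by $B^{0,l}\wh Y$; uniqueness of $(\wh Y,\wh Z)$ promotes this to uniqueness of the replicating strategy.

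For part~(i) the first step is to note that \eqref{zdynam} holds for \emph{every} admissible strategy obeying the funding constraint, not only for replicating ones: the derivation in Lemma~\ref{lemm5.1} uses only \eqref{wealthdyna}, the funding constraint, and the elimination of $\psi^{0,b}$ through \eqref{xwealth3}, none of which involves the terminal value. Thus all such strategies on $[t,T]$ share the same driver and are distinguished only by their terminal data, while the replication cost $\wh p_t(x_t,\cC^t)$ is the one attached to the terminal condition $\wt V_T=x_t$. The second step is the dictionary between wealth and price at the initiation date: evaluating \eqref{eq:selfFin1} at $u=t$ gives $V_t=x_t+p_t$, so in the BSDE variable $\wh Y_t=(B^{0,l}_t)^{-1}(x_t+p_t)$ is a strictly increasing affine function of $p_t$ (with $\wh p_t$ in place of $p_t$ for the replicating solution). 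Consequently any ordering or coincidence of the two solutions at time $t$ is exactly an ordering or coincidence of $p_t$ with $\wh p_t$, and the superhedging condition $\wt V_T\geq x_t$ becomes $\wh Y_T\geq\wh Y^{\mathrm{repl}}_T$.

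With this dictionary I would verify the two conditions of Definition~\ref{def:de7bc}. Condition~(i) is immediate: a superhedging strategy has $\wh Y_T\geq\wh Y^{\mathrm{repl}}_T$, so comparison yields $\wh Y_t\geq\wh Y^{\mathrm{repl}}_t$, i.e. $p_t\geq\wh p_t(x_t,\cC^t)$. For the strict part I would use the equivalent form~(iii): if $\I_D p_t=\I_D\wh p_t$ then $\I_D\wh Y_t=\I_D\wh Y^{\mathrm{repl}}_t$, and the strict comparison property—stated precisely for the coincidence of two solutions on a $\cG_t$-set at an intermediate time—forces $\I_D\wh Y_T=\I_D\wh Y^{\mathrm{repl}}_T$, which is \eqref{pri3}. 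Since nothing in the argument is tied to a particular $t$, regularity on $[t,T]$ follows for every $t\in[0,T)$.

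The step I expect to require genuine care is the globalization needed before strict comparison can be applied in condition~(iii): its hypothesis is only the localized inequality $\I_D\wt V_T\geq\I_D x_t$ (equivalently $\I_D\wh Y_T\geq\I_D\wh Y^{\mathrm{repl}}_T$), whereas the comparison theorem compares solutions whose terminal data are ordered on all of $\Omega$. I would handle this by passing to the glued terminal value $\I_D\wh Y_T+\I_{D^c}\wh Y^{\mathrm{repl}}_T$ and checking that $\I_D\wh Y+\I_{D^c}\wh Y^{\mathrm{repl}}$ is again a solution of \eqref{zdynam}: because $D\in\cG_t$ and the driver acts pointwise in $\omega$, the glued process inherits the dynamics and, by uniqueness, is the solution with the glued terminal datum, so its values on $D$ agree with those of $\wh Y$ while its terminal datum dominates $\wh Y^{\mathrm{repl}}_T$ on all of $\Omega$. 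For a local pricing problem this pointwise locality of the driver is transparent; for a global problem it must be argued from the $\cG_t$-measurability of $D$ together with the non-anticipating structure of $h^k,h^{k,l}$, and this is the only point where the verification goes beyond routine BSDE manipulation.
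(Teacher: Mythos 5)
Your proof is correct and is essentially the argument the paper intends: parts (ii) and (iii) are handled exactly as in the paper's proof (Lemma \ref{lemm5.1}, uniqueness of the BSDE solution, and the computation $\wh p_t(x,\cC) = V_t - x\cB_t(x) = B^{0,l}_t(\wh Y_t - x)$), while for part (i) the paper offers only the one-line assertion that regularity is ``clear'' from existence, uniqueness and strict comparison, so your detailed verification --- the dictionary $\wh Y_t = (B^{0,l}_t)^{-1}(x_t+p_t)$, the observation that \eqref{zdynam} governs every admissible strategy satisfying the funding constraint, and in particular the gluing of $\I_D \wh Y_T + \I_{D^c}\wh Y^{\mathrm{repl}}_T$, which is genuinely needed because the paper's strict comparison property presupposes globally ordered terminal data whereas Definition \ref{def:de7bc}(iii) orders them only on $D$ --- supplies precisely the steps the paper leaves implicit. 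One caveat, which applies to the paper's assertion just as much as to your write-up: verifying condition (i) of Definition \ref{def:de7bc} uses the non-strict comparison property, which is not literally contained in the proposition's hypothesis of strict comparison as the paper defines it, and must be read as tacitly assumed (consistently with the remark following Definition \ref{def:de7bb}).
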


\begin{proof}
In view of Definition \ref{def:de7bc}, it is clear that the existence, uniqueness and the strict comparison property
for the BSDE \eqref{zdynam} imply that the market model is regular on $[t,T]$ for every $t \in [0,T]$.
To establish (ii), we recall that the regularity of a model implies that the hedger's gained value $p^g_t(x,\cC)$
is unique. Moreover, we also know that $p^g(x,\cC)$ satisfies for every $t \in [0,T]$  (see \eqref{xprice1})
$$
p^g_t(x,\cC)=V_t(x,p^g_0(x,\cC),\wh \phi,\cC)-x\cB_t(x)=B^{0,l}_t\wh{Y}_t-x\cB_t(x)=B^{0,l}_t(\wh{Y}_t-x),
$$
which establishes the asserted equality $ p^g(x,\cC)=B^{0,l} (\wh Y-x )$. In particular, the hedger's replication cost $p^r_0(x)$ satisfies $p^r_0(x)=\wh Y_0-x$ for any fixed initial endowment $x \geq 0$. Finally, part (iii) is an immediate consequence of Lemma~\ref{lemm5.1}.
\end{proof}

Of course, one needs to check for which models the assumptions of Proposition \ref{prop5.1} are satisfied.
For general results regarding BSDEs driven by one- or multi-dimensional continuous martingales, the reader is referred to
Carbone et al. \cite{CFS2008}, El Karoui and Huang \cite{EH1997} and Nie and Rutkowski \cite{NR2016b}
and the references therein. Typically, a suitable variant of the Lipschitz continuity of a generator to a BSDE is sufficient to
guarantee the desired properties of its solutions. Several instances of nonlinear market models with BSDEs satisfying
the comparison property were studied by Nie and Rutkowski \cite{NR2015,NR2016a,NR2017}, although the concept of a regular model was not formally stated therein. In particular, they analyzed contracts with an endogenous collateral, meaning that an adjustment process
$X^k$ explicitly depends on a solution $\wh Y$ (or even on solutions to the valuation problems for the hedger and the counterparty).

Let us finally mention that since the model examined in this section is a special case of the model studied in Sections \ref{sec3.5} and \ref{sec3.6}, it follows from  Proposition \ref{propoarbi1} that to ensure that the model is arbitrage-free for the trading desk, it suffices to assume that there exists a probability measure $\bQ$, which is equivalent to $\bP$ on $(\Omega , \cG_T)$ and such that the processes $\widehat{S}^{i,\textrm{cld}},\, i=1,2,\ldots, d$ given by \eqref{pri2} are $\bQ$-local martingales.
This assumption is also convenient if one wishes to prove the existence and uniqueness result for BSDE \eqref{zdynam}.

\subsection{BSDE for the Ex-dividend Price}           \label{sec6.2}

Our next goal is to derive the BSDE for the ex-dividend price $p^e(x,\cC)$ introduced in Definition~\ref{defprice3}.
 As in Section \ref{sec6.1}, we work under the assumption that $x \geq 0$. Recall that, for a fixed $t$, the hedger's ex-dividend price is implicitly given by the equality $\wt {V}^l_T(x_t(x),p^e_t,\phi^t,\cC^t)=x_t(x)$
where $x_t(x)=x B^l_t$ and the discounting is done using the process $\cB^t_{\cdot}(x_t(x))$ given by  \eqref{eq:discFactc}.
We henceforth assume that the valuation problem is local. This assumption is essential for validity of Lemma \ref{lemm5.2} and Proposition \ref{prop5.2}, so it cannot be relaxed.

\begin{lemma} \label{lemm5.2}
Assume that a trading strategy $(x_t(x),p^e_t,\varphi^t,\cC^t) \in \Psi^{t,x_t(x)}( \sC )$ replicates a contract $\cC^t$
on $[t,T]$. Then the processes $\bar Y_u := \wt {V}_u(x_t(x),p^e_t,\phi^t,A^t,\mathcal{X}^t)$ and
$\bar Z^i_u:=\wt B^{i,l}_u(\xi^t_u)^i$, $ u\in [t,T]$, satisfy the following BSDE, for all $u \in [t,T]$
\begin{equation}\label{zdyname}
\begin{aligned}
d\bar Y_u=&\sum_{i=1}^{d} \bar Z_u^i\,d\wh S^{i,cld}_u  - (B^{0,b}_u)^{-1} \bigg( \bar Y_u B^{0,l}_u+\sum_{k=1}^n \alpha^k_u X^k_u \bigg)^- dB_u^{0,b,l}\\ &+(B^{0,l}_u)^{-1}\, dA_u - \sum_{k=1}^{n} \wh X_u^k\,d\wt \beta_u^{k,l}+\sum_{k=1}^{n} (1-\alpha_u^k) X^k_u\,d (B^{0,l}_u)^{-1}
\end{aligned}
\end{equation}
with the terminal condition $\bar Y_T=x$.
\end{lemma}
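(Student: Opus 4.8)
The plan is to reproduce, on the interval $[t,T]$, the argument that established Lemma~\ref{lemm5.1}, with the contract $A$ replaced by its restriction $A^t$ and the initial data at time~$0$ replaced by the endowment $x_t(x) = x B^{0,l}_t$ at time~$t$. Since $x \geq 0$ we have $\cB(x) = B^{0,l}$, so, following the convention of Lemma~\ref{lemm5.1}, I would set $\bar Y_u := (B^{0,l}_u)^{-1} V_u(x_t(x),p^e_t,\phi^t,\cC^t)$ and $\bar Z^i_u := \wt B^{i,l}_u (\xi^t_u)^i$, and show that $(\bar Y,\bar Z)$ solves \eqref{zdyname} on $[t,T]$ with $\bar Y_T = x$.

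First I would record the discounted wealth dynamics on $[t,T]$. The wealth dynamics \eqref{wealth3-1} already holds for all $u \in [t,T]$ with trading initiated at time~$t$, so applying the It\^o integration by parts formula exactly as in the derivation of \eqref{wealthdyna}, invoking \eqref{newdd}, and using the funding constraint $\xi^i_u S^i_u + \psi^i_u B^i_u = 0$ yields the verbatim analogue of \eqref{xdynam}. Here the only bookkeeping point is that, because $A^t_u = A_u - A_t$ with $A_t$ being $\cG_t$-measurable, the cash-flow term equals $(B^{0,l}_u)^{-1}\, dA^t_u = (B^{0,l}_u)^{-1}\, dA_u$ on $(t,T]$, so it has exactly the same differential form as in \eqref{xdynam}.

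Next I would eliminate the unsecured cash positions. From the portfolio value equation \eqref{wealth2} together with the constraints $\psi^{0,l}_u \geq 0$, $\psi^{0,b}_u \leq 0$ and $\psi^{0,l}_u \psi^{0,b}_u = 0$, the components $\psi^{0,l}$ and $\psi^{0,b}$ are given by the analogues of \eqref{xwealth2}--\eqref{xwealth3}; in particular
\[
\psi^{0,b}_u = -(B^{0,b}_u)^{-1}\Big( V_u(x_t(x),p^e_t,\phi^t,\cC^t) + \sum_{k=1}^n \alpha^k_u X^k_u \Big)^-.
\]
Since $\psi^{0,l}$ does not enter the discounted dynamics and $\psi^{0,b}$ can be substituted through this identity, replacing $V_u$ by $B^{0,l}_u \bar Y_u$ turns the analogue of \eqref{xdynam} into the closed equation \eqref{zdyname}, exactly as \eqref{ydynam} collapsed to \eqref{zdynam}. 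The terminal condition then follows from the replication requirement \eqref{xprice3}: unwinding $\wt V_T = x_t(x)$ with the discounting \eqref{eq:discFactc} (which is in the $\{x_t\geq 0\}$ branch here since $x_t(x) = x B^{0,l}_t \geq 0$) forces $V_T = x B^{0,l}_T$, whence $\bar Y_T = (B^{0,l}_T)^{-1} V_T = x$.

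The one genuinely new point, and where I expect the care is needed, is the role of the locality assumption. The computation above is purely a consequence of the self-financing property and does not itself use locality; what locality guarantees is that the coefficients $X^k_u$ and $d\wt\beta^{k,l}_u$ in \eqref{zdyname} are functions of $(u,\bar Y_u,\bar Z_u)$ alone. This matters precisely because the hedger begins trading afresh at time~$t$, so $\bar Y$ is defined only on $[t,T]$ and no pre-$t$ history of $(\bar Y,\bar Z)$ exists; in a global problem the dependence of $X^k_u$ on $(\bar Y_\cdot,\bar Z_\cdot)$ over $[0,t]$ would leave \eqref{zdyname} ill-posed as a self-contained equation on $[t,T]$. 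Thus the main obstacle is not the algebra, which is identical to that of Lemma~\ref{lemm5.1}, but checking that under locality \eqref{zdyname} is a well-defined equation on $[t,T]$ with terminal datum $x$ and no reference to the hedger's (non-existent) trading on $[0,t]$.
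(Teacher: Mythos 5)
Your proof is correct and follows essentially the same route as the paper's, whose entire argument is that one "argues as in the proof of Lemma \ref{lemm5.1}" to get the analogue of \eqref{xdynam} for the discounted wealth on $[t,T]$, and hence \eqref{zdyname} with terminal condition $x$. Your additional care with the discounting normalization (taking $\bar Y_u = (B^{0,l}_u)^{-1}V_u$ in the spirit of Lemma \ref{lemm5.1}, rather than the time-$t$ discounting of \eqref{edd}, so that the terminal condition $\bar Y_T = x$ actually comes out) and your remark on why locality is needed for \eqref{zdyname} to be a self-contained equation on $[t,T]$ only make explicit what the paper leaves implicit.
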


\begin{proof}
Arguing as in the proof of Lemma \ref{lemm5.1}, we conclude that the dynamics of the discounted wealth
$\wt {V}_u(x_t(x),p^e_t,\phi^t,A^t,\mathcal{X}^t)$ for $u \in [t,T]$ are given by  \eqref{xdynam}
and thus \eqref{zdyname} is satisfied by  $\wh Y$ and $\wh Z$ with the terminal condition $\wt Y_T=x$.
\end{proof}

Although BSDEs \eqref{zdynam} and \eqref{zdyname} have the same shape, the features of
their solutions heavily depend on a specification of the processes $X^k$ and $\beta^{k,l}$.
The next result shows that the gained value and the ex-dividend price coincide when the valuation problem is local,
so that the corresponding BSDEs are classical.
In contrast, this property will typically fail to hold when a valuation problem is global, so that \eqref{zdynam} becomes
a generalized BSDE. In that case, equation \eqref{zdyname} needs to be complemented by additional conditions regarding
the processes $X^k$ and $\beta^{k,l}$.

\begin{proposition} \label{prop5.2}
Under the assumptions of Proposition \ref{prop5.1}, if a valuation problem is local, then for any contract $\cC \in \sC$
the hedger's gained value and the hedger's ex-dividend price satisfy $p^e_t(x,\cC)= p^g_t(x,\cC^t)$ for all $t \in [0,T]$.
\end{proposition}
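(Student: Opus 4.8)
The plan is to show that, for each fixed $t\in[0,T]$, both the gained value $\wh p_t(x,\cC)$ and the ex-dividend price $p^e_t(x,\cC^t)$ are read off from the time-$t$ value of the solution of one and the same classical BSDE, so that uniqueness of that solution forces the two prices to coincide. The locality hypothesis is what collapses the two a priori different pricing equations into a single equation on $[t,T]$; I would therefore keep track of exactly where it is used.

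First I would record the two price representations in terms of BSDE solutions. By Proposition \ref{prop5.1}(ii), the gained value satisfies $\wh p_t(x,\cC)=B^{0,l}_t(\wh Y_t-x)$, where $(\wh Y,\wh Z)$ is the unique solution of \eqref{zdynam} on $[0,T]$ with $\wh Y_T=x$. For the ex-dividend price I would evaluate the wealth dynamics \eqref{wealth3-1} of the replicating strategy of Definition \ref{defprice3} at the initial instant $u=t$: since every stochastic integral over $(t,t]$ vanishes and $A^t_t=0$, one obtains $V_t(x_t(x),p^e_t,\phi^t,\cC^t)=x_t(x)+p^e_t=xB^{0,l}_t+p^e_t$. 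Discounting this quantity exactly as in the gained-value case then yields the companion formula $p^e_t(x,\cC^t)=B^{0,l}_t(\bar Y_t-x)$, where $(\bar Y,\bar Z)$ is the solution of \eqref{zdyname} on $[t,T]$ with $\bar Y_T=x$ furnished by Lemma \ref{lemm5.2}. Thus the whole statement reduces to proving the single identity $\bar Y_t=\wh Y_t$.

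The heart of the argument is the observation that \eqref{zdynam} and \eqref{zdyname} have literally the same generator and the same deterministic terminal value $x$, differing only in that the first is posed on $[0,T]$ and the second on $[t,T]$. Under the locality assumption $X^k_u=h^k(u,\wh Y_u,\wh Z_u)$ and $d\wt\beta^{k,l}_u=h^{k,l}(u,\wh Y_u,\wh Z_u)\,du$, this common generator depends only on the current data $(u,Y_u,Z_u)$, so both are classical BSDEs. I would then restrict the solution $(\wh Y,\wh Z)$ to the subinterval $[t,T]$: it manifestly satisfies the same differential identity \eqref{zdyname} on $[t,T]$ with the same terminal condition $\wh Y_T=x=\bar Y_T$. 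Invoking the uniqueness of solutions assumed in Proposition \ref{prop5.1}, now applied on $[t,T]$, gives $\bar Y_u=\wh Y_u$ for all $u\in[t,T]$, hence in particular $\bar Y_t=\wh Y_t$. Feeding this into the two formulas above produces $p^e_t(x,\cC^t)=B^{0,l}_t(\bar Y_t-x)=B^{0,l}_t(\wh Y_t-x)=\wh p_t(x,\cC)$; since $t$ was arbitrary, the equality holds for all $t\in[0,T]$. I would stress that only uniqueness is needed here, not the strict comparison property.

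The main obstacle is precisely the justification that the restriction of the gained-value solution to $[t,T]$ solves the very equation defining the ex-dividend process, and this is exactly where locality is indispensable. If the problem were global, i.e. $X^k_u=h^k(u,\wh Y_{\cdot},\wh Z_{\cdot})$ with a genuinely path-dependent functional, then the generator at a time $u\in[t,T]$ appearing in \eqref{zdynam} would feel the trajectory of $(\wh Y,\wh Z)$ on $[0,t)$ generated while replicating $\cC$ on $[0,t]$, whereas the ex-dividend equation \eqref{zdyname} on $[t,T]$ carries no such history because $\cC^t$ is initiated afresh at time $t$. The two generators would then fail to agree on $[t,T]$, $\wh Y|_{[t,T]}$ would no longer solve \eqref{zdyname}, and one should indeed expect $p^e_t\neq\wh p_t$. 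I would therefore make the step ``$\wh Y|_{[t,T]}$ solves \eqref{zdyname}'' explicit and point out that it rests solely on the current-state dependence of $h^k$ and $h^{k,l}$, thereby pinpointing locality as the sole ingredient separating this positive result from the global case.
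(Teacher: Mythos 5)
Your proof is correct and takes essentially the same route as the paper's: both identify the two prices as $\wh p_t(x,\cC)=B^{0,l}_t(\wh Y_t-x)$ and $p^e_t(x,\cC^t)=B^{0,l}_t(\bar Y_t-x)$ via Proposition \ref{prop5.1}(ii) and Definition \ref{defprice3}, and then conclude $\bar Y_t=\wh Y_t$ from uniqueness of solutions to the common classical BSDE. Your explicit justification of the step the paper dismisses as ``manifest'' --- namely that locality is what allows the restriction of $(\wh Y,\wh Z)$ to $[t,T]$ to solve \eqref{zdyname}, and that this is exactly what breaks down in the global case --- is a useful elaboration rather than a divergence.
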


\begin{proof}
On the one hand, under the postulate of uniqueness of solutions to BSDE \eqref{zdynam} (and thus also to BSDE \eqref{zdyname}),
the equality $\wh Y_t=\bar Y_t$ is manifestly satisfied for all $t \in [0,T]$. On the other hand, from Definition \ref{defprice3}, we obtain the equality $x_t(x)+p^e_t(x,\cC^t)=B^l_t \bar Y_t$, which in turn yields $p^e_t(x,\cC^t)= B^l_t ( \bar Y_t - x)$.
Since $\wh Y_t=\wt Y_t$, we conclude that the gained value $p^g_t(x,\cC)=B^l_t ( \wh Y_t - x)$ and the ex-dividend price $p^e_t(x,\cC^t)$
coincide for all $t \in [0,T]$.
\end{proof}

The property of a local valuation problem established in Proposition \ref{prop5.2} is fairly general: its validity hinges on the existence and uniqueness of a solution to a common BSDE for the gained value and the ex-dividend price. This should be contrasted with the case of the global valuation problem where the equality $p^g_t(x,\cC)=p^e_t(x,\cC^t)$ is always satisfied for $t=0$, but it is not likely to hold for any $t > 0$.

\subsection{BSDE for the CCR Price} \label{sec6.3}

We now address the question raised in Section \ref{sec2.8.2}: can we disentangle the counterparty risk-free valuation of a
credit risky contract from the CRR valuation? Although this is true in the linear setup where the price
additivity is known to hold, the answer to this question is unlikely to be positive within a nonlinear framework. On the one hand,
according to Proposition \ref{pro1}, the counterparty risky contract $(A^\sharp ,\cX)$ admits the following decomposition
\begin{equation} \label{ccrxx}
(A^\sharp ,\cX )=(A,\cX )+(A^{\rm CCR},0),
\end{equation}
where the first component is not subject to the counterparty credit risk (although it may include the margin account) and thus
it is referred to as the {\it counterparty risk-free} contract, whereas the second component is concerned exclusively with the CCR  (see Definition \ref{expo} for the specification of the CCR cash flow $A^{\rm CCR}$). On the other hand, however, in a nonlinear framework,
the price of the full contract  $(A^\sharp ,\cX )$ is unlikely to be equal to the sum of prices of its components appearing in the additive decomposition of the full contract.

To examine this problem more closely, let us assume that the underlying market model is sufficiently rich to allow
for replication of the full contract $(A^\sharp ,\cX )$, as well as for replication of its two components
$(A,\cX )$ and $(A^{\rm CCR},0)$. Of course, one can alternatively focus on the decomposition
$(A^\sharp ,\cX )=(A,0)+(A^{\rm CCR},\cX )$ in which the trading adjustments (in particular, the margin
account) are assumed to affect the CCR part, rather than the counterparty risk-free contract $(A,0)$. The choice of a decomposition
should be motivated by practical considerations; one may argue that collateralization is nowadays a standard covenant in most
contracts, not necessarily directly related to the actual level of exposure to the counterparty credit risk in a given contract.

If we denote by $\tau^h$ and $\tau^c$ the default times of the hedger and the counterparty, respectively,
then $\tau=\tau^h \wedge \tau^c$ is the moment of the first default and thus the effective maturity of
$(A^\sharp ,\cX )$ and $(A^{\rm CCR},0)$ is the random time $\whtau=\tau \wedge T$. For the counterparty risk-free contract $(A,\cX )$,
it is convenient to formally assume that its maturity date equals $T$, since this component of the full contract is not exposed to
the default risk.

By a minor extension of Lemma \ref{lemm5.1}, we obtain the following BSDE for the full contract $(A^\sharp ,\cX )$
\begin{equation}\label{dynccr1}
\begin{aligned}
d\wh Y_t=&  \sum_{i=1}^{d} \wh Z_t^i\,d\wh S^{i,cld}_t  - (B^{0,b}_t)^{-1} \bigg( \wh Y_t B^{0,l}_t+\sum_{k=1}^n \alpha^k_t X^k_t \bigg)^- dB_t^{0,b,l} \\
&+(B^{0,l}_t)^{-1}\, dA^{\sharp}_t - \sum_{k=1}^{n} \wh X_t^k\,d\wt \beta_t^{k,l}+\sum_{k=1}^{n} (1-\alpha_t^k) X^k_t\,d (B^{0,l}_t)^{-1}
\end{aligned}
\end{equation}
with the terminal condition $\wh Y_{\whtau}=x$. Let $x=x_1+x_2$ be an arbitrary split of the hedger's endowment. Then we obtain
the following BSDE corresponding to the counterparty risk-free contract $(A,\cX )$
\begin{equation}\label{dynccr2}
\begin{aligned}
d\wh Y^1_t=&  \sum_{i=1}^{d} \wh Z_t^{1,i}\,d\wh S^{i,cld}_t  - (B^{0,b}_t)^{-1} \bigg( \wh Y^1_t B^{0,l}_t+\sum_{k=1}^n \alpha^k_t X^k_t \bigg)^- dB_t^{0,b,l}  \\ &+(B^{0,l}_t)^{-1}\, dA_t - \sum_{k=1}^{n} \wh X_t^k\,d\wt \beta_t^{k,l}+\sum_{k=1}^{n} (1-\alpha_t^k) X^k_t\,d (B^{0,l}_t)^{-1}
\end{aligned}
\end{equation}
with $\wh Y^1_T=x_1$. The BSDE associated with the CRR component $(A^{\rm CCR},0)$ reads
\begin{equation} \label{dynccr3}
d\wh Y^2_t=\sum_{i=1}^{d}\wh Z_t^{2,i}\,d\wh S^{i,cld}_t-(B^{0,b}_t)^{-1} \big( \wh Y^2_t B^{0,l}_t\big)^-dB_t^{0,b,l}+(B^{0,l}_t)^{-1}\, dA^{\rm CRR}_t
\end{equation}
with $\wh Y^2_{\whtau}=x_2$. If the initial endowment $x=0$, then we may take $x_1$ and $x_2$ to be null as well.

The question formulated at the beginning of this section can be restated as follows: under which conditions the equality $\wh Y_0= \wh Y^1_0+\wh Y^2_0$ holds for solutions to BSDEs  \eqref{dynccr1}, \eqref{dynccr2} and \eqref{dynccr3}, so that the three replication costs satisfy the following equality
\[
p^r_0(x, A^\sharp ,\cX )=p^r_0(x_1,A,\cX)+p^r_0(x_2,A^{\rm CCR},0),
\]
which formally corresponds to decomposition \eqref{ccrxx} of the full contract and the split $x=x_1+x_2$ of the hedger's initial endowment? Since this equality is unlikely to be satisfied (even when $x=x_1=x_2=0$, as it was implicitly assumed in most existing papers
on the nonlinear approach to credit risk modeling), one could ask, more generally, whether the quantities $\wh Y_0$ and $\wh Y^1_0+\wh Y^2_0$ are close to each other, so that some approximate equality is satisfied by the replication costs.
Of course, an analogous question can also be formulated for the corresponding replicating strategies.

One needs first to address the issues of regularity and completeness of market models with default times. To this end,
one may employ the existence and uniqueness results, as well as the strict comparison theorems,
obtained for BSDEs with jumps generated by the occurrence of random times. BSDEs of this form are relatively uncommon in the
existing literature on the theory of BSDEs, but they were studied in papers by Peng and Xu \cite{PX2009} and Quenez and Sulem \cite{QS2013}. Of course, to be in a position to use results established in those papers, one would need to explicitly specify the price dynamics for non-defaultable risky assets $S^1, \dots , S^{d-2}$ (usually, they are supposed to be driven by a multidimensional Brownian motion), as well as the manner in which default times (thus also the prices of defaultable assets $S^{d-1}$ and $S^d$) are defined. The latter issue is addressed in Peng and Xu \cite{PX2009} or Quenez and Sulem \cite{QS2013} through the so-called intensity-based approach, which was previously extensively studied in the credit risk literature. Furthermore, it would be convenient to assume that the cash and funding accounts, as well as remuneration processes, have absolutely continuous sample paths, so that BSDEs could be represented in the following generic form
\[
dY_t=- g(t,Z_t,Y_t)\,dt+\sum_{i=1}^{d-2}Z_t^i\,dW^{i}_t+\sum_{i=d-1}^{d}Z_t^i\,dM^{i}_t+d\bar A_t,
\]
where $M^1$ and $M^2$ are purely discontinuous $\bG$-martingales associated with the price processes $S^{d-1}$ and $S^d$, respectively, and $\bar A$ is a fixed process. Note that the generator $g$ can be obtained from \eqref{dynccr1}, \eqref{dynccr2} and \eqref{dynccr3} by straightforward computations.

From the financial perspective, to ensure the completeness of the market model at hand, one would need to postulate that some defaultable securities (typically, defaultable bonds issued by the two parties or credit default swaps) are among primary traded assets.
Finally, it is also necessary to explicitly specify the closeout valuation process $Q$ (see Remark \ref{remccr}) and the collateral process $C$. When dealing with a local valuation problem, the most natural theoretical choice (albeit not necessarily easy to implement in practice) would be to set (see Proposition \ref{prop5.2})
$$
Q_t := p^e_t(x_1,\cC)=p^g_t(x_1,\cC^t), \quad C_t := p^e_t(x, A^\sharp ,\cX)=p^g_t(x,(A^\sharp)^t,\cX^t )
$$
although the latter convention of the {\it endogenous collateral} is slightly cumbersome to handle, even when dealing with BSDEs driven by a multidimensional continuous martingale (see Nie and Rutkowski \cite{NR2016b}). Note also that it would require to replace $C_{\tau }$ by $C_{\tau -}$ when specifying the closeout payoff $\mathfrak{K}$ (hence also the process $A^\sharp $) in Section \ref{sec2.8.1}. For technical problems for BSDEs with jumps arising in this context and related to the classical {\it immersion hypothesis} and the way in which they can be resolved, the interested reader is referred to recent papers by Cr\'epey and Song \cite{CS2015b,CS2015a}.

 Within the framework of a linear model of credit risk, the issue of market completeness and various methods for replication were studied in several works (see, in particular, Bielecki et al. \cite{BJR2004,BJR2006,BJR2008}). In contrast, only a few papers devoted to nonlinear models of credit risk are available. More recently, Cr\'epey \cite{C2015a,C2015b}, Dumitrescu et al. \cite{DQS2017} and Bichuch et al. \cite{BCS2017} used BSDEs with jumps to solve the valuation and hedging problems for derivative contracts exposed to the counterparty credit risk. In Bichuch et al. \cite{BCS2017} and Dumitrescu et al. \cite{DQS2017}, the authors focus on valuation of the full contract, whereas Cr\'epey \cite{C2015a,C2015b} examines the problem of the approximate additivity for the credit valuation adjustments.

\section{Nonlinear Valuation Versus Market Practice} \label{sec7}

Although the goal of this paper is to formulate questions and give preliminary answers regarding the most fundamental issues pertinent to the theory of {\bf nonlinear} arbitrage-free pricing, a few remarks concerning the current market practice and its relationship to theoretical results on nonlinear pricing could be appreciated by the reader. We also briefly describe some related recent papers where the issue of the so-called valuation adjustments was examined in both linear and nonlinear setups.

According to the prevailing practice, the full price for a counterparty risky contract is obtained by combining, at least implicitly, the so-called {\it clean price} of the basic contract $(A,0)$ with various {\it valuation adjustments}. The clean price and the corresponding hedge are computed by the trading desk using the classical linear approach under a manifestly unrealistic, but obviously very convenient, assumption that a proxy for the unique risk-free rate is available for funding of all trading activities and the counterparty credit risk is ignored. It is thus clear that, from the theoretical point of view, the clean price can be computed as a solution to a linear BSDE, as in the classical arbitrage-free pricing theory.  In contrast, various valuation adjustments are determined by the dedicated CVA desk in such a way that they account for all other features of a contract and trading conditions, such as: differential funding costs,  the presence of the margin account, the counterparty credit risk, regulatory requirements, etc.. This means that, according to practical approach adopted by most banks, the full price of a new deal is implicitly represented as follows
\begin{equation}
\begin{aligned} \label{mmjj}
\textrm{full price}\,(A^\sharp,\cX )&:=\, \textrm{linear price}\,(A,0)+\textrm{possibly nonlinear price}\,(A^{\rm CCR},\cX ) \\
 &=\,\textrm{\it clean price }+\, \textrm{{\it total valuation adjustment} (XVA),}
\end{aligned}
\end{equation}
where the clean price is given by a solution to a linear BSDE and the total valuation adjustment (denoted as XVA) is determined by solving either a linear or a nonlinear BSDE. Of course, if all three terms appearing in \eqref{mmjj} (that is: the full price, the clean price, and the total valuation adjustment) are given by solutions to particular linear BSDEs, then it is possible to argue that decomposition \eqref{mmjj} can be formally justified. However, if the valuation adjustment (and thus also the full price) is given by a solution to a nonlinear BSDE, which is the case of our primary interest, then the two terms appearing in the right-hand side in \eqref{mmjj} cannot be computed separately and subsequently aggregated to obtain the full price. This observation is valid, in general, despite the fact that the clean price is always computed through a solution to linear BSDE or, equivalently, a suitable version of the risk-neutral valuation formula, so that it enjoys the additivity property across several (uncollateralized and non-defaultable) deals. Therefore, in our opinion, the introduction of the concept of the clean price, although convenient in practice since it refers to the pre-crisis experience and facilitates calibration of commonly used models for the underlying securities, may further complicate the theoretical problem of searching for the full price of a contract and the corresponding hedging strategy when working in a nonlinear setup.

Let us illustrate the issue of non-additivity by focusing on a specific nonlinear setup. We stress that by the {\it non-additivity} of pricing we mean here the property that the clean price and the total valuation adjustment cannot be computed separately by splitting the cash flows of a contract and perhaps even using a different model to deal with each of the two (or more) components. Brigo et al. \cite{BB2018} have recently shown that the risk-neutral valuation approach with adjusted cash flows based on a proxy for the risk-free interest rate, which was introduced and studied by Pallavicini et al. \cite{PPB2012a,PPB2012b},  can be formally supported using the replication-based approach in which a proxy for the risk-free interest rate can be chosen in a completely arbitrary way. Indeed, it is possible to use any $\bG$-adapted and suitably integrable process $\alpha $ to play the role of a proxy for the risk-free interest rate, since the financial interpretation (if any) of this process is irrelevant for the derivation of decomposition \eqref{eqv3mc}. It was proven in Proposition 3.2 in \cite{BB2018}, under the assumption of null initial endowment (so that $x=x_t(x)=0$ for all $t\in [0,T]$), that the ex-dividend hedger's price of an attainable collateralized contract $\cC = (A^\sharp,C)$ equals, on the event $\{ t < \tau \}$ for every $t \in [0,T]$,
 \begin{align} \label{eqv3mc}
&p^e_t (0,\cC^t)= p^{e,\alpha }_t(A)+B^{\alpha}_t\,\mathbb{E}_{{\mathbb Q}^{\alpha}} \Big( \I_{\{ \tau\leq T\}} \big(\I_{\{\tau^c<\tau^h \}} (1-R_c) \Upsilon^+ -\I_{\{\tau^h<\tau^c\}}(1-R_h)\Upsilon^- \big)\,\Big|\,{\cal G}_t \Big) \nonumber \\ &
+ B^{\alpha}_t\,\mathbb{E}_{{\mathbb Q}^{\alpha}} \bigg(\int_t^{\tau \wedge T}(\alpha _u-\bar{f}_u)F_u(B^{\alpha}_u)^{-1}\,du+\sum_{i=1}^d \int_t^{\tau \wedge T}(\alpha_u-\bar{h}^i_u)F^i_u (B^{\alpha}_u)^{-1}\,du \,\Big|\,{\cal G}_t \bigg) \\ &
+ B^{\alpha}_t\,\mathbb{E}_{{\mathbb Q}^{\alpha}}\bigg(\int_t^{\tau \wedge T}(\bar{c}_u-\alpha_u)C_u(B^{\alpha}_u)^{-1}\,du\,\Big|\,{\cal G}_t\bigg), \nonumber
\end{align}
where we denote $F_t =\psi^{0,l}_t B^{0,l}_t+\psi^{0,b}_t B^{0,b}_t,\, F^i_t =\xi^i_t S^i_t$, and
$$
\bar{f}_t := f^l_t \I_{\{F_t \geq 0\}}+f^b_t  \I_{\{F_t<0\}}, \
\bar{h}^i_t := h^{i,l}_t \I_{\{F^i_t\geq 0\}}+h^{i,b}_t  \I_{\{F^i_t<0\}}, \
\bar{c}_t := c^l_t \I_{\{C_t<0\}}+c^b_t  \I_{\{C_t\geq 0\}},
$$
where $c^l$ (resp. $c^b$) is the remuneration rate for the cash collateral pledged (resp. received) by the hedger.
Furthermore, $p^{e,\alpha}_t(A)$ is the ex-dividend clean price and the probability measure ${\mathbb Q}^{\alpha}$ is such that the processes $S^i(B^{\alpha})^{-1},\, i=1,2, \dots , d$, are  ${\mathbb Q}^{\alpha}$-martingales. For more information on the concept of a ``martingale measure'' in a nonlinear model, see Section 3.2.1 in \cite{BB2018}. Equality \eqref{eqv3mc} leads to the following formal additive decomposition of the ex-dividend price for $(A^\sharp,\cX )$ into its clean price and several complementary valuation adjustments (for their interpretation, see Section 3.2.2 in \cite{BB2018}), which can be aggregated into a single {\it total valuation adjustment,} denoted as $\XVA_t$, so that we have
\begin{equation}
\begin{aligned} \label{xvaformulan}
p^e_t (0,\cC^t)& =\pi^{e,\alpha }_t(A)+\CVA_t-\DVA_t+\FVA^{\bar{f}}_t+\sum_{i=1}^d \FVA^{\bar{h}^i}_t+\LVA_t \\
&= \pi^{e,\alpha }_t(A) + \XVA_t .
\end{aligned}
\end{equation}
We stress that \eqref{xvaformulan} is true for any choice as a proxy $\alpha $ for the risk-free interest rate,
which further supports the view that the clean price is an abstract concept dissociated from the actual trading and
the total valuation adjustment is a necessary mechanism needed to bring it back to reality.
More importantly, terms appearing in the right-hand side in \eqref{xvaformulan} are intertwined so that
various valuation adjustments cannot be computed without the prior knowledge of the hedging strategy for the full contract.
We thus conclude that the additivity and separation of adjustments, which is visibly suggested by the shape of equality \eqref{xvaformulan}, is in fact illusory, unless the underlying trading model has fully linear features so that it is possible to use the theory of linear BSDEs in order to justify separation. Obviously, this does not mean that separation cannot hold in some models with certain nonlinear features but this should be an exceptional situation, rather than the rule.

As a concrete example of an explicit application of nonlinear pricing theory, we may quote the recent paper by Bichuch et al. \cite{BCS2017} who provide a thorough examination of valuation of path-independent European claims in an extension of the classical
Black-Scholes model to differential funding rates and counterparty credit risk (for another example of pricing under asymmetric borrowing and lending rates, see Brigo and Pallavicini \cite{BP2014}).  In \cite{BCS2017}, the authors first verify the no-arbitrage property  of their trading model with respect to the null contract in the case of a nonnegative initial endowment $x$. Subsequently, they apply the BSDE approach to unilateral valuation of a collateralized European claim with bilateral default risk. The closeout payoff is specified in reference to the third party valuation, which is based on a single risk-free rate (not available to the two parties) and thus it is given by the standard Black-Scholes model. It is important to stress that the total valuation adjustment for the hedger (or the counterparty) is not computed in \cite{BP2014} as a separate quantity, but it is instead {\bf defined} as the difference between the full unilateral price and the Black-Scholes price (see Definition 4.8 in \cite{BCS2017}), which is hence supposed to play the role of the clean price. Using our notation, the definition of the total valuation adjustment adopted in \cite{BCS2017} reads $\XVA_t := p^e_t (x,\cC^t) - \pi^{e,\alpha }_t(A)$.  It is thus clear that  Bichuch et al.  \cite{BCS2017} do not advocate the practical approach, where the clean price and valuations adjustment are supposed to be first independently computed by trading and CVA desks and subsequently aggregated into the full price. It is also observed  in \cite{BCS2017} that the total valuation adjustments are equal, and thus unilateral prices collapse to a single full price, when the pricing BSDE is linear; otherwise, unilateral full prices computed by the two counterparties, who are supposed to use identical trading model, are likely to differ. Obviously, it is not our intention to suggest that the practice where separate desks are independently dealing with components of a contract and then using the aggregate number as a plausible candidate for the ``full price'' is wrong and thus should be discontinued. We have only argued that this pragmatic approach, which can be justified within the framework of a linear market (see, for instance, Burgard and Kjaer \cite{BK2011,BK2013}, Fujii and Takahashi \cite{FT2013} or Kenyon and Green \cite{KG2014a,KG2014b}) is unlikely to result in a mathematically sound arbitrage-free pricing of derivatives in a nonlinear setup, where the introduction of the concept of the clean price is no longer advantageous.

Let us finally mention the important issue of {\it netting} of outstanding deals between counterparties, which means that, in principle, every new deal should be valued not in isolation, but rather as a new component added to the portfolio of existing contracts. Needless to say, this issue is highly challenging, both in theory and practice, and thus it is left for future work. Last but not least, it should be acknowledged that the valuation of derivatives based on arbitrage-free replication (or superhedging) should not be seen as the most realistic pricing approach, but rather a mathematical idealization of a much more complex situation, and thus other pricing paradigms should also be examined. The interested reader is referred to  Kenyon and Green \cite{KG2013} for a discussion of a regulatory-compliant derivatives pricing and to Albanese and Cr\'epey \cite{AC2017} for a novel balance-sheet approach to XVA with the special emphasis on KVA (capital value adjustment) computations.

\newpage

\section*{Acknowledgments}
The research of I.~Cialenco and M.~Rutkowski was supported by the DVC Research Bridging Support Grant {\it BSDEs Approach to Models with Funding Costs.} Part of the research was completed while I.~Cialenco and M.~Rutkowski were visiting the Institute for Pure and Applied Mathematics (IPAM) at UCLA, which is funded by the National Science Foundation. We would also like to thank the anonymous referees and St\'ephane Cr\'epey for their insightful and helpful comments and suggestions, which helped us greatly to improve the final manuscript.


\begin{thebibliography}{10}

\bibitem{ACC2017}
C.~Albanese, S.~Caenazzo, and S.~Cr\'epey.
\newblock Credit, funding, margin, and capital valuation adjustments for
  bilateral portfolios.
\newblock {\em Probability, Uncertainty and Quantitative Risk}, 2(7):1--26,
  2017.

\bibitem{AC2017}
C.~Albanese and S.~Cr\'epey.
\newblock {XVA} analysis from the balance sheet.
\newblock {\em {\rm Working paper}}, 2017.

\bibitem{B1995}
Y.~Z. Bergman.
\newblock Option pricing with differential interest rates.
\newblock {\em Review of Financial Studies}, 8:475--500, 1995.

\bibitem{BCS2017}
M.~Bichuch, A.~Capponi, and S.~Sturm.
\newblock Arbitrage-free {XVA}.
\newblock {\em Mathematical Finance}, 28:582--620, 2018.

\bibitem{BJR2004}
T.~R. Bielecki, M.~Jeanblanc, and M.~Rutkowski.
\newblock Hedging of defaultable claims.
\newblock In {\em Paris-{P}rinceton {L}ectures on {M}athematical {F}inance
  2003, {L}ecture {N}otes in {M}athematics, {V}ol. 1847}, pages 1--132. R.
  Carmona et al. (Eds.), Springer, Berlin Heidelberg New York, 2004.

\bibitem{BJR2006}
T.~R. Bielecki, M.~Jeanblanc, and M.~Rutkowski.
\newblock Replication of contingent claims in a reduced-form credit risk model
  with discontinuous asset prices.
\newblock {\em Stochastic Models}, 22:661--687, 2006.

\bibitem{BJR2008}
T.~R. Bielecki, M.~Jeanblanc, and M.~Rutkowski.
\newblock {\em Credit {R}isk {M}odeling}.
\newblock Osaka University CSFI Lecture Notes Series. Osaka University Press,
  Osaka, 2008.

\bibitem{BR2015}
T.~R. Bielecki and M.~Rutkowski.
\newblock Valuation and hedging of contracts with funding costs and
  collateralization.
\newblock {\em SIAM Journal of Financial Mathematics}, 6:594--655, 2015.

\bibitem{BB2018}
D.~Brigo, C.~Buescu, M.~Francischello, A.~Pallavicini, and M.~Rutkowski.
\newblock Risk-neutral valuation under differential funding costs, defaults and
  collateralization.
\newblock {\em {\rm Working paper (arXiv:1802.10228v1)}}, 2018.

\bibitem{BBR2017}
D.~Brigo, C.~Buescu, and M.~Rutkowski.
\newblock Funding, repo and credit inclusive valuation as modified option
  pricing.
\newblock {\em Operations Research Letters}, 45:665--670, 2017.

\bibitem{BP2014}
D.~Brigo and A.~Pallavicini.
\newblock Non-linear consistent valuation of {CCP} cleared or {CSA} bilateral
  trades with initial margins under credit, funding and wrong-way risks.
\newblock {\em Journal of Financial Engineering}, 1:1--60, 2014.

\bibitem{BK2011}
C.~Burgard and M.~Kjaer.
\newblock Partial differential equations representations of derivatives with
  counterparty risk and funding costs.
\newblock {\em Journal of Credit Risk}, 7:75--93, 2011.

\bibitem{BK2013}
C.~Burgard and M.~Kjaer.
\newblock Funding costs, funding strategies.
\newblock {\em Risk}, 12(26):82--87, 2013.

\bibitem{CPT2001}
L.~Carassus, H.~Pham, and N.~Touzi.
\newblock No arbitrage in discrete time under portfolio constraints.
\newblock {\em Mathematical Finance}, 11:315--329, 2001.

\bibitem{CFS2008}
R.~Carbone, B.~Ferrario, and M.~Santacroce.
\newblock Backward stochastic differential equations driven by c\`adl\`ag
  martingales.
\newblock {\em Theory of Probability and its Applications}, 52:304--314, 2008.

\bibitem{CN2015}
P.~Cheridito and K.~Nam.
\newblock {BSE}s, {BSDE}s and fixed point problems.
\newblock {\em The Annals of Probability}, 45(6A):3795--3828, 2017.

\bibitem{C2015a}
S.~Cr{\'e}pey.
\newblock Bilateral counterparty risk under funding constraints - {P}art {I}:
  {P}ricing.
\newblock {\em Mathematical Finance}, 25:1--22, 2015.

\bibitem{C2015b}
S.~Cr{\'e}pey.
\newblock Bilateral counterparty risk under funding constraints - {P}art {II}:
  {CVA}.
\newblock {\em Mathematical Finance}, 25:23--50, 2015.

\bibitem{CBB2014}
S.~Cr{\'e}pey, T.~R. Bielecki, and D.~Brigo.
\newblock {\em Counterparty {R}isk and {F}unding: {A} {T}ale of {T}wo
  {P}uzzles}.
\newblock Chapman \& Hall/CRC Financial Mathematics Series. CRC Press, Boca
  Raton, FL, 2014.

\bibitem{BES2017}
S.~Cr{\'e}pey, R.~Elie, and W.~Sabbagh.
\newblock When capital is the funding source: The {XVA} anticipated {BSDE}s.
\newblock {\em {\rm Working paper}}, 2017.

\bibitem{CS2015b}
S.~Cr{\'e}pey and S.~Song.
\newblock Counterparty risk and funding: immersion and beyond.
\newblock {\em Finance and Stochastics}, 20:901--930, 2016.

\bibitem{CS2015a}
S.~Cr{\'e}pey and S.~Song.
\newblock Invariance times.
\newblock {\em The Annals of Probability}, 45(6B):4632--4674, 2017.

\bibitem{DS2006}
F.~Delbaen and W.~Schachermayer.
\newblock {\em The {M}athematics of {A}rbitrage}.
\newblock Springer, Berlin Heidelberg New York, 2006.

\bibitem{DQS2017}
R.~Dumitrescu, M.~C. Quenez, and A.~Sulem.
\newblock Game options in an imperfect market with default.
\newblock {\em SIAM Journal on Financial Mathematics}, 8:532--559, 2017.

\bibitem{EH1997}
N.~El~Karoui and S.~J. Huang.
\newblock A general result of existence and uniqueness of backward stochastic
  differential equations.
\newblock In {\em Backward {S}tochastic {D}ifferential {E}quations. {P}itman
  {R}esearch {N}otes in {M}athematics {S}eries, {V}ol. 364}, pages 27--36. H.
  Brezis et al. (Eds.), Addison Wesley Longman, Harlow, Essex, 1997.

\bibitem{EPQ1997}
N.~El~Karoui, S.~Peng, and M.~C. Quenez.
\newblock Backward stochastic differential equation in finance.
\newblock {\em Mathematical Finance}, 7:1--71, 1997.

\bibitem{EQ1997}
N.~El~Karoui and M.~C. Quenez.
\newblock Non-linear pricing theory and backward stochastic differential
  equations.
\newblock In {\em Financial {M}athematics, {L}ecture {N}otes in {M}athematics,
  {V}ol. 1656}, pages 191--246. B. Biais et al. (Eds.), Springer, Berlin
  Heidelberg New York, 1997.

\bibitem{FH2016}
A.~Fahim and Y.~Huang.
\newblock Model-independent superhedging under portfolio constraints.
\newblock {\em Finance and Stochastics}, 20:51--81, 2016.

\bibitem{F2015}
C.~Fontana.
\newblock Weak and strong no-arbitrage conditions for continuous financial
  markets.
\newblock {\em International Journal of Theoretical and Applied Finance},
  18:1550005, 2015.

\bibitem{FT2013}
M.~Fujii and A.~Takahashi.
\newblock Derivative pricing under asymmetric and imperfect collateralization
  and {CVA}.
\newblock {\em Quantitative Finance}, 13(5):749--768, 2013.

\bibitem{KK2007}
I.~Karatzas and K.~Kardaras.
\newblock The numeraire portfolio in semimartingale financial models.
\newblock {\em Finance and Stochastics}, 11:447--493, 2007.

\bibitem{KK1996}
I.~Karatzas and S.~Kou.
\newblock On the pricing of contingent claims under constraints.
\newblock {\em The Annals of Applied Probability}, 6:321--369, 1996.

\bibitem{KK1998}
I.~Karatzas and S.~Kou.
\newblock Hedging {A}merican contingent claims with constrained portfolios.
\newblock {\em Finance and Stochastics}, 2:215--258, 1998.

\bibitem{K2012}
K.~Kardaras.
\newblock Market viability via absence of arbitrage of the first kind.
\newblock {\em Finance and Stochastics}, 16:651--667, 2012.

\bibitem{KG2013}
C.~Kenyon and A.~Green.
\newblock Regulatory-compliant derivatives pricing is not risk-neutral.
\newblock {\em {\rm Working paper}}, 2013.

\bibitem{KG2014b}
C.~Kenyon and A.~Green.
\newblock {MVA}: Initial margin valuation adjustment by replication and
  regression.
\newblock {\em {\rm Working paper}}, 2014.

\bibitem{KG2014a}
C.~Kenyon and A.~Green.
\newblock Warehousing credit ({CVA}) risk, capital ({KVA}) and tax ({TVA})
  consequences.
\newblock {\em {\rm Working paper}}, 2014.

\bibitem{K1995}
R.~Korn.
\newblock Contingent claim valuation in a market with different interest rates.
\newblock {\em Mathematical Methods of Operations Research}, 42(3):255--274,
  1995.

\bibitem{M2013}
F.~Mercurio.
\newblock Bergman, {P}iterbarg and beyond: {P}ricing derivatives under
  collateralization and differential rates.
\newblock In {\em Actuarial {S}ciences and {Q}uantitative {F}inance, {S}pringer
  {P}roceedings in {M}athematics and {S}tatistics, {V}ol. 135}, pages 65--95.
  J. A. Londono et al. (Eds.), Springer, Berlin Heidelberg New York, 2013.

\bibitem{NR2015}
T.~Nie and M.~Rutkowski.
\newblock Fair bilateral prices in {B}ergman's model with exogenous
  collateralization.
\newblock {\em International Journal of Theoretical and Applied Finance},
  18:1550048, 2015.

\bibitem{NR2016a}
T.~Nie and M.~Rutkowski.
\newblock A {BSDE} approach to fair bilateral pricing under endogenous
  collateralization.
\newblock {\em Finance and Stochastics}, 20:855--900, 2016.

\bibitem{NR2016b}
T.~Nie and M.~Rutkowski.
\newblock {BSDE}s driven by a multi-dimensional martingale and their
  applications to market models with funding costs.
\newblock {\em Theory of Probability and its Applications}, 60:686--719, 2016.

\bibitem{NR2017}
T.~Nie and M.~Rutkowski.
\newblock Fair bilateral pricing under funding costs and exogenous
  collateralization.
\newblock {\em Mathematical Finance}, 28:621--655, 2018.

\bibitem{PPB2012a}
A.~Pallavicini, D.~Perini, and D.~Brigo.
\newblock Funding, collateral and hedging: uncovering the mechanism and the
  subtleties of funding valuation adjustments.
\newblock {\em {\rm Working paper}}, 2012.

\bibitem{PPB2012b}
A.~Pallavicini, D.~Perini, and D.~Brigo.
\newblock Funding valuation adjustment: a consistent framework including {CVA},
  {DVA}, collateral, netting rules and re-hypothecation.
\newblock {\em {\rm Working paper}}, 2012.

\bibitem{PX2009}
S.~Peng and X.~Xu.
\newblock {BSDE}s with random default time and their applications to default
  risk.
\newblock {\em {\rm Working paper}}, 2009.

\bibitem{P2010}
V.~Piterbarg.
\newblock Funding beyond discounting: collateral agreements and derivatives
  pricing.
\newblock {\em Risk}, 23(2):97--102, 2010.

\bibitem{P2014}
S.~Pulido.
\newblock The fundamental theorem of asset pricing, the hedging problem and
  maximal claims in financial markets with short sales prohibitions.
\newblock {\em The Annals of Applied Probability}, 24:54--75, 2014.

\bibitem{QS2013}
M.~C. Quenez and A.~Sulem.
\newblock {BSDE}s with jumps, optimization and applications to dynamic risk
  measures.
\newblock {\em Stochastic Processes and their Applications}, 123:3328--3357,
  2013.

\bibitem{TS2014}
K.~Takaoka and M.~Schweizer.
\newblock A note on the condition of no unbounded profit with bounded risk.
\newblock {\em Finance and Stochastics}, 18:393--405, 2014.

\bibitem{ZZ2017}
S.~Zheng and G.~Zong.
\newblock A note on {BSDE}s and {SDE}s with time advanced and delayed
  coefficients.
\newblock {\em {\rm Working paper}}, 2017.

\end{thebibliography}

\end{document}